\theoremstyle{lemma}
\newtheorem{lemma}{Lemma}
\theoremstyle{definition}
\newtheorem{definition}{Definition}[section]
\theoremstyle{proposition}
\newtheorem{proposition}{Proposition}
\theoremstyle{corollary}
\newtheorem{corollary}{Corollary}
\theoremstyle{example}
\theoremstyle{remark}
\newtheorem{remark}{Remark}
\theoremstyle{theorem}
\DeclareMathOperator{\Int}{Int}
\theoremstyle{plain}
\begin{document}

\newcommand{\MiddleEar}{\mathrm{MiddleEar}}
\newcommand{\HumanEar}{\mathrm{HumanEar}}
\newcommand{\ZeroSymbol}{\mathrm{ZeroSymbol}}

\newcommand{\IntT}[1]{\operatorname{Int}_{\mathcal T}\!\left(#1\right)}
\newcommand{\downset}[1]{\mathord{\downarrow}\!#1}

\begin{titlepage}
    \centering
    \vspace*{2.5cm}

    {\huge\bfseries Heraclitean Dialectical Concept Space}
    \vspace{0.5cm}
    
    {\large\bfseries  An Attempt to Model Conceptual Evolution Using Topological Structures  \par}

    \vspace{5em}
    {\Large\scshape Volkan Yildiz\par}
    \vspace{0.3cm}
    {\large\texttt{volkan@hotmail.co.uk}\par}
    \vfill
    {\normalsize December, 2025  \par}

\begin{abstract}
We introduce Heraclitean Dialectical Concept Spaces (HDCS), a topological
framework for modelling how concepts evolve. Concepts are represented as open
regions generated by neighbourhoods in a feasible family, and their
relationships are organised through overlaps and channel ideals. New concepts
emerge from remainders where existing structures fail to fit together, and
inherit their topology from these parent regions. HDCS extends across
developmental stages using carry maps and a colimit topology, giving a global
picture of conceptual change. Short case studies from economic exchange,
biology, and the history of the zero symbol illustrate the scope of the
framework.

\vfill

\noindent\textbf{Keywords:} Conceptual topology; neighbourhood systems; 
structural emergence; channel ideals; CCER; conceptual dynamics; 
cross-space interaction; evolutionary colimit.

\medskip
\noindent\textbf{AMS 2020 Mathematics Subject Classification:}\\
Primary 54A05, 54B30;\\
Secondary 03B45, 91B60, 92D15.

\end{abstract}

\end{titlepage}

\newpage

\tableofcontents 

\newpage

\newpage

\begin{flushleft}
{\em For Ares.}
\end{flushleft}

\newcommand
\HRule{\noindent\rule{\linewidth}{0.5pt}}

\HRule
{\small
\begin{flushright}
{\em A man cannot step into the same river twice... }
\end{flushright}
\begin{flushright}
{\em Heraclitus}
\end{flushright}
}
\HRule
\section*{Motivational Preface}
In presenting this article, my purpose is simply to explore questions that originate in lands of great significance to me: lands that healed my pain, lands with a rich intellectual heritage. It is in this tradition that I begin with the words of Heraclitus, a philosopher of the Aegean: ``The only constant is change.'' This principle, and the philosophy that surrounds it, remain essential to human understanding. The intellectual environment of the Aegean region fostered foundational developments in logic, mathematics, and philosophy: from the democratic ethos and the importance of shared discourse to the reflections of Heraclitus, the inquiries of Thales, and the explorations of Pythagoras. These advances, like the evocative precision of a Khayyam's quatrain, demonstrate the enduring search for clarity and truth.

In this work, “dialectic” is used here in a deliberately broader and more structural sense: as the study of change, transformation, and the emergence of new forms via interactions, instabilities, and neighbourhood overlaps. The terminology is not meant to invoke a specific ideological lineage, but to signal a concern with the logic and topology of conceptual evolution. The aim is neither polemic nor apologetic, but a rigorous attempt to clarify what it means for systems and concepts to evolve; not through fixed oppositions, but through dynamic, recursive interaction. In this sense, “dialectic” is understood not as a metaphysical principle, but as a topological theory of change.

Ever since my very early youth, when I first began asking myself what dialectic is and where it belongs within logic, I have been troubled by a lingering sense of incompleteness and dissatisfaction. Perhaps for that reason, I decided to focus on understanding how dialectic works rather than trying to define it. The ``negation of negation'' rule did not satisfy me logically, and I did not want to see dialectic as merely a process made up of three or four rules. Instead, I approached this project from a perspective that treats dialectic as a way of explaining change itself. The project is not finished; it can also be seen simply as an attempt.

This work may contain typos and errors. If you have read it and are not satisfied with any part of it, please do get in touch; it may well be something I have overlooked.

\vspace{1 em}
\noindent
\textit{Volkan Yildiz,\\
London,\\ 
December-2025.}

\newpage

\section{Introduction}

Heraclitean Dialectical Concept Space, HDCS is introduced as ``a neighbourhood-based topological framework for modelling conceptual change and emergence.''  We call the approach \emph{Heraclitean} to emphasize continuous change. In this setting, a \textbf{dialectical concept space} is thus a structured conceptual topology where each concept has a family of local contexts (neighbourhoods), and new concepts arise through reconfiguration of overlapping neighbourhoods. Adjacency of concepts is defined via their neighbourhood overlaps, yielding ``channels'' of influence. The finite-core emergence rule (CCER) then produces emergent concepts as open subregions within the channel ideal of their ``parent'' concepts. Intuitively, HDCS captures how tensions in overlapping conceptual neighbourhoods give rise to novel concepts while preserving local structure.

\subsection*{Key Components of HDCS}
\begin{itemize}
    \item \textbf{Feasible families:} A designated collection $\mathcal{F}$ of concept-subsets (regions) that are closed under intersection, ensuring each concept lies in some region. These represent coherent configurations of concepts.

    \item \textbf{Neighbourhood systems:} A map $N$ assigning to each concept $P$ a nonempty set of feasible regions containing $P$, closed under taking smaller regions (downward closed). Each $N(P)$ lists the local contexts in which $P$ participates.

    \item \textbf{Channel overlaps (adjacency):} Two concepts $X$, $Y$ are adjacent if some region in $N(X)$ overlaps a region in $N(Y)$. This adjacency relation, defined via neighbourhood overlap, generates channel ideals that capture both direct and mediated concept connections.

    \item \textbf{CCER (Cumulative Core Emergence Rule):} A principle yielding emergent concepts as open sets: any emergent $C$ arises as the interior (in the stage topology) of the intersection of overlapping parent-region neighbourhoods. In practice, CCER produces new concept regions at the ``interface'' of existing ones.

    \item \textbf{Stage-based dynamics:} Conceptual evolution proceeds in discrete \emph{Heraclitean stages}, each with its own concept space $(C^{(i)}, \mathcal{F}^{( i )}, N^{( i )} )$ and external topology. Carry maps link one stage to the next, ensuring persistence of existing concepts. This dynamic extension yields a colimit topology across all stages, so that one obtains a single global space encoding the entire evolutionary trajectory.
\end{itemize}

Throughout the examples, the finite-consistency condition (\(N_{\mathrm{fin}}\)) is used only
where required to guarantee nonempty finite intersections of remainders in
CCER constructions; the general HDCS framework does not assume (\(N_{\mathrm{fin}}\)) outside
such instances.\\
\newpage

\vspace{1em}
\noindent

To show what the HDCS framework can do in practice, the final sections of this work apply it to a set of historically and scientifically grounded cases. These include: (1) the evolution of economic exchange systems from ritual and barter to coinage and fiat money, modeled through emergent conceptual regions; (2) the development of the concept of zero as a cross-space emergent that links linguistic and cognitive structures; and (3) the morphological and functional evolution of the mammalian middle ear, capturing both anatomical transformation and changes in perceptual tuning. These examples are not just decorative; they are used to see how far the framework can go in bringing historical, biological, and conceptual change under one formal picture. Each case instantiates the definitions and topological constructions developed earlier, and shows how emergence, adjacency, and structural reconfiguration can be tracked in concrete settings.\\

The framework is meant to model change in any domain where structure and proximity matter: scientific theories, historical transitions, biological differentiation, or cognitive innovation. It does not assume a fixed ontology of concepts; instead, it treats concepts as positions in a structure, given by their neighbourhoods and relations to other regions.
HDCS is not a theory of everything. It is a formal toolkit for thinking about how new conceptual regions arise from older ones through overlap, re-use, and transformation. Its value, if it has any, lies in how clearly it can describe structural change wherever ideas evolve.\\\\

HDCS sits alongside, but does not repeat, several well-known approaches. 
G\"ardenfors's \emph{conceptual spaces} rely on metrics and convex regions, (\cite{Gardenfors2000}, and \cite{Gardenfors2014}); HDCS instead uses qualitative overlaps of regions, with no assumed distance structure. \\
Formal Concept Analysis and related methods (such as Memory Evolutive Systems or conceptual blending) either build static lattices or work with categorical or informal descriptions of how ideas combine, (\cite{GanterWille1999,EhresmannVanbremeersch2007,FauconnierTurner2002}). By contrast, HDCS keeps to a simple, topology-first view and is mainly about how conceptual regions emerge, interact, and change over time.\\
The framework of HDCS is designed to model 
how structured conceptual systems evolve under mediated tension and resolution.  
It uses topological tools: open regions, neighbourhoods, overlap dynamics, to 
represent concepts, conflicts, and emergent resolutions across discrete stages.  
Unlike Kripke semantics or modal logics, ( \cite{Johnstone1982}), HDCS focuses not on pointwise truth conditions, 
but on the structural interaction of overlapping regions that generate new conceptual forms.  
This allows for the formal treatment of layered emergence in systems ranging from 
cognitive models to evolutionary biology and AI-driven reconfiguration.\\\\

Taken together, these contrasts mark HDCS as a topology-first framework for structural and conceptual change, with a focus and formalism that differ from the approaches just mentioned.

\newpage

\section*{Notation}

\begingroup
\small
\begin{tabular}{@{}p{4cm}p{11cm}@{}}
\textbf{Symbol} & \textbf{Meaning} \\
\hline
$C$ & Set of concepts. \\
$\mathcal{F} \subseteq \mathcal P(C)$ & Feasible family of regions (covers $C$, closed under finite intersections). \\
$N : C \to \mathcal P(\mathcal{F})$ & Neighbourhood assignment $P \mapsto N(P)$. \\
$N(A)$ & $\displaystyle N(A) := \bigcap_{P \in A} N(P)$ for $A \subseteq C$ (with $N(\varnothing)=\mathcal{F}$). \\
$T$ & Topology generated by $N$ (external topology on $C$). \\
$T_D$ & Internal/subspace topology on $D \subseteq C$. \\
$X \sim Y$ & Adjacency of regions $X,Y \subseteq C$. \\
$O(C_i,C_j)$ & Overlap family of neighbourhood intersections for $C_i,C_j$. \\
$I_{ij}$ & Channel ideal generated by overlaps between $C_i$ and $C_j$. \\
$I^{(k)}_{ij}$ & Channel ideal for $k$-step adjacency paths from $C_i$ to $C_j$. \\
$I^{(\le K)}_{ij}$ & Channel ideal generated by all paths of length $\le K$. \\
$I^{(*)}_{ij}$ & Comprehensive mediated channel ideal (all finite paths). \\
$E_{ij}$ & Exterior ideal generated by remainders of profiles between $C_i$ and $C_j$. \\
$p = (R,U,V_i,V_j)$ & Profile between $C_i$ and $C_j$ (overlap $R$, background $U$, parents $V_i,V_j$). \\
$R_E(p)$ & Remainder region: $\operatorname{Int}_T\bigl(U \setminus (V_i \cup V_j)\bigr)$. \\
$C_k$ & Emergent concept (single stage), from CCER. \\
$C^{(i)}$ & Concept space at stage $i$ (with $F^{(i)},N^{(i)},T^{(i)}$). \\
$C^{(i)}_k$ & Emergent concept at stage $i$ (stage-$i$ CCER). \\
$\mathrm{Events}^{(i)}$ & Event data at stage $i$ (emergents, edits, etc.), input to $\Phi$. \\
$\Phi$ & Evolution map: $(C^{(i+1)},N^{(i+1)}) = \Phi(C^{(i)},N^{(i)},\mathrm{Events}^{(i)})$. \\
$\sigma_i : C^{(i)} \to C^{(i+1)}$ & Carry map tracking concept identity from stage $i$ to $i{+}1$. \\
$\bigsqcup_i C^{(i)}$ & Disjoint union (timeline) of all stages. \\
$X$ & HDCS colimit space (quotient of $\bigsqcup_i C^{(i)}$ by $\sigma$-identifications). \\
$q$ & Quotient map $q : \bigsqcup_i C^{(i)} \to X$. \\
$\iota_i$ & Inclusion $\iota_i := q|_{C^{(i)}} : C^{(i)} \to X$. \\
$\hat{\sigma}_i$ & Global evolution map $\hat{\sigma}_i := \iota_{i+1}\circ\sigma_i : C^{(i)} \to X$. \\
$C_1 \times C_2$ & Product concept space of two spaces $C_1,C_2$. \\
$F_\times$ & Feasible family on $C_1 \times C_2$ (typically sets $U \times V$). \\
$N_\times$ & Neighbourhood assignment on $C_1 \times C_2$. \\
$T_\times$ & Product topology on $C_1 \times C_2$. \\
$\pi_i$ & Projection maps $\pi_i : C_1 \times C_2 \to C_i$. \\
$E$ & Cross-space emergent region in $C_1 \times C_2$. \\
\end{tabular}
\endgroup

\pagebreak

\section{Neighbourhoods and feasible families}

We introduce \emph{feasible families} as the foundational structure for defining
neighbourhood assignments, adjacency relations, and the emergence of new
conceptual regions.

Throughout, concepts are treated as elements of an abstract set.
No intrinsic geometric or metric structure is assumed.
All structure arises relationally through containment, overlap, and
neighbourhood interaction among \emph{regions}, defined purely extensionally as
subsets of the concept set.

\medskip

Let \(C\) be a nonempty set, whose elements are called \emph{concepts}.
A \emph{region} is any subset \(U \subseteq C\).

A feasible family specifies which regions are admissible as coherent contexts
for structural and inferential purposes, and which may therefore be used to
define neighbourhoods and emergent structure.

\begin{definition}\label{def:feasible}
Let \(C\) be a nonempty set.
A \emph{feasible family} is a collection
\[
\mathcal{F} \subseteq \mathcal{P}(C)
\]
satisfying:
\[
\textnormal{(F1)}\quad
\forall P \in C\ \exists\,U \in \mathcal{F}\ \text{with}\ P \in U,
\]
\[
\textnormal{(F$\cap$)}\quad
U,V \in \mathcal{F},\ U \cap V \neq \varnothing
\ \Rightarrow\ U \cap V \in \mathcal{F},
\qquad
\textnormal{(F$\emptyset$)}\quad
\varnothing \notin \mathcal{F}.
\]
\end{definition}

Condition \textup{(F1)} ensures that every concept participates in at least one
feasible region.
Closure under nonempty intersection expresses persistence of coherence under
overlap, while exclusion of the empty set prevents trivialisation.

We do not assume that \(\mathcal{F}\) is closed under unions, nor that it forms
a topology.
This is intentional: feasibility represents semantic compatibility rather than
spatial extent, and arbitrary unions may destroy coherence.

\medskip

\begin{definition}\label{def:neigh}
A map
\[
N : C \longrightarrow \mathcal{P}(\mathcal{F}), \qquad
P \longmapsto N(P),
\]
is called a \emph{neighbourhood assignment} if for each \(P \in C\) the following
conditions hold:
\begin{align}
  &N(P) \neq \varnothing
  \ \text{and}\ (U \in N(P) \Rightarrow P \in U)
  && \textnormal{(N0)} \nonumber\\
  &U \in N(P),\ V \in \mathcal{F},\ P \in V \subseteq U
  \ \Rightarrow\ V \in N(P)
  && \textnormal{(N$\downarrow$)} \nonumber\\
  &U,V \in N(P)
  \ \Rightarrow\ U \cap V \in N(P)
  && \textnormal{(N$\cap$)} \nonumber
\end{align}

For any subset \(A \subseteq C\), define
\[
N(A) := \bigcap_{P \in A} N(P) \subseteq \mathcal{F},
\qquad
N(\varnothing) := \mathcal{F}.
\]

\item[(Nnonvoid)]
For all \(P \in C\) and all \(U \in N(P)\), we have \(U \neq \varnothing\).
\end{definition}

\begin{remark}
Unlike classical neighbourhood systems, no upward closure is assumed.
If \(U \in N(P)\) and \(U \subseteq W \subseteq C\), we do \emph{not} require
\(W \in N(P)\).
Enlarging a region may destroy coherence or introduce incompatibility, whereas
restriction preserves conceptual identity.
\end{remark}

\begin{remark}
We exclude the empty set from feasible regions and from neighbourhoods.
This ensures that every concept participates in a nonempty structural
configuration and prevents vacuous overlap witnesses.
Condition \textup{(Nnonvoid)} guarantees that all neighbourhoods carry
nontrivial content.
\end{remark}

\noindent\textbf{Antitone behaviour.}\;
If \(A \subseteq B \subseteq C\), then \(N(B) \subseteq N(A)\).
This follows directly from
\[
N(A) := \bigcap_{P \in A} N(P).
\]

\medskip

At each stage, feasible regions represent structured conceptual configurations,
and emergent regions may later be reified as atomic concepts at the next stage.

\begin{remark}
For a concept \(P\), the family \(N(P)\) specifies the feasible regions in which
\(P\) is locally situated.
Accordingly, distinct concepts may have disjoint neighbourhood systems, and no
global upward closure is assumed or expected.
\end{remark}

\begin{remark}\label{rem:finite-common}
Even for finite \(A \subseteq C\), the set of common neighbourhoods \(N(A)\) may
be empty.
If finite common neighbourhoods are required, one may impose:
\[
(N_{\mathrm{fin}})
\qquad
\forall\ \text{finite } A \subseteq C\ \exists\,U \in \mathcal F
\ \text{such that}\ A \subseteq U
\ \text{and}\ U \in N(P)\ \forall P \in A .
\]
Under \textup{\((N_{\mathrm{fin}})\)}, one has \(N(A) \neq \varnothing\) for every
finite \(A\).
\end{remark}

We do \emph{not} assume \textup{\((N_{\mathrm{fin}})\)} by default.
It is an optional finitary coherence condition, invoked only when a common
feasible neighbourhood for a finite core is required (for example, in the CCER
construction).
All basic topological results rely solely on
\textup{(N0)}, \textup{(N$\downarrow$)}, and \textup{(N$\cap$)}.
\begin{definition}\label{def:adj}
Let \(C\) be a set of concepts, \(\mathcal{F} \subseteq \mathcal{P}(C)\) a feasible
family, and
\(N : C \to \mathcal{P}(\mathcal{F})\) a neighbourhood assignment satisfying
\textup{(F$\cap$)}, \textup{(F$\emptyset$)}, \textup{(N0)},
\textup{(N$\downarrow$)}, and \textup{(N$\cap$)}.
For subsets \(X,Y \subseteq C\), define \emph{adjacency} by
\[
X \sim Y
\ \Longleftrightarrow\
\exists\,U \in N(X),\ \exists\,V \in N(Y)
\ \text{such that}\ U \cap V \neq \varnothing .
\]
\end{definition}
\begin{corollary}\label{cor:finite-reflexive}
Assume \textup{\((N_{\mathrm{fin}})\)}.
Then for every finite \(X \subseteq C\),
\[
N(X) \neq \varnothing
\quad \text{and hence} \quad
X \sim X .
\]
\end{corollary}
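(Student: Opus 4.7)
The plan is to unfold the definitions: $(N_{\mathrm{fin}})$ is engineered precisely to furnish a common feasible neighbourhood for every finite $A \subseteq C$, and adjacency of a set with itself reduces to the existence of a single nonempty neighbourhood. I would split the argument into two short claims, and take care with the edge case $X = \varnothing$ since $N(\varnothing)$ is handled by convention rather than by $(N_{\mathrm{fin}})$ directly.

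First I would establish $N(X) \neq \varnothing$ for finite $X$. For nonempty finite $X$, the hypothesis $(N_{\mathrm{fin}})$ supplies a $U \in \mathcal{F}$ with $X \subseteq U$ and $U \in N(P)$ for every $P \in X$. By the definition $N(X) := \bigcap_{P \in X} N(P)$, such a $U$ lies in $N(X)$. For $X = \varnothing$, the convention $N(\varnothing) := \mathcal{F}$ applies, and $\mathcal{F}$ is nonempty by \textup{(F1)} together with the standing assumption that $C$ is nonempty; so $N(\varnothing) \neq \varnothing$ as well.

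Next I would derive $X \sim X$ from the nonemptiness of $N(X)$. Pick any $U \in N(X)$ from the first step, and use it as the witness on both sides of \cref{def:adj}: the existential clauses become $U \in N(X)$ and $U \in N(X)$ with $U \cap U = U$. It remains to verify $U \neq \varnothing$. For nonempty $X$, fix any $P \in X$; then $U \in N(P)$, and condition \textup{(Nnonvoid)} forces $U \neq \varnothing$. For $X = \varnothing$, we instead have $U \in \mathcal{F}$, and \textup{(F$\emptyset$)} gives $U \neq \varnothing$. Either way $U \cap U \neq \varnothing$, so $X \sim X$.

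Honestly there is no serious obstacle here: the corollary is essentially a direct reading of $(N_{\mathrm{fin}})$ through the definitions of $N(A)$ and adjacency. The only mildly subtle point is bookkeeping around the empty set, where one must switch from invoking $(N_{\mathrm{fin}})$ and \textup{(Nnonvoid)} to invoking \textup{(F1)} and \textup{(F$\emptyset$)}; once the two cases are separated, each collapses to a one-line verification.
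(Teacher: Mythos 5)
Your proof is correct and follows essentially the same route as the paper's: invoke \((N_{\mathrm{fin}})\) to obtain a common \(U\in\mathcal F\) witnessing \(U\in N(X)\), then use \(U\) on both sides of Definition~\ref{def:adj} to get \(X\sim X\). The extra bookkeeping you add for \(X=\varnothing\) and the explicit appeal to \textup{(Nnonvoid)} to see \(U\neq\varnothing\) are sound refinements the paper leaves implicit (and the empty-set case is arguably excluded anyway by the paper's remark restricting adjacency to nonempty subsets), but they do not change the argument.
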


\begin{proof}
By \textup{\((N_{\mathrm{fin}})\)}, there exists \(U \in \mathcal F\) such that
\(X \subseteq U\) and \(U \in N(P)\) for all \(P \in X\).
Hence \(U \in N(X)\), so \(N(X) \neq \varnothing\).
Since \(U \cap U \neq \varnothing\), this witnesses \(X \sim X\).
\end{proof}

\newpage

\begin{proposition}[Basic properties of adjacency]\label{prop:adj-basic}
Let \(\sim\) be the adjacency relation from Definition~\ref{def:adj}.
Then for all \(X,Y,Z \subseteq C\):
\begin{enumerate}[label=\arabic*.]
  \item \emph{Reflexive on its domain.}
  If \(N(X) \neq \varnothing\), then \(X \sim X\).
  In particular, for every \(P \in C\) one has \(\{P\} \sim \{P\}\).
  \item \emph{Symmetric.}
  \(X \sim Y\) if and only if \(Y \sim X\).
  \item \emph{Not necessarily transitive.}
  There exist neighbourhood assignments \(N\) and concepts \(P,Q,R \in C\) such
  that \(\{P\} \sim \{Q\}\) and \(\{Q\} \sim \{R\}\), but \(\{P\} \not\sim \{R\}\).
\end{enumerate}
\end{proposition}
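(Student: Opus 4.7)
The plan is to handle the three claims separately, with the first two essentially reading off the definitions and the finite-region conditions, and the third requiring construction of a concrete small counterexample.

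For the first claim (reflexivity on the domain), I would pick any $U \in N(X)$. By hypothesis $U$ is a feasible region, so $U \neq \varnothing$ by (F$\emptyset$) (equivalently by (Nnonvoid)). Hence $U \cap U = U \neq \varnothing$, and choosing $U$ on both sides in Definition~\ref{def:adj} witnesses $X \sim X$. For a singleton $\{P\}$, the set $N(\{P\})$ reduces to $N(P)$, which is nonempty by (N0), so the previous argument gives $\{P\} \sim \{P\}$. The second claim (symmetry) is completely immediate, since the defining condition $\exists\,U \in N(X)\ \exists\,V \in N(Y) : U \cap V \neq \varnothing$ is symmetric in the roles of $X$ and $Y$, and set intersection is commutative; one only needs to swap the witnesses.

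The real content is the failure of transitivity, and this is where care is needed: the counterexample has to satisfy (F1), (F$\cap$), (F$\emptyset$), (N0), (N$\downarrow$), and (N$\cap$) simultaneously, so it cannot be chosen too carelessly. I would take $C = \{P,Q,R\}$ with the feasible family
\[
\mathcal{F} = \bigl\{\{P\},\{Q\},\{R\},\{P,Q\},\{Q,R\}\bigr\},
\]
omitting $\{P,R\}$ and $\{P,Q,R\}$, and set
\[
N(P) = \bigl\{\{P\}\bigr\}, \qquad
N(Q) = \bigl\{\{Q\},\{P,Q\},\{Q,R\}\bigr\}, \qquad
N(R) = \bigl\{\{R\}\bigr\}.
\]
I would then verify, one axiom at a time, that this data satisfies the feasibility and neighbourhood conditions (the only nontrivial intersection check is $\{P,Q\} \cap \{Q,R\} = \{Q\} \in N(Q)$, and (N$\downarrow$) is handled by the fact that both $\{P\}$ and $\{Q,R\}$-subcontexts reduce to singletons already listed).

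Having verified the axioms, the adjacency computation is direct: $\{P\} \sim \{Q\}$ is witnessed by $\{P\} \in N(P)$ and $\{P,Q\} \in N(Q)$ meeting at $P$, while $\{Q\} \sim \{R\}$ is witnessed by $\{Q,R\} \in N(Q)$ and $\{R\} \in N(R)$ meeting at $R$. For the nonadjacency $\{P\} \not\sim \{R\}$, the only neighbourhoods available are $\{P\}$ and $\{R\}$, whose intersection is empty, so no witness can exist. The main obstacle I expect is purely bookkeeping: ensuring the chosen $\mathcal{F}$ is closed under nonempty intersections (in particular that the omitted sets are genuinely not forced back in), and confirming that omitting $\{P,R\}$ does not conflict with downward-closure inside any $N(\cdot)$. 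Once those checks are done, the counterexample is complete, and the proposition follows.
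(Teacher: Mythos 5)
Your arguments for parts (1) and (2) coincide with the paper's: pick any \(U \in N(X)\) as a self-witness for reflexivity, and note that the defining condition for \(\sim\) is symmetric in \(X\) and \(Y\). The extra remark that \(U \neq \varnothing\) via (F\(\emptyset\))/(Nnonvoid) is a correct and slightly more careful justification of why \(U \cap U \neq \varnothing\) than the paper bothers to give. The real divergence is in part (3): the paper's proof merely asserts that ``one can choose'' a neighbourhood assignment exhibiting non-transitivity, without producing one, whereas you construct an explicit model on \(C = \{P,Q,R\}\) with \(\mathcal{F} = \{\{P\},\{Q\},\{R\},\{P,Q\},\{Q,R\}\}\) and verify all of (F1), (F\(\cap\)), (F\(\emptyset\)), (N0), (N\(\downarrow\)), (N\(\cap\)). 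I checked your example: \(\mathcal{F}\) is closed under nonempty intersections (the only two-element overlap is \(\{P,Q\}\cap\{Q,R\}=\{Q\}\in\mathcal{F}\)), the downward-closure and intersection-closure of each \(N(\cdot)\) hold, the witnesses \(\{P\}\cap\{P,Q\}\) and \(\{Q,R\}\cap\{R\}\) give \(\{P\}\sim\{Q\}\) and \(\{Q\}\sim\{R\}\), and since \(N(P)=\{\{P\}\}\) and \(N(R)=\{\{R\}\}\) are disjoint singleton families, \(\{P\}\not\sim\{R\}\). Your version is therefore strictly more complete than the paper's: it supplies the concrete counterexample that the paper's proof of (3) leaves implicit, at the cost of the axiom-checking bookkeeping you correctly identify as the only real work.
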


\begin{proof}
\emph{(1)}
If \(N(X) \neq \varnothing\), choose \(U \in N(X)\).
Then \(U \cap U \neq \varnothing\), so \(X \sim X\).
For a singleton \(\{P\}\), nonemptiness of \(N(\{P\})\) follows directly from
\textup{(N0)}, hence \(\{P\} \sim \{P\}\).

\medskip

\emph{(2)}
If \(U \in N(X)\) and \(V \in N(Y)\) satisfy \(U \cap V \neq \varnothing\), then
the same pair witnesses \(Y \sim X\).
Thus \(X \sim Y \Leftrightarrow Y \sim X\).

\medskip

\emph{(3)}
Non-transitivity may occur: one can choose a neighbourhood assignment \(N\) and
concepts \(P,Q,R \in C\) such that \(\{P\} \sim \{Q\}\) and \(\{Q\} \sim \{R\}\),
while \(\{P\} \not\sim \{R\}\).
\end{proof}

\begin{remark}
Adjacency is intended to represent meaningful interaction between nonempty
collections of concepts.
Accordingly, adjacency involving the empty set is not regarded as meaningful,
and we restrict attention throughout to nonempty subsets of \(C\).
\end{remark}

\begin{definition}[Strict adjacency]\label{def:strictadj}
The \emph{strict adjacency} relation on nonempty subsets of \(C\) is defined by
\[
X \sim_{\mathrm{str}} Y
\iff
X \neq Y
\ \text{and}\ 
\exists\,U \in N(X),\ V \in N(Y)
\ \text{such that}\ 
U \cap V \neq \varnothing .
\]
\end{definition}

\begin{definition}\label{def:opens}
Let \(C\) be a set of concepts, \(\mathcal F \subseteq \mathcal P(C)\) a feasible
family, and
\(N : C \to \mathcal P(\mathcal F)\) a neighbourhood assignment satisfying
\textup{(N0)}, \textup{(N$\downarrow$)}, and \textup{(N$\cap$)}.
A subset \(U \subseteq C\) is called \emph{open} if
\[
\forall P \in U\ \exists\,V \in N(P)\ \text{such that}\ V \subseteq U.
\]
Let \(\mathcal T\) denote the family of all open subsets of \(C\).
(No assumption is made that \(U \in \mathcal F\).)
\end{definition}

\begin{proposition}[Neighbourhood topology]\label{prop:topology}
Under \textup{(N0)}, \textup{(N$\downarrow$)}, and \textup{(N$\cap$)},
the family \(\mathcal T\) is a topology on \(C\).
\end{proposition}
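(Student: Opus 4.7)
The plan is to verify the three standard topology axioms for $\mathcal{T}$, in the order: empty set and whole space, arbitrary unions, then finite intersections. The only axiom that uses the substantive structure of $N$ is the finite-intersection clause, which will draw on \textup{(N$\cap$)}; the other clauses are essentially formal.

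First I would check that $\varnothing \in \mathcal{T}$ vacuously, and that $C \in \mathcal{T}$ by invoking \textup{(N0)}: for each $P \in C$ we have $N(P) \neq \varnothing$, so picking any $V \in N(P)$ gives $V \subseteq C$ as required. Second, for an arbitrary family $\{U_\alpha\}_{\alpha \in A} \subseteq \mathcal{T}$ and a point $P \in \bigcup_\alpha U_\alpha$, I would choose some index $\alpha_0$ with $P \in U_{\alpha_0}$, use openness of $U_{\alpha_0}$ to find $V \in N(P)$ with $V \subseteq U_{\alpha_0}$, and observe $V \subseteq \bigcup_\alpha U_\alpha$.

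The main step is closure under finite intersections, which it suffices to verify for two open sets and then iterate. Given $U_1, U_2 \in \mathcal{T}$ and $P \in U_1 \cap U_2$, openness produces $V_1, V_2 \in N(P)$ with $V_i \subseteq U_i$. Here I would invoke \textup{(N$\cap$)} to conclude $V_1 \cap V_2 \in N(P)$, and note that automatically $V_1 \cap V_2 \subseteq U_1 \cap U_2$. Since \textup{(N0)} guarantees $P \in V_1 \cap V_2$, the intersection is nonempty and provides the required witness at $P$; iterating over the finitely many pairs handles any finite intersection.

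The only subtle point, which I would flag explicitly, is that \textup{(N$\cap$)} is sensible at all: the intersection $V_1 \cap V_2$ contains $P$ and so is nonempty, hence by \textup{(F$\cap$)} it lies in $\mathcal{F}$, which is what allows \textup{(N$\cap$)} to place it in $N(P)$. Beyond this compatibility check, no genuine obstacle arises; \textup{(N$\downarrow$)} is not needed for the proof of this proposition, though it becomes relevant for later results (such as compatibility with subspaces or basis properties).
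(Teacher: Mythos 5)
Your proof is correct and follows essentially the same route as the paper's: vacuous openness of $\varnothing$, \textup{(N0)} for $C$, the standard union argument, and \textup{(N$\cap$)} for finite intersections. Your added observations — that \textup{(F$\cap$)} is what makes \textup{(N$\cap$)} well-posed since $P \in V_1 \cap V_2$ guarantees nonemptiness, and that \textup{(N$\downarrow$)} is not actually used — are accurate and slightly more explicit than the paper's own proof.
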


\begin{proof}
We verify the topology axioms.

\medskip

\noindent\emph{(i) The empty set and the whole space.}
The empty set is vacuously open.
For \(C\), let \(P \in C\).
By \textup{(N0)} there exists \(V \in N(P)\), and clearly \(V \subseteq C\).

\medskip

\noindent\emph{(ii) Arbitrary unions.}
Let \(\{U_i\}_{i \in I} \subseteq \mathcal T\) and set \(U := \bigcup_{i \in I} U_i\).
If \(P \in U\), then \(P \in U_i\) for some \(i\).
Since \(U_i\) is open, there exists \(V \in N(P)\) with \(V \subseteq U_i \subseteq U\).
Hence \(U\) is open.

\medskip

\noindent\emph{(iii) Finite intersections.}
Let \(U,W \in \mathcal T\) and \(P \in U \cap W\).
There exist \(V_1,V_2 \in N(P)\) with \(V_1 \subseteq U\) and \(V_2 \subseteq W\).
By \textup{(N$\cap$)}, \(V_1 \cap V_2 \in N(P)\), and
\(V_1 \cap V_2 \subseteq U \cap W\).
Thus \(U \cap W\) is open.

\medskip

Therefore, \(\mathcal T\) is a topology on \(C\).
\end{proof}

\pagebreak

\begin{remark}[On the induced topology and the empty set]
The topology \(\mathcal T\) is \emph{generated by} the neighbourhood assignment \(N\),
with neighbourhood elements drawn from the feasible family \(\mathcal F\), rather
than being a subfamily of \(\mathcal F\).
Accordingly, \(\varnothing \in \mathcal T\) by the axioms of topology, even though
feasible regions, neighbourhood elements, and overlap witnesses are always taken
to be nonempty.
Thus \(\varnothing\) appears in \(\mathcal T\) as a formal requirement and does not
represent a feasible region or neighbourhood in this framework.
\end{remark}

\medskip

We refer to the topology \(\mathcal T\) on \(C\) induced by \(N\) as the
\emph{external topology}.
It governs the global organisation of the concept space at a given stage.
Since \(\mathcal T\) is defined on the index set \(C\) itself, it should not be
confused with any topology on external interpretations of concepts.

When a specific region \(U \subseteq C\) is under consideration, such as an
emergent concept, we equip \(U\) with the subspace topology \(\mathcal T|_U\)
inherited from \(\mathcal T\).
This \emph{internal topology} captures the local organisation of neighbourhood
structure within \(U\).

\begin{proposition}[Restriction and generation of internal opens]
\label{prop:subspace-base}
Let \((C,\mathcal T)\) be the neighbourhood-induced topological space, and let
\(C' \subseteq C\). Then:
\begin{enumerate}[label=\arabic*.]
\item \(\mathcal T_{C'}=\{\,U\cap C' : U\in\mathcal T\,\}\).
\item If \(\mathcal B\) is a base for \((C,\mathcal T)\), then
\(\{\,B\cap C' : B\in\mathcal B\,\}\) is a base for \((C',\mathcal T_{C'})\).
\end{enumerate}
\end{proposition}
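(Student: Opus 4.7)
The plan is to take $\mathcal{T}_{C'}$ to denote the subspace topology on $C'$ inherited from $(C, \mathcal{T})$, consistent with the paragraph preceding the proposition and with the notation table. Both parts are then standard, though it is worth making the compatibility with the neighbourhood-based Definition~\ref{def:opens} explicit, since openness has been defined here intrinsically via $N$ rather than via a pre-existing topology.

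For part (1), I would verify both inclusions directly against Definition~\ref{def:opens}. If $U \in \mathcal{T}$ and $P \in U \cap C'$, choose $V \in N(P)$ with $V \subseteq U$; then $V \cap C' \subseteq U \cap C'$, so $U \cap C'$ satisfies the restricted openness criterion in $C'$. Conversely, given $W \subseteq C'$ open in $C'$, for each $P \in W$ pick $V_P \in N(P)$ with $V_P \cap C' \subseteq W$ and set $U := \bigcup_{P \in W} V_P$. Then $U$ is open in $C$ by Proposition~\ref{prop:topology}, and a direct check gives $U \cap C' = W$: each $P \in W$ lies in $V_P \subseteq U$, and conversely each $V_P \cap C'$ lies in $W$.

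For part (2), I would verify the two defining conditions of a base. Every $B \cap C'$ with $B \in \mathcal{B}$ lies in $\mathcal{T}_{C'}$ by part (1). For the refinement property, take $W \in \mathcal{T}_{C'}$ and $P \in W$; using (1), write $W = U \cap C'$ for some $U \in \mathcal{T}$, then apply the base property of $\mathcal{B}$ in $C$ to obtain $B \in \mathcal{B}$ with $P \in B \subseteq U$. Intersecting with $C'$ yields $P \in B \cap C' \subseteq W$, as required.

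There is no substantive obstacle here: the only point to watch is keeping the two readings of ``$W$ open in $C'$'' (subspace versus neighbourhood-intrinsic) consistently aligned, which is precisely the content of part (1). Once that alignment is in place, part (2) is immediate. The argument invokes no finitary axiom such as $(N_{\mathrm{fin}})$ and uses no properties beyond those already established in Proposition~\ref{prop:topology}.
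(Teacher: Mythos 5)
Your part (2) is correct and matches the paper's argument in substance: the paper writes each $U\in\mathcal T$ as a union of basic sets and intersects with $C'$, and your pointwise version of the base property is equivalent. The divergence is in part (1). In the paper, $\mathcal T_{C'}$ \emph{is} the subspace topology by definition (see the paragraph preceding the proposition and the notation table), so the paper's proof of (1) is the single line ``this is the definition of the subspace topology.'' You instead set out to prove something stronger: that the subspace topology coincides with an intrinsic openness criterion on $C'$ obtained by restricting neighbourhoods to $C'$. That equivalence is genuinely useful --- it is essentially the content of Proposition~\ref{prop:stage-subspace} later in the paper --- but it is not what Proposition~\ref{prop:subspace-base}(1) asserts, and it is where your one real problem lies.

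In the converse direction you set $U:=\bigcup_{P\in W}V_P$ with $V_P\in N(P)$ and claim $U\in\mathcal T$ ``by Proposition~\ref{prop:topology}.'' That proposition only establishes that $\mathcal T$ is a topology; it does not say that neighbourhood regions are themselves open, and the paper explicitly treats openness of neighbourhoods (equivalently, that $\bigcup_P N(P)$ is a base) as requiring the additional cross-point axiom (N$\Rightarrow$) in Proposition~\ref{prop:base}. So as cited, the step is unjustified. It is repairable: for $V\in N(P)$ and $Q\in V$, pick any $W_0\in N(Q)$; then $Q\in W_0\cap V\subseteq W_0$ and $W_0\cap V\in\mathcal F$ by (F$\cap$), so (N$\downarrow$) gives $W_0\cap V\in N(Q)$ with $W_0\cap V\subseteq V$, whence every neighbourhood is in fact open and your union is open. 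But that derivation (or an explicit appeal to (N$\Rightarrow$)) must be supplied; Proposition~\ref{prop:topology} alone does not carry it. None of this affects part (2), and if you read part (1) as definitional, as the paper does, the issue disappears entirely.
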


\begin{proof}
\emph{(1)} This is the definition of the subspace topology.

\smallskip

\emph{(2)} For any \(U\in\mathcal T\), write
\(U=\bigcup_{B\in\mathcal B_U} B\) with \(\mathcal B_U\subseteq\mathcal B\).
Intersecting with \(C'\) gives
\[
U\cap C'=\bigcup_{B\in\mathcal B_U}(B\cap C').
\]
\end{proof}

\begin{lemma}\label{lem:open-characterizations}
Assume \textup{(N0)}, \textup{(N$\downarrow$)}, and \textup{(N$\cap$)}.
For \(U\subseteq C\), consider:
\begin{align*}
\textup{(i)}\;& U \text{ is open;}\\
\textup{(ii)}\;& U\in\mathcal F \text{ and } U \in \bigcap_{P\in U} N(P);\\
\textup{(iii)}\;& \forall P\in U\ \exists\,V_P\in N(P)\ \text{with}\ V_P\subseteq U.
\end{align*}
Then \emph{\textup{(i)} \(\Leftrightarrow\) \textup{(iii)}} and
\emph{\textup{(ii)} \(\Rightarrow\) \textup{(i)}}.
In general, \emph{\textup{(iii)} \(\nRightarrow\) \textup{(ii)}}.
\end{lemma}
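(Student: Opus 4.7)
The plan is to handle the three claims in order. For (i) $\Leftrightarrow$ (iii): both statements assert, for every $P \in U$, the existence of some $V \in N(P)$ with $V \subseteq U$, which is literally the content of Definition \ref{def:opens}. So no work is required beyond reading off the definition.

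For (ii) $\Rightarrow$ (i): I would take $U$ itself as the witnessing neighbourhood. Under (ii) we have $U \in N(P)$ for every $P \in U$, so setting $V_P := U$ gives $V_P \in N(P)$ with $V_P \subseteq U$, hence (iii), and thus (i) by the first step. The feasibility condition $U \in \mathcal F$ plays no role in this direction; it is simply the extra structural commitment carried by (ii).

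For the non-implication (iii) $\nRightarrow$ (ii), the real content is to exhibit a counterexample. The gap is that (iii) only demands that each point have \emph{some} neighbourhood inside $U$, while (ii) requires $U$ itself both to be feasible and to lie in $N(P)$ for every $P \in U$, which is much stronger: an open set can be assembled from small feasible pieces whose union is not itself feasible. I would therefore look for an open $U$ with $U \notin \mathcal F$. The cleanest witness is a two-point one: let $C = \{P_1,P_2\}$, $\mathcal F = \{\{P_1\},\{P_2\}\}$, and $N(P_i) = \{\{P_i\}\}$. All axioms (F1), (F$\cap$), (F$\emptyset$), (N0), (N$\downarrow$), (N$\cap$) are trivial to verify (the intersection $\{P_1\}\cap\{P_2\}$ is empty, so (F$\cap$) is vacuous, and each $N(P_i)$ is a singleton, making (N$\downarrow$) and (N$\cap$) immediate). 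The whole space $C$ is open because each $P_i$ has $\{P_i\} \in N(P_i)$ with $\{P_i\} \subseteq C$; but $C \notin \mathcal F$, so (ii) fails at $U = C$.

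The main obstacle is not technical but expository: the challenge is to choose a counterexample that is both minimal and transparent, because the failure (iii) $\nRightarrow$ (ii) is exactly what separates the induced topology $\mathcal T$ from the feasible family $\mathcal F$ itself, and the example should make that separation visible at a glance. Any open set assembled from strictly smaller feasible pieces would do; the two-point model above achieves this with essentially no setup.
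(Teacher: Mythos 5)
Your proof is correct and follows essentially the same route as the paper: (i) $\Leftrightarrow$ (iii) by definition, (ii) $\Rightarrow$ (i) by taking $V_P := U$, and the non-implication via the assignment $N(P) = \{\{P\}\}$. Your two-point model is just a concrete (and fully axiom-checked) instantiation of the paper's own counterexample.
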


\begin{proof}
The equivalence \textup{(i)} \(\Leftrightarrow\) \textup{(iii)} is the definition
of openness.

\smallskip

For \textup{(ii)} \(\Rightarrow\) \textup{(i)}, if \(U \in N(P)\) for all \(P\in U\),
we may take \(V_P := U\).

\smallskip

To see that \textup{(iii)} does not imply \textup{(ii)} in general, consider the
neighbourhood assignment \(N(P):=\{\{P\}\}\) for all \(P\in C\).
Then every subset \(U\subseteq C\) satisfies \textup{(iii)} and is therefore open,
but \(U \in N(P)\) holds only when \(U=\{P\}\).
\end{proof}

\begin{proposition}[Neighbourhood-generated opens]\label{prop:union-of-neigh}
Let \(\mathcal B:=\bigcup_{P\in C} N(P)\) denote the collection of all
neighbourhood regions.
Then every \(U\in\mathcal T\) can be written as a union of members of \(\mathcal B\):
\[
U=\bigcup_{P\in U}\ \bigcup\{\,V\in N(P): V\subseteq U\,\}.
\]
\end{proposition}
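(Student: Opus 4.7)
The plan is to establish the identity by a standard double inclusion, both directions following almost directly from the definition of openness in \(\mathcal T\) together with condition \textup{(N0)}. I would first fix the notation and split the claim into the two inclusions
\[
U \supseteq \bigcup_{P\in U}\bigcup\{V\in N(P):V\subseteq U\}
\quad\text{and}\quad
U \subseteq \bigcup_{P\in U}\bigcup\{V\in N(P):V\subseteq U\}.
\]

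For the \(\supseteq\) inclusion, I would observe that every \(V\) appearing in the right-hand side is explicitly required to satisfy \(V\subseteq U\). Taking the union over all such \(V\), indexed by \(P\in U\), therefore produces a subset of \(U\). No axiom beyond the definition of the indexing condition is needed here; this is a purely set-theoretic remark.

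For the \(\subseteq\) inclusion, I would fix an arbitrary \(P\in U\). Since \(U\in\mathcal T\), Definition~\ref{def:opens} guarantees the existence of some \(V\in N(P)\) with \(V\subseteq U\), so the inner collection \(\{V\in N(P):V\subseteq U\}\) is nonempty. By \textup{(N0)}, every element of \(N(P)\) contains \(P\); in particular \(P\in V\), and hence
\[
P\in V\subseteq\bigcup\{V\in N(P):V\subseteq U\}\subseteq\bigcup_{Q\in U}\bigcup\{V\in N(Q):V\subseteq U\}.
\]
Varying \(P\) over \(U\) then gives the inclusion.

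There is no real obstacle in this proof; the only subtlety worth flagging is that both \textup{(N0)} and the openness of \(U\) are needed in the forward direction. Openness supplies a neighbourhood \(V\in N(P)\) sitting inside \(U\), while \textup{(N0)} is what ensures that this \(V\) actually witnesses membership of \(P\) in the right-hand union rather than merely producing a set adjacent to \(P\). I would state this explicitly so that the role of each axiom is transparent, and note in passing that the displayed formula then exhibits \(\mathcal B:=\bigcup_{P\in C}N(P)\) as a base for \(\mathcal T\), which is the reason this proposition is useful in subsequent constructions.
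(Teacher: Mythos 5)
Your proof is correct and follows essentially the same route as the paper's: openness of \(U\) supplies, for each \(P\in U\), some \(V\in N(P)\) with \(V\subseteq U\), and the union over \(P\) gives the identity. You merely spell out the reverse inclusion and the role of \textup{(N0)} explicitly, which the paper leaves implicit.
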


\begin{proof}
If \(U\) is open and \(P\in U\), then by Definition~\ref{def:opens} there exists
\(V\in N(P)\) with \(V\subseteq U\).
Taking the union over all \(P\in U\) yields the claim.
\end{proof}

\begin{proposition}[When neighbourhoods form a base]\label{prop:base}
Assume in addition the \emph{cross-point axiom} \textup{(N$\Rightarrow$)}:
\[
(\mathrm{N}\Rightarrow)\qquad
Q \in U \in N(P)
\ \Rightarrow\
\exists\,W \in N(Q)\ \text{such that}\ W \subseteq U.
\]
Then \(\mathcal B:=\bigcup_{P\in C} N(P)\) is a base for \(\mathcal T\).
Equivalently, arbitrary unions of sets from \(\mathcal B\) are open.
\end{proposition}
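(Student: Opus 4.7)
The plan is to verify the two standard conditions for a base: first, that every element of $\mathcal{B}$ is itself $\mathcal{T}$-open, and second, that every $\mathcal{T}$-open set can be written as a union of members of $\mathcal{B}$. The second condition is already supplied by Proposition~\ref{prop:union-of-neigh}, so the real content lies in the first.

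For the first step, I would take an arbitrary $U \in \mathcal{B}$, say $U \in N(P)$ for some $P \in C$, and check the pointwise openness condition of Definition~\ref{def:opens}. For every $Q \in U$ we have $Q \in U \in N(P)$, so (N$\Rightarrow$) furnishes some $W \in N(Q)$ with $W \subseteq U$. This is exactly what is required for $U$ to be open, so $\mathcal{B} \subseteq \mathcal{T}$.

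For the second step, I would simply invoke Proposition~\ref{prop:union-of-neigh}, which writes any $U \in \mathcal{T}$ as
\[
U = \bigcup_{P \in U}\ \bigcup\{\,V \in N(P) : V \subseteq U\,\},
\]
a union of members of $\mathcal{B}$. Combined with the first step, this is the definition of a base.

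Finally, for the equivalent reformulation, the forward direction is immediate from the topology axioms, since arbitrary unions of open sets are open and $\mathcal{B} \subseteq \mathcal{T}$; the reverse direction is already the base property just established. I do not expect a genuine obstacle here: axiom (N$\Rightarrow$) is essentially designed to be the pointwise openness condition for members of $N(P)$, so the proof amounts to unpacking definitions, with the only subtlety being to notice that Proposition~\ref{prop:union-of-neigh} already does half the work.
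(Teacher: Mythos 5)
Your argument is correct and follows essentially the same route as the paper's proof: use (N$\Rightarrow$) to show every member of $\mathcal{B}$ is open, then invoke Proposition~\ref{prop:union-of-neigh} to express each open set as a union of members of $\mathcal{B}$. The only difference is that you spell out the equivalent reformulation explicitly, which the paper leaves implicit.
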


\begin{proof}
Under \textup{(N$\Rightarrow$)}, each \(U\in N(P)\) is open: for any \(Q\in U\),
there exists \(W\in N(Q)\) with \(W\subseteq U\).
Hence every element of \(\mathcal B\) is open, and any union of members of
\(\mathcal B\) is open.
Together with Proposition~\ref{prop:union-of-neigh}, this shows that \(\mathcal B\)
is a base for \(\mathcal T\).
\end{proof}

\providecommand{\IntT}[1]{\operatorname{Int}_{\mathcal{T}}\!\left(#1\right)}
\providecommand{\clT}[1]{\operatorname{cl}_{\mathcal{T}}\!\left(#1\right)}
\section{Structural adjacency via neighbourhood overlap}

\begin{definition}[Overlap family and channel ideal]\label{def:overlap-channel}
Let \(C_i,C_j \in C\).
Define the \emph{overlap family}
\[
O(C_i,C_j)
\;:=\;
\{\, U \cap V \in \mathcal F
\mid
U \in N(C_i),\ V \in N(C_j),\ U \cap V \neq \varnothing
\,\}.
\]
In contrast to upward-closed neighbourhood systems, \(O(C_i,C_j)\) may be empty.

\medskip

Let \(\mathcal T\) be the external topology on \(C\).
The \emph{channel ideal} generated by these overlaps is
\[
I_{ij}
\;:=\;
\Big\{
W \in \mathcal T
\;\Big|\;
\exists\ \text{finite } F \subseteq O(C_i,C_j)
\text{ (possibly empty) with }
W \subseteq \Int_{\mathcal T}\!\bigl(\bigcup F\bigr)
\Big\}.
\]
Equivalently,
\[
I_{ij}
\;=\;
\downarrow
\Bigl\{
\Int_{\mathcal T}\!\bigl(\bigcup F\bigr)
\;\Big|\;
F \subseteq O(C_i,C_j)\ \text{finite}
\Bigr\},
\]
where for \(S \subseteq \mathcal T\) we define
\[
\downarrow S
\;:=\;
\{\, W \in \mathcal T \mid \exists A \in S \text{ with } W \subseteq A \,\}.
\]
\end{definition}

The channel ideal \(I_{ij}\) collects all open regions through which
\(C_i\) and \(C_j\) can interact via overlapping neighbourhood structure.
Taking interiors of finite unions ensures that \(I_{ij}\) consists entirely
of open sets.

\begin{remark}
The term \emph{ideal} is used in the order-theoretic sense:
a family of open sets closed under finite unions and downward containment.
This usage parallels ideals in locale theory and domain theory.
\end{remark}

\begin{remark}[Trivial and nontrivial channels]
By construction, every channel ideal \(I_{ij} \subseteq \mathcal T\) contains the
empty open set, since ideals are downward closed.
The presence of \(\varnothing \in I_{ij}\) is therefore a formal consequence of
the topology and carries no structural meaning.

A channel between \(C_i\) and \(C_j\) is said to be \emph{nontrivial} if
\(I_{ij}\) contains at least one nonempty open set.
Equivalently,
\[
\exists\,W \in I_{ij} \quad \text{with} \quad W \neq \varnothing .
\]

All statements concerning mediation, emergence, surplus, or conceptual influence
implicitly assume nontrivial channels.

When $I_{ij}=\{\emptyset\}$, we interpret this as \emph{structural disconnection}: there is no nonempty open region generated by overlapping feasible neighbourhood structure through which $C_i$ and $C_j$ can interact.

\end{remark}

\begin{lemma}[Adjacency and overlap]\label{lem:adj-overlap}
For \(X,Y \in C\),
\[
X \sim Y
\quad\Longleftrightarrow\quad
O(X,Y) \neq \varnothing .
\]
\end{lemma}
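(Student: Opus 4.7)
The plan is to unpack both sides of the equivalence directly: adjacency and membership in $O(X,Y)$ are both phrased as the existence of a pair $(U,V)\in N(X)\times N(Y)$ with $U\cap V\neq\varnothing$, so the equivalence should amount to a short definition chase. The only structural point that needs care is that the overlap family is defined to consist of elements of $\mathcal F$, not merely of $\mathcal P(C)$, so the forward direction must place the witnessing intersection in $\mathcal F$.

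For the forward direction, I would assume $X\sim Y$ and apply Definition~\ref{def:adj} to produce $U\in N(X)$ and $V\in N(Y)$ with $U\cap V\neq\varnothing$. Since $N$ takes values in $\mathcal P(\mathcal F)$, both $U$ and $V$ lie in $\mathcal F$, and by axiom (F$\cap$) the nonempty intersection $U\cap V$ itself belongs to $\mathcal F$. The pair $(U,V)$ therefore satisfies exactly the membership condition in Definition~\ref{def:overlap-channel}, so $U\cap V\in O(X,Y)$ and hence $O(X,Y)\neq\varnothing$.

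For the reverse direction, I would take any $W\in O(X,Y)$ and read off from Definition~\ref{def:overlap-channel} that $W=U\cap V$ for some $U\in N(X)$, $V\in N(Y)$ with $U\cap V\neq\varnothing$. The same pair $(U,V)$ is precisely the witness required by Definition~\ref{def:adj}, giving $X\sim Y$.

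I do not expect any real obstacle; the lemma is a bookkeeping identity that reconciles the ``witness pair'' formulation of adjacency with the set-theoretic overlap family. The one point worth flagging in the write-up is the invocation of (F$\cap$): without it, the intersection $U\cap V$, though nonempty, might fail to lie in $\mathcal F$ and so could not be placed in $O(X,Y)$. I would also note in passing that the statement uses $X,Y\in C$, but since $N$ is defined on elements directly (and on subsets via the intersection convention), the argument is the same whether one reads the lemma for points or for their singletons.
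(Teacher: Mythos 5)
Your proof is correct and follows essentially the same two-way definition chase as the paper's own argument. The one difference is that you explicitly invoke (F$\cap$) to place the nonempty intersection $U\cap V$ in $\mathcal F$ (as required by the definition of $O(X,Y)$), a step the paper's proof leaves implicit; this is a point in your favour rather than a divergence.
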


\begin{proof}
If \(X \sim Y\), choose \(U \in N(X)\) and \(V \in N(Y)\) with
\(U \cap V \neq \varnothing\); then \(U \cap V \in O(X,Y)\).

Conversely, if \(W = U \cap V \in O(X,Y)\), then by definition
\(U \in N(X)\), \(V \in N(Y)\), and \(U \cap V \neq \varnothing\),
so \(X \sim Y\).
\end{proof}

\begin{remark}[Trivial versus nontrivial channels]
Although every channel ideal $I_{ij}\subseteq T$ is downward closed and therefore contains $\emptyset$, this element carries no structural meaning. We say that the channel between $C_i$ and $C_j$ is \emph{nontrivial} if $I_{ij}$ contains at least one nonempty open set.

A \emph{sufficient condition} for nontriviality is that there exist $U\in N(C_i)$ and $V\in N(C_j)$ such that
\[
\operatorname{Int}_T(U\cap V)\neq\emptyset.
\]

When $I_{ij}=\{\emptyset\}$, we interpret this as \emph{structural disconnection}: there is no nonempty open region generated by overlapping neighbourhood structure through which $C_i$ and $C_j$ can interact.
\end{remark}

\begin{proposition}[Basic properties of the channel ideal]\label{prop:ideal-props}
For all \(C_i,C_j \in C\):
\begin{enumerate}[label=\arabic*.]
\item \(I_{ij}\) is an ideal of \(\mathcal T\) (downward closed and closed under
finite unions).
\item \(I_{ij} = I_{ji}\).
\item If there exist \(U \in N(C_i)\), \(V \in N(C_j)\) with
\(\Int_{\mathcal T}(U \cap V) \neq \varnothing\), then
\(I_{ij} \neq \{\varnothing\}\).
\end{enumerate}
\end{proposition}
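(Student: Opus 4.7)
My plan is to verify the three clauses directly from the definition of $I_{ij}$, using only elementary monotonicity of the interior and axiom \textup{(F$\cap$)}.

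For (1), downward closure is essentially built in: if $W \subseteq \IntT{\bigcup F}$ witnesses $W \in I_{ij}$ and $W' \in \mathcal{T}$ satisfies $W' \subseteq W$, then the same $F$ witnesses $W' \in I_{ij}$. For closure under finite unions, given witnesses $F_1, F_2$ for $W_1, W_2 \in I_{ij}$, I would form $F_1 \cup F_2 \subseteq O(C_i, C_j)$ (still finite) and invoke monotonicity of the interior
\[
\IntT{\bigcup F_1} \cup \IntT{\bigcup F_2} \;\subseteq\; \IntT{\bigcup(F_1 \cup F_2)}
\]
to place the open set $W_1 \cup W_2$ inside the ideal. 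Taking $F = \varnothing$, which the definition explicitly allows, accounts for $\varnothing \in I_{ij}$.

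Clauses (2) and (3) should be short. For (2) I would observe that the overlap family is manifestly symmetric, $O(C_i, C_j) = O(C_j, C_i)$, since $U \cap V = V \cap U$ and the roles of $N(C_i)$ and $N(C_j)$ in Definition~\ref{def:overlap-channel} are interchangeable; because $I_{ij}$ is constructed solely from this family, the equality $I_{ij} = I_{ji}$ follows at once. For (3), with $U \in N(C_i)$, $V \in N(C_j)$ and $\IntT{U \cap V} \neq \varnothing$, I would note that $U \cap V$ is nonempty and apply \textup{(F$\cap$)} to $U, V \in \mathcal{F}$ to place $U \cap V \in \mathcal{F}$, hence in $O(C_i, C_j)$; then the singleton witness $F = \{U \cap V\}$ exhibits the nonempty open set $\IntT{U \cap V}$ as a member of $I_{ij}$.

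The only point needing any real care is the finite-union closure in (1): one must invoke monotonicity (not equality) of the interior under union, and remember that ideal membership requires the candidate set to be open to begin with, so one explicitly notes that $W_1 \cup W_2 \in \mathcal{T}$ as a union of opens. The remaining steps amount to unpacking definitions.
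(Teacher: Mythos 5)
Your proposal is correct and follows essentially the same route as the paper's proof: downward closure from the definition, finite unions via $F_1 \cup F_2$ and monotonicity of the interior, symmetry from $U \cap V = V \cap U$, and the singleton witness $F = \{U \cap V\}$ for nontriviality. Your extra remarks (explicitly checking $W_1 \cup W_2 \in \mathcal T$, and invoking \textup{(F$\cap$)} to confirm $U \cap V \in O(C_i,C_j)$) are sound and, if anything, slightly more careful than the paper's version.
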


\begin{proof}
(1) Downward closure follows immediately from the definition.
For finite unions, suppose
\(W_\ell \subseteq \Int_{\mathcal T}(\bigcup F_\ell)\)
with finite \(F_\ell \subseteq O(C_i,C_j)\) for \(\ell=1,2\).
Then
\[
W_1 \cup W_2
\subseteq
\Int_{\mathcal T}\!\bigl(\bigcup F_1\bigr)
\cup
\Int_{\mathcal T}\!\bigl(\bigcup F_2\bigr)
\subseteq
\Int_{\mathcal T}\!\bigl(\bigcup (F_1 \cup F_2)\bigr),
\]
and \(F_1 \cup F_2\) is finite, hence \(W_1 \cup W_2 \in I_{ij}\).

(2) Symmetry follows from \(U \cap V = V \cap U\).

(3) If \(\Int_{\mathcal T}(U \cap V) \neq \varnothing\), take
\(F = \{U \cap V\}\) to obtain a nonempty element of \(I_{ij}\).
\end{proof}

\begin{definition}[$R$-compatible neighbourhoods and witnesses]
Let \(R \in O(C_i,C_j)\).
Define
\[
N_R(C_i)
:= \{\, V \in N(C_i) \mid R \subseteq V \,\},
\qquad
N_R(C_j)
:= \{\, V \in N(C_j) \mid R \subseteq V \,\}.
\]
If \(W \in O(X,Y)\), a \emph{witness} for \(W\) is a pair
\((U,V)\) with \(U \in N(X)\), \(V \in N(Y)\), and \(W = U \cap V\).
\end{definition}

\begin{lemma}\label{lem:R-compat-nonempty}
If \(R \in O(C_i,C_j)\), then
\(N_R(C_i) \neq \varnothing\) and \(N_R(C_j) \neq \varnothing\).
\end{lemma}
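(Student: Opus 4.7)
The plan is to unwind the definition of the overlap family and show that the witnessing pair extracted from $R \in O(C_i, C_j)$ automatically satisfies the $R$-containment condition required by $N_R$. This is essentially a direct definition chase, so the proof will be short.

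First I would invoke the definition of $O(C_i,C_j)$: since $R \in O(C_i,C_j)$, there exist $U \in N(C_i)$ and $V \in N(C_j)$ with $R = U \cap V$ and $R \neq \varnothing$. The existence of this witnessing pair is precisely what membership in $O(C_i,C_j)$ encodes, so no additional hypothesis (such as $(N_{\mathrm{fin}})$ or the cross-point axiom) is needed.

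Next I would observe that $R = U \cap V \subseteq U$, so $R \subseteq U$; combined with $U \in N(C_i)$ this yields $U \in N_R(C_i)$ by the defining condition of $N_R(C_i) := \{V' \in N(C_i) \mid R \subseteq V'\}$. Symmetrically, $R \subseteq V$ and $V \in N(C_j)$ give $V \in N_R(C_j)$. Thus both $N_R(C_i)$ and $N_R(C_j)$ are nonempty, as required.

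The only mild subtlety worth flagging is notational rather than mathematical: one must be careful that the symbols $U,V$ used in the definition of $O(C_i,C_j)$ refer to elements of the neighbourhood families rather than to some generic feasible regions, so that the same $U$ which produces the overlap $R$ is automatically a candidate for membership in $N_R(C_i)$. There is no real obstacle; the statement is a tautological consequence of how $O(C_i,C_j)$ and $N_R$ are set up, and it serves mainly as a book-keeping lemma to be reused when selecting compatible neighbourhood witnesses in later constructions.
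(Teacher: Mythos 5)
Your proof is correct and follows essentially the same route as the paper: extract the witnessing pair $U \in N(C_i)$, $V \in N(C_j)$ with $R = U \cap V \neq \varnothing$ from the definition of $O(C_i,C_j)$, then note $R \subseteq U$ and $R \subseteq V$ to conclude membership in $N_R(C_i)$ and $N_R(C_j)$. No differences worth noting.
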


\begin{proof}
By definition of \(O(C_i,C_j)\), there exist
\(U_0 \in N(C_i)\), \(V_0 \in N(C_j)\) with
\(R = U_0 \cap V_0 \neq \varnothing\).
Hence \(R \subseteq U_0\) and \(R \subseteq V_0\), so
\(U_0 \in N_R(C_i)\) and \(V_0 \in N_R(C_j)\).
\end{proof}

\begin{definition}[Interaction profiles and remainder]\label{def:profile}
For \(R \in O(C_i,C_j)\), an \emph{interaction profile} is a tuple
\[
p = (R,U,V_i,V_j)
\]
such that
\[
U \in \mathcal T,\quad R \subseteq U,\quad
V_i \in N_R(C_i),\quad V_j \in N_R(C_j).
\]
Define
\[
R_i := U \cap V_i,
\qquad
R_j := U \cap V_j,
\]
and the \emph{remainder}
\[
R_E(p)
:= \Int_{\mathcal T}\!\bigl(U \setminus (V_i \cup V_j)\bigr) \in \mathcal T.
\]
Let \(\mathrm{Prof}_{ij}\) denote the set of all such profiles.
\end{definition}

\begin{lemma}[Basic properties of a profile]\label{lem:profile-basic}
For \(p=(R,U,V_i,V_j) \in \mathrm{Prof}_{ij}\):
\begin{enumerate}[label=\arabic*.]
\item \(R \subseteq U \cap V_i \cap V_j\).
\item \(R_E(p) \subseteq U\) and
\(R_E(p) \cap (V_i \cup V_j) = \varnothing\).
\item \(R_E(p)\) is open.
\end{enumerate}
\end{lemma}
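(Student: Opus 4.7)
The plan is to treat this as a direct unpacking of definitions, since all three items follow from (a) the defining conditions of an interaction profile in \Cref{def:profile}, and (b) two elementary facts about the topological interior operator, namely \(\Int_{\mathcal T}(A) \subseteq A\) and \(\Int_{\mathcal T}(A) \in \mathcal T\) for every \(A \subseteq C\). Nothing nontrivial about adjacency, overlap families, or the channel ideal is needed.

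For item (1), I would simply read off the three containments from the hypotheses of \Cref{def:profile}: by definition of a profile \(R \subseteq U\); by definition of \(N_R(C_i)\) we have \(V_i \in N(C_i)\) with \(R \subseteq V_i\); and symmetrically \(R \subseteq V_j\). Intersecting the three containments gives \(R \subseteq U \cap V_i \cap V_j\).

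For item (2), set \(A := U \setminus (V_i \cup V_j)\), so that \(R_E(p) = \Int_{\mathcal T}(A)\). The inclusion \(\Int_{\mathcal T}(A) \subseteq A \subseteq U\) yields \(R_E(p) \subseteq U\). For disjointness from \(V_i \cup V_j\), observe that \(A \cap (V_i \cup V_j) = \varnothing\) by the definition of set difference, hence \(R_E(p) \cap (V_i \cup V_j) \subseteq A \cap (V_i \cup V_j) = \varnothing\).

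For item (3), openness is immediate: the interior of any subset of \(C\) is an element of \(\mathcal T\) by \Cref{prop:topology}. The only conceivable obstacle would be if the definition of \(\Int_{\mathcal T}\) on arbitrary (non-open) subsets had been restricted in some way, but the excerpt uses the standard meaning (the largest open subset contained in the given set), so this point is automatic. Overall, I expect no genuine difficulty; the lemma functions as a sanity check that the data packaged into a profile behave as the later CCER constructions will require.
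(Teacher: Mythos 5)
Your proof is correct and follows exactly the route the paper takes: item (1) by unpacking \Cref{def:profile} and the definition of \(N_R(\cdot)\), item (2) from \(\Int_{\mathcal T}(A)\subseteq A\) together with the disjointness built into the set difference, and item (3) from openness of interiors. The paper's own proof states these three steps only telegraphically, so your version is simply a fully spelled-out rendering of the same argument.
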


\begin{proof}
(1) Immediate from the definitions.
(2) Follows from set-theoretic identities preserved under interior.
(3) By definition of interior in \(\mathcal T\).
\end{proof}

\begin{proposition}[Profiles contribute to the channel ideal]\label{prop:profile-contrib}
For \(p=(R,U,V_i,V_j) \in \mathrm{Prof}_{ij}\),
\[
\Int_{\mathcal T}(R_i \cap R_j)
=
\Int_{\mathcal T}(U \cap V_i \cap V_j)
\in I_{ij}.
\]
\end{proposition}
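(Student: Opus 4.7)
The plan splits the statement into the set-theoretic equality and the ideal-membership claim, handled separately.

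For the equality, I would just compute: by definition $R_i = U \cap V_i$ and $R_j = U \cap V_j$, so
\[
R_i \cap R_j = (U \cap V_i) \cap (U \cap V_j) = U \cap V_i \cap V_j
\]
using idempotence and commutativity of intersection. Applying $\IntT{\cdot}$ to both sides yields $\IntT{R_i \cap R_j} = \IntT{U \cap V_i \cap V_j}$ directly, with no topological input required.

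For the membership $\IntT{U \cap V_i \cap V_j} \in I_{ij}$, the strategy is to exhibit a one-element finite subset of $O(C_i,C_j)$ that dominates this interior. I would take $F := \{V_i \cap V_j\}$ and verify three things in sequence. First, $V_i \cap V_j \in O(C_i,C_j)$: since $p$ is a profile we have $V_i \in N_R(C_i) \subseteq N(C_i)$ and $V_j \in N_R(C_j) \subseteq N(C_j)$; the intersection is nonempty because $R \subseteq V_i \cap V_j$ and $R \neq \varnothing$ (as $R \in O(C_i,C_j)$); and $V_i \cap V_j \in \mathcal F$ by \textup{(F$\cap$)}, so $V_i \cap V_j$ sits in $O(C_i,C_j)$. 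Second, $\IntT{U \cap V_i \cap V_j} \subseteq \IntT{V_i \cap V_j} = \IntT{\bigcup F}$ by monotonicity of the interior operator, since $U \cap V_i \cap V_j \subseteq V_i \cap V_j$. Third, $\IntT{U \cap V_i \cap V_j}$ is open by definition of interior, so it is a legitimate element of $\mathcal T$.

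There is essentially no hard step: the equality is purely set-theoretic, and the membership reduces to a one-element witness, so the only substantive check is that $V_i \cap V_j$ is actually feasible and nonempty, which is where \textup{(F$\cap$)} and the nonemptiness of $R$ come in. The proposition is therefore better viewed as a basic compatibility statement between profiles and channel ideals rather than a result needing genuine topological work.
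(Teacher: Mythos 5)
Your proposal is correct and follows essentially the same route as the paper: both reduce the membership claim to the single overlap witness $V_i \cap V_j \in O(C_i,C_j)$ and then invoke monotonicity of the interior together with downward closure of $I_{ij}$. Your version is slightly more explicit in spelling out the trivial set-theoretic identity $R_i \cap R_j = U \cap V_i \cap V_j$ and in verifying nonemptiness and feasibility of $V_i \cap V_j$ via $R \neq \varnothing$ and \textup{(F$\cap$)}, which the paper leaves implicit.
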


\begin{proof}
Since \(V_i \in N(C_i)\) and \(V_j \in N(C_j)\) with
\(V_i \cap V_j \neq \varnothing\), the set
\(W := V_i \cap V_j\) lies in \(O(C_i,C_j)\).
Thus \(\Int_{\mathcal T}(W) \in I_{ij}\), and
\(\Int_{\mathcal T}(U \cap V_i \cap V_j) \subseteq \Int_{\mathcal T}(W)\).
By downward closure, the claim follows.
\end{proof}

\begin{definition}[Exterior (remainder) ideal]\label{def:ext-ideal}
Define
\[
E_{ij}
\;:=\;
\downarrow
\Bigl\{
\bigcup_{\ell=1}^{m} R_E(p_\ell)
\;\Big|\;
m \in \mathbb N,\ p_\ell \in \mathrm{Prof}_{ij}
\Bigr\}
\subseteq \mathcal T.
\]
\end{definition}

\begin{proposition}[$E_{ij}$ is an ideal]\label{prop:ext-ideal}
The family \(E_{ij}\) is downward closed, closed under finite unions, and
contains \(R_E(p)\) for every \(p \in \mathrm{Prof}_{ij}\).
\end{proposition}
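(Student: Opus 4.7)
The plan is to check the three assertions directly from the definition of $E_{ij}$, which is already built as the downward closure (in $\mathcal{T}$) of finite unions of remainders. The proof should be short and largely bookkeeping; the only point that requires even minimal care is that the constructions stay inside $\mathcal{T}$.

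First I would handle the containment claim. Given any profile $p \in \mathrm{Prof}_{ij}$, Lemma~\ref{lem:profile-basic}(3) gives $R_E(p) \in \mathcal{T}$. Taking $m=1$ and $p_1 = p$ in the definition of $E_{ij}$ produces the singleton generator $R_E(p)$, and since $R_E(p) \subseteq R_E(p)$ trivially, the downward closure yields $R_E(p) \in E_{ij}$.

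Next I would dispatch downward closure: this is immediate from the $\downarrow$ operator appearing in the definition. Explicitly, if $W \in E_{ij}$ and $W' \in \mathcal{T}$ with $W' \subseteq W$, then any finite-union witness $\bigcup_\ell R_E(p_\ell)$ for $W$ also witnesses $W'$, since $W' \subseteq W \subseteq \bigcup_\ell R_E(p_\ell)$.

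The finite-union step is the only one that involves a mild argument. Given $W_1, W_2 \in E_{ij}$, pick witnesses $W_k \subseteq \bigcup_{\ell=1}^{m_k} R_E(p_\ell^{(k)})$ for $k=1,2$. Concatenating the two finite lists of profiles gives a single finite family $\{p_\ell^{(1)}\} \cup \{p_\ell^{(2)}\} \subseteq \mathrm{Prof}_{ij}$, and
\[
W_1 \cup W_2 \;\subseteq\; \bigcup_{\ell=1}^{m_1} R_E(p_\ell^{(1)}) \,\cup\, \bigcup_{\ell=1}^{m_2} R_E(p_\ell^{(2)}),
\]
so $W_1 \cup W_2$ is dominated by an element of the generating family. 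Since $W_1, W_2 \in \mathcal{T}$, their union lies in $\mathcal{T}$, and hence in $E_{ij}$. I would not expect any real obstacle here; the whole argument is a formal consequence of having packaged $E_{ij}$ as an order-theoretic ideal generated by finite unions of remainders, mirroring the analogous proof for $I_{ij}$ in Proposition~\ref{prop:ideal-props}(1).
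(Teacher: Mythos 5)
Your proof is correct and follows essentially the same route as the paper's: containment via the $m=1$ generator, downward closure directly from the $\downarrow$ operator, and closure under finite unions by concatenating the two witness lists of profiles. The only difference is that you spell out the bookkeeping (membership in $\mathcal T$ at each step) slightly more explicitly than the paper does.
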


\begin{proof}
Let
\[
\mathcal B
:=
\Bigl\{
\bigcup_{\ell=1}^{m} R_E(p_\ell)
\mid
m \in \mathbb N,\ p_\ell \in \mathrm{Prof}_{ij}
\Bigr\}.
\]
Each element of \(\mathcal B\) is a finite union of open sets
(Lemma~\ref{lem:profile-basic}), hence belongs to \(\mathcal T\).
By definition, \(E_{ij} = \downarrow \mathcal B\).

Downward closure is immediate.
If \(W_a \subseteq B_a \in \mathcal B\) for \(a=1,2\), then
\(W_1 \cup W_2 \subseteq B_1 \cup B_2 \in \mathcal B\),
so \(W_1 \cup W_2 \in E_{ij}\).
Finally, for any profile \(p\), taking \(m=1\) shows
\(R_E(p) \in E_{ij}\).
\end{proof}

\section{Multi-step adjacency and mediated channels}

\begin{definition}[$k$-step overlap family]\label{def:k-step-overlap}
Fix \(k \in \mathbb N\).
Let \(\mathcal P_k(C_i,C_j)\) denote the collection of all length-\(k\) adjacency
chains
\[
P = (X_0,\ldots,X_k)
\]
such that \(X_0 = C_i\), \(X_k = C_j\), and
\(X_{r-1} \sim X_r\) for \(r = 1,\ldots,k\).

For each such chain, choose overlap witnesses
\(W_r \in O(X_{r-1},X_r)\) and define
\[
S(P,\{W_r\})
:=
\bigcup_{r=1}^{k} W_r .
\]
The \emph{$k$-step overlap family} is
\[
\mathcal O^{(k)}(C_i,C_j)
:=
\bigl\{
S(P,\{W_r\})
\mid
P \in \mathcal P_k(C_i,C_j),\;
W_r \in O(X_{r-1},X_r)
\bigr\}.
\]
\end{definition}

\begin{definition}[$k$-step channel ideals]\label{def:k-step-ideal}
For \(k \in \mathbb N\), define the \emph{$k$-step channel ideal}
\[
I^{(k)}_{ij}
:=
\downarrow
\Bigl\{
\Int_{\mathcal T}\!\Bigl(\bigcup_{t=1}^{m} S_t\Bigr)
\;\Big|\;
m \in \mathbb N_0,\;
S_t \in \mathcal O^{(k)}(C_i,C_j)
\Bigr\}.
\]

For \(K \in \mathbb N\), define the cumulative ideals
\[
I^{(\le K)}_{ij}
:=
\downarrow
\Bigl\{
\Int_{\mathcal T}\!\Bigl(\bigcup_{t=1}^{m} S_t\Bigr)
\;\Big|\;
m \in \mathbb N_0,\;
S_t \in \bigcup_{k=1}^{K} \mathcal O^{(k)}(C_i,C_j)
\Bigr\},
\]
and
\[
I^{(*)}_{ij}
:=
\downarrow
\Bigl\{
\Int_{\mathcal T}\!\Bigl(\bigcup_{t=1}^{m} S_t\Bigr)
\;\Big|\;
m \in \mathbb N_0,\;
S_t \in \bigcup_{k \ge 1} \mathcal O^{(k)}(C_i,C_j)
\Bigr\}.
\]
\end{definition}

Multi-step overlap families model mediated structural interaction beyond direct
adjacency.
While \(\mathcal O(C_i,C_j)\) captures immediate neighbourhood overlap, longer
chains describe influence transmitted through intermediate concepts.
Thus, two concepts may have no direct overlap, yet still interact via a stable
sequence of neighbourhood overlaps.
The associated channel ideals formalize the accumulation of such mediated
structure in a topologically controlled manner.

\begin{proposition}[Properties of multi-step channel ideals]
\label{prop:multi-step-ideals}
For all \(k \ge 1\):
\begin{enumerate}[label=\arabic*.]
\item
\(I^{(k)}_{ij}\), \(I^{(\le K)}_{ij}\), and \(I^{(*)}_{ij}\) are ideals in
\(\mathcal T\).
\item
\emph{(Monotonicity)}
\[
I^{(1)}_{ij}
\subseteq
I^{(2)}_{ij}
\subseteq
\cdots
\subseteq
I^{(*)}_{ij},
\qquad
I^{(k)}_{ij} \subseteq I^{(\le K)}_{ij}
\ \text{for } k \le K .
\]
\item
\emph{(Nontriviality)}
If some \(S \in \mathcal O^{(k)}(C_i,C_j)\) satisfies
\(\Int_{\mathcal T}(S) \neq \varnothing\), then
\(I^{(k)}_{ij} \neq \{\varnothing\}\).
\item
\emph{(Base case)}
\(I^{(1)}_{ij} = I_{ij}\).
\end{enumerate}
\end{proposition}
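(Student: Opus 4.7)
The plan is to dispatch parts (1), (3), and (4) by direct unpacking of the definitions and to concentrate the real work on the monotonicity claim in part (2). For (1), each of the three families is presented as $\downarrow$ of a collection of open sets, so downward closure is automatic; closure under finite unions follows by concatenating generating lists, exactly as in Proposition~\ref{prop:ideal-props}(1). If $W_\ell \subseteq \Int_{\mathcal T}(\bigcup_{t=1}^{m_\ell} S^{(\ell)}_t)$ for $\ell=1,2$, then the pooled list of length $m_1+m_2$ yields a single generator whose interior contains $W_1 \cup W_2$. For (4), $\mathcal P_1(C_i,C_j)$ consists of the single chain $(C_i,C_j)$, so $\mathcal O^{(1)}(C_i,C_j) = O(C_i,C_j)$ and $I^{(1)}_{ij} = I_{ij}$ by direct comparison of Definitions~\ref{def:overlap-channel} and~\ref{def:k-step-ideal}. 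For (3), taking $m=1$ with the single generator $S$ exhibits $\Int_{\mathcal T}(S)$ as a nonempty element of $I^{(k)}_{ij}$.

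The substantive step is the chain of inclusions $I^{(k)}_{ij} \subseteq I^{(k+1)}_{ij}$. My plan is to use a vertex-repetition trick: given any $k$-step chain $(X_0,\ldots,X_k)$ from $C_i$ to $C_j$ with witnesses $(W_1,\ldots,W_k)$, prepend the trivial step to form $(X_0,X_0,X_1,\ldots,X_k)$. Because $X_0 = C_i$ is a concept, (N0) supplies some $U \in N(C_i)$, and since $U \in \mathcal F$ forces $U \neq \varnothing$, the pair $(U,U)$ witnesses $C_i \sim C_i$ by Proposition~\ref{prop:adj-basic}(1) and provides $W_0 := U \in O(C_i,C_i)$. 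The resulting $(k+1)$-chain yields the overlap $W_0 \cup W_1 \cup \cdots \cup W_k$, which dominates $\bigcup_{r=1}^k W_r$; thus every generator of $I^{(k)}_{ij}$ is contained in a generator of $I^{(k+1)}_{ij}$, and downward closure finishes the inclusion. The cumulative statements $I^{(k)}_{ij} \subseteq I^{(\le K)}_{ij}$ (for $k \le K$) and $I^{(k)}_{ij} \subseteq I^{(*)}_{ij}$ are then immediate, since the generating families on the right-hand side already contain $\mathcal O^{(k)}(C_i,C_j)$.

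The only real obstacle I anticipate is ensuring that the vertex-repetition step is legitimate, which hinges on the endpoints of chains being concepts (elements of $C$) rather than arbitrary subsets: this is precisely what lets (N0) deliver the required self-overlap witness. Even if intermediate vertices were allowed to be arbitrary nonempty subsets without guaranteed reflexivity, repetition at the fixed endpoint $X_0 = C_i$ would still go through, so the argument is robust to that reading of Definition~\ref{def:k-step-overlap}.
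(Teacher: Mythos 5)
Your proposal is correct and follows essentially the same route as the paper's proof: parts (1), (3), and (4) by direct unpacking, and part (2) by a vertex-repetition trick (you prepend a repeated $C_i$ where the paper appends a repeated $C_j$, which is immaterial). If anything, your version of (2) is slightly more careful than the paper's, which asserts $\mathcal O^{(k)}(C_i,C_j)\subseteq\mathcal O^{(k+1)}(C_i,C_j)$ outright even though the extended chain produces the possibly larger set $W_0\cup\bigcup_{r=1}^{k}W_r$; your observation that each generator of $I^{(k)}_{ij}$ is merely \emph{dominated} by a generator of $I^{(k+1)}_{ij}$, with downward closure completing the inclusion, is the cleaner way to close that step.
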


\begin{proof}
(1)
Each generator of \(I^{(k)}_{ij}\) is an interior of an finite union of overlap-witness regions (not necessarily open),
hence open since it is an interior.
The ideal generated by any family of open sets is downward closed and closed
under finite unions.
The same argument applies to \(I^{(\le K)}_{ij}\) and \(I^{(*)}_{ij}\).

\smallskip

(2)
If \(S \in \mathcal O^{(k)}(C_i,C_j)\), then \(S\) arises from a length-\(k\)
adjacency chain.
By repeating the final concept, the same overlaps define a length-\((k+1)\)
chain, hence
\[
\mathcal O^{(k)}(C_i,C_j)
\subseteq
\mathcal O^{(k+1)}(C_i,C_j).
\]
The inclusions of ideals follow immediately.
The cumulative case is analogous.

\smallskip

(3)
If \(\Int_{\mathcal T}(S) \neq \varnothing\) for some
\(S \in \mathcal O^{(k)}(C_i,C_j)\), then
\(\Int_{\mathcal T}(S)\) is a nonempty generator of \(I^{(k)}_{ij}\).

\smallskip

(4)
By definition,
\(\mathcal O^{(1)}(C_i,C_j) = \mathcal O(C_i,C_j)\),
so the generated ideals coincide.
\end{proof}

\begin{definition}[Concept space]\label{def:concept-space}
Let \(C\) be a set of concepts.
A \emph{concept space} is a triple \((C,\mathcal F,N)\) consisting of:
\begin{enumerate}[label=\arabic*.]
\item
A feasible family \(\mathcal F \subseteq \mathcal P(C)\) satisfying:
\begin{align*}
&\textup{(F1)}\quad
\forall P \in C\ \exists\,U \in \mathcal F\ \text{with } P \in U,\\
&\textup{(F$\cap$)}\quad
U,V \in \mathcal F,\ U \cap V \neq \varnothing
\ \Rightarrow\ U \cap V \in \mathcal F,\\
&\textup{(F$\emptyset$)}\quad
\varnothing \notin \mathcal F.
\end{align*}

\item
A neighbourhood assignment
\(N : C \to \mathcal P(\mathcal F)\) satisfying:
\begin{align*}
&\textup{(N0)}\quad
N(P) \neq \varnothing
\ \text{and}\ (U \in N(P) \Rightarrow P \in U),\\
&\textup{(N$\downarrow$)}\quad
U \in N(P),\ V \in \mathcal F,\ P \in V \subseteq U
\ \Rightarrow\ V \in N(P),\\
&\textup{(N$\cap$)}\quad
U,V \in N(P)
\ \Rightarrow\ U \cap V \in N(P).
\end{align*}

\item
The \emph{external topology}
\[
\mathcal T(N)
:=
\{\, U \subseteq C
\mid
\forall P \in U\ \exists\,V \in N(P)\ \text{with } V \subseteq U
\,\}.
\]
For any \(D \subseteq C\), the \emph{internal topology} on \(D\) is the subspace
topology
\[
\mathcal T_D
:=
\{\, U \cap D \mid U \in \mathcal T(N) \,\}.
\]
\end{enumerate}
\end{definition}

It is understood that optional finitary coherence assumptions (such as
\((N_{\mathrm{fin}})\)) may be imposed when one requires nontrivial shared
structure for finite collections of concepts.

\newpage

%==========
%page15
%===================

\section{Emergent regions and internal topology}

\begin{definition}[CCER data and emergent region]
Let $C_i \neq C_j$ be concepts with $C_i \sim C_j$.
Fix an interaction profile
\[
p = (R, U, V_i, V_j) \in \mathrm{Prof}_{ij}.
\]

A \emph{finite core} for $C_i$ (respectively $C_j$) is a finite set
\[
K_i \subseteq U \cap V_i \qquad (\text{respectively } K_j \subseteq U \cap V_j)
\]
such that
\[
N(K_i) \neq \varnothing, \qquad N(K_j) \neq \varnothing,
\]
and
\[
K_i \cup K_j \subseteq U.
\]

Given such cores, choose \emph{core witnesses}
\[
W_i \in N(K_i) \cap N(C_i), \qquad
W_j \in N(K_j) \cap N(C_j).
\]
By downward closure of neighbourhoods, we may assume
\[
W_i \subseteq V_i, \qquad W_j \subseteq V_j.
\]

The associated \emph{emergent region} is defined by
\[
C_k := \operatorname{Int}_{\mathcal T}\bigl(W_i \cap W_j \cap R_E(p)\bigr),
\]
where the \emph{remainder} of the profile is
\[
R_E(p) := \operatorname{Int}_{\mathcal T}\bigl(U \setminus (V_i \cup V_j)\bigr).
\]

Throughout this section we assume the finitary coherence axiom $(N_{\mathrm{fin}})$.

The emergent object $C_k$ is an open region of the external topology $\mathcal T$,
not an element of $C$.
In HDCS, regions of this form may be reified as new atomic concepts at the next stage.
In that case, $C_k$ appears as a single concept in $C^{(n+1)}$,
equipped with the neighbourhood structure induced from $\mathcal T$.
\end{definition}

\begin{remark}
The CCER construction is non-deterministic.
Distinct choices of profile data $(R,U,V_i,V_j)$, finite cores,
or witnesses $W_i, W_j$ may yield distinct emergent regions.
HDCS specifies structural conditions under which emergence is permitted,
not a unique outcome.
\end{remark}

\newpage

\begin{definition}[Local finite neighbourhood condition (CCER)]
A profile $p = (R, U, V_i, V_j)$ is said to \emph{admit CCER} if for every choice
of finite cores
\[
K_i \subseteq U \cap V_i, \qquad K_j \subseteq U \cap V_j,
\]
there exist witnesses
\[
W_i \in N(K_i) \cap N(C_i), \qquad
W_j \in N(K_j) \cap N(C_j),
\]
such that
\[
W_i \cap W_j \cap R_E(p) \neq \varnothing.
\]

In this case, every choice of CCER data as above determines a
(not necessarily unique) emergent region.
\end{definition}

\begin{lemma}[Basic properties of emergent regions]\label{lem:emergent-basic}
If \(C_k\neq\varnothing\), then:
\begin{enumerate}[label=\arabic*.]
\item \(C_k\) is open and \(C_k \subseteq U\).
\item \(C_k \cap \{C_i,C_j\} = \varnothing\).
\item \(C_k \subseteq W_i \cap W_j\).
\item Every open set \(W \subseteq C_k\) lies in the channel ideal \(I_{ij}\).
\end{enumerate}
\end{lemma}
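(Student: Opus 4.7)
The plan is to treat the four claims sequentially; each is a short bookkeeping consequence of the containments built into the CCER data, together with axiom (N0) and the definition of the channel ideal.

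For claim (1), I would note first that $C_k$ is the interior of a set and is therefore open by definition. For the inclusion $C_k \subseteq U$, I would chain
\[
C_k \;\subseteq\; W_i \cap W_j \cap R_E(p) \;\subseteq\; R_E(p) \;\subseteq\; U \setminus (V_i \cup V_j) \;\subseteq\; U,
\]
using that the interior is always contained in its argument and the explicit form $R_E(p) = \operatorname{Int}_{\mathcal T}(U \setminus (V_i \cup V_j))$.

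For claim (2), I would invoke (N0): since $V_i \in N(C_i)$, we have $C_i \in V_i$, and likewise $C_j \in V_j$. But the chain from (1) shows $C_k \subseteq U \setminus (V_i \cup V_j)$, which excludes both points, so $C_k \cap \{C_i, C_j\} = \varnothing$. Claim (3) is then immediate from $C_k \subseteq W_i \cap W_j \cap R_E(p) \subseteq W_i \cap W_j$.

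Claim (4) is the only step requiring slight care, and is where the hypothesis $C_k \neq \varnothing$ is essential. The plan is to use it to guarantee that $W_i \cap W_j$ qualifies as an overlap witness: since $\varnothing \neq C_k \subseteq W_i \cap W_j$, together with $W_i \in N(C_i)$ and $W_j \in N(C_j)$ this places $W_i \cap W_j \in O(C_i, C_j)$, so $\operatorname{Int}_{\mathcal T}(W_i \cap W_j)$ is a generator of $I_{ij}$. For any open $W \subseteq C_k$ we then have $W \subseteq W_i \cap W_j$, and because $W$ is open it lies inside $\operatorname{Int}_{\mathcal T}(W_i \cap W_j)$. Downward closure of $I_{ij}$ then yields $W \in I_{ij}$. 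The only real obstacle is ensuring that $W_i \cap W_j$ is nonempty before quoting Definition~\ref{def:overlap-channel}; once the hypothesis $C_k \neq \varnothing$ is used to settle this, the rest is definitional.
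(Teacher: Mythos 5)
Your proposal is correct and follows essentially the same route as the paper's proof: interior-of-a-subset arguments for (1)--(3), and membership of $W_i \cap W_j$ in $O(C_i,C_j)$ plus downward closure of $I_{ij}$ for (4). You are in fact slightly more careful than the paper on two points it leaves implicit, namely invoking \textup{(N0)} to place $C_i \in V_i$ and $C_j \in V_j$ in claim (2), and using the hypothesis $C_k \neq \varnothing$ to certify $W_i \cap W_j \neq \varnothing$ before quoting Definition~\ref{def:overlap-channel} in claim (4).
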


\begin{proof}
(1) Openness and containment in \(U\) follow from the definition of
\(C_k\) and Lemma~\ref{lem:profile-basic}.

(2) Since \(R_E(p)\subseteq U\setminus(V_i\cup V_j)\),
we have \(C_k\cap(V_i\cup V_j)=\varnothing\), and hence
\(C_k\cap\{C_i,C_j\}=\varnothing\).

(3) Immediate from the definition.

(4) Since \(W_i\cap W_j \in O(C_i,C_j)\),
\(\Int_{\mathcal T}(W_i\cap W_j)\in I_{ij}\).
Because \(C_k\subseteq \Int_{\mathcal T}(W_i\cap W_j)\),
downward closure of \(I_{ij}\) yields the claim.
\end{proof}

\begin{lemma}[External emergence with internal realization]
\label{lem:emergence-internal}
Let $p = (R,U,V_i,V_j) \in \mathrm{Prof}_{ij}$ admit CCER, and let
$W_i \in N(K_i)$ and $W_j \in N(K_j)$ be witnesses such that
\[
\operatorname{Int}_{\mathcal T}(W_i \cap W_j \cap R_E(p)) \neq \varnothing.
\]
Define
\[
C_k := \operatorname{Int}_{\mathcal T}(W_i \cap W_j \cap R_E(p)).
\]
Then:
\begin{enumerate}
\item $C_k$ is an open region disjoint from $C_i$ and $C_j$.
\item Every open subset of $C_k$ lies in the channel ideal $I_{ij}$.
\item If $\mathcal B_i$ and $\mathcal B_j$ are bases for the subspace
topologies on $W_i$ and $W_j$, respectively, then
\[
\mathcal B_{C_k}
:=
\{ (B_i \cap B_j) \cap C_k
\mid
B_i \in \mathcal B_i,\ B_j \in \mathcal B_j \}
\]
is a basis for the internal topology
\[
\mathcal T_{C_k} := \{ U \cap C_k \mid U \in \mathcal T \}.
\]
\end{enumerate}
\end{lemma}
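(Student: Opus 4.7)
The plan is to dispatch parts (1) and (2) by direct quotation of Lemma~\ref{lem:emergent-basic}, since the emergent region $C_k$ in this statement is defined exactly as in that lemma. Openness and disjointness from $\{C_i, C_j\}$ are items 1 and 2 there; the claim that every open subset of $C_k$ lies in $I_{ij}$ is item 4, which rests on $W_i \cap W_j \in O(C_i,C_j)$ together with downward closure of the channel ideal. So the only genuine work is part (3).

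For part (3), the geometric crux is $C_k \subseteq W_i \cap W_j$, given by Lemma~\ref{lem:emergent-basic}(3). This means both subspace restrictions $\mathcal T \to \mathcal T_{W_i}$ and $\mathcal T \to \mathcal T_{W_j}$ factor through $\mathcal T_{C_k}$ in a compatible way. First I would verify that each candidate basis element lies in $\mathcal T_{C_k}$: writing $B_i = O_i \cap W_i$ and $B_j = O_j \cap W_j$ for some $O_i, O_j \in \mathcal T$, the inclusion $C_k \subseteq W_i \cap W_j$ collapses $(B_i \cap B_j) \cap C_k$ to $(O_i \cap O_j) \cap C_k$, which is visibly in $\mathcal T_{C_k}$. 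Next I would verify the covering property. Given $U \in \mathcal T$ and $P \in U \cap C_k$, openness of $C_k$ makes $U \cap C_k$ open in $\mathcal T$; since $U \cap C_k \subseteq W_i$, it is open in $\mathcal T_{W_i}$, so the basis $\mathcal B_i$ supplies some $B_i \ni P$ with $B_i \subseteq U \cap C_k$. The same reasoning on $W_j$ yields $B_j \ni P$ with $B_j \subseteq U \cap C_k$. Then $P \in (B_i \cap B_j) \cap C_k \subseteq U \cap C_k$, and ranging $P$ over $U \cap C_k$ expresses it as a union of members of $\mathcal B_{C_k}$.

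The main obstacle is purely notational: keeping straight which topology each of $B_i, B_j, O_i, O_j$ belongs to, and consistently applying the identity $(O \cap W_i) \cap C_k = O \cap C_k$ (valid because $C_k \subseteq W_i$) so that the two nested subspace restrictions telescope. Once this bookkeeping is in place, the basis property for $\mathcal B_{C_k}$ is essentially transitivity of the subspace topology combined with the fact that a basis on a superspace restricts to a basis on any subspace it contains.
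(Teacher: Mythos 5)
Your proposal is correct and follows essentially the same route as the paper: parts (1) and (2) are delegated to Lemma~\ref{lem:emergent-basic}, and part (3) is the same two-direction basis verification hinging on $C_k \subseteq W_i \cap W_j$. The only cosmetic difference is that in the covering step you locate basis elements inside $U \cap C_k$ (viewed as an open subset of $W_i$, resp.\ $W_j$) where the paper uses $U \cap W_i$ and $U \cap W_j$; both choices telescope to the same conclusion.
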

\begin{proof}
Claims (1) and (2) follow immediately from Lemma~5.

For (3), let $\mathcal T_{C_k} := \{U \cap C_k : U \in \mathcal T\}$ be the
subspace topology on $C_k$.  Since
\[
C_k \subseteq W_i \cap W_j,
\]
every open set of $C_k$ is of the form
\[
O = U \cap C_k
\]
for some $U \in \mathcal T$, and hence
\[
O = (U \cap W_i \cap W_j) \cap C_k.
\]
Set $U_i := U \cap W_i$ and $U_j := U \cap W_j$.  Then $U_i$ is open in the
subspace topology on $W_i$, and $U_j$ is open in the subspace topology on $W_j$,
so by the base property there exist $B_i \in \mathcal B_i$ and $B_j \in \mathcal B_j$
such that
\[
x \in B_i \subseteq U_i
\qquad\text{and}\qquad
x \in B_j \subseteq U_j
\]
for any $x \in O$.  Therefore,
\[
x \in (B_i \cap B_j) \cap C_k
\subseteq (U_i \cap U_j) \cap C_k
= (U \cap W_i \cap W_j) \cap C_k
= O.
\]
This shows that every $O \in \mathcal T_{C_k}$ is a union of sets of the form
$(B_i \cap B_j)\cap C_k$ with $B_i \in \mathcal B_i$ and $B_j \in \mathcal B_j$.

Conversely, if $B_i \in \mathcal B_i$ and $B_j \in \mathcal B_j$, then $B_i$ is open
in $W_i$ and $B_j$ is open in $W_j$ (with their subspace topologies), so
$B_i \cap B_j$ is open in $W_i \cap W_j$, and hence $(B_i \cap B_j)\cap C_k$ is open
in $C_k$.  Thus $\mathcal B_{C_k}$ consists of open sets in $\mathcal T_{C_k}$ and
refines every open set of $\mathcal T_{C_k}$, so it is a basis for $\mathcal T_{C_k}$.
\end{proof}

\begin{proposition}[Inherited and maximal internal topology]
\label{prop:max-internal}
Let $\mathcal B_i$ and $\mathcal B_j$ be bases for the subspace topologies on
$W_i$ and $W_j$, respectively, and let
\[
\mathcal B_{C_k}
:=
\{ (B_i \cap B_j) \cap C_k
\mid
B_i \in \mathcal B_i,\ B_j \in \mathcal B_j \}.
\]
Then $\mathcal B_{C_k}$ is a basis for the internal topology
\[
\mathcal T_{C_k} := \{ U \cap C_k \mid U \in \mathcal T \}.
\]
Moreover, any topology on $C_k$ whose basis consists solely of restrictions
of opens from $W_i$ or $W_j$ is contained in $\mathcal T_{C_k}$.
\end{proposition}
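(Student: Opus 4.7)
The plan is to split the statement into its two assertions. The claim that $\mathcal{B}_{C_k}$ is a basis for $\mathcal{T}_{C_k}$ has already been established in Lemma~\ref{lem:emergence-internal}(3), so I would simply invoke that lemma rather than repeat the base-of-a-subspace argument. What remains is the maximality clause, which I would prove by a short direct unpacking of the subspace topology and the containment $C_k \subseteq W_i \cap W_j$.

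For the maximality clause, I would fix an arbitrary topology $\tau$ on $C_k$ generated by a basis $\mathcal{B}_\tau$ whose elements all have the form $O \cap C_k$, where each $O$ is open in $W_i$ or in $W_j$ (each regarded as a subspace of $(C,\mathcal{T})$). The goal is to show $\tau \subseteq \mathcal{T}_{C_k}$. I would first verify that every element of $\mathcal{B}_\tau$ already lies in $\mathcal{T}_{C_k}$. Suppose $O$ is open in the subspace $W_i$, so $O = U \cap W_i$ for some $U \in \mathcal{T}$. By Lemma~\ref{lem:emergent-basic}(3) we have $C_k \subseteq W_i \cap W_j$, and in particular $C_k \subseteq W_i$, so
\[
O \cap C_k = (U \cap W_i) \cap C_k = U \cap C_k \in \mathcal{T}_{C_k}.
\]
An identical calculation, using $C_k \subseteq W_j$, handles the case where $O$ is open in $W_j$.

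To finish, I would invoke closure of $\mathcal{T}_{C_k}$ under arbitrary unions: every $\tau$-open set is by definition a union of elements of $\mathcal{B}_\tau$, each of which is $\mathcal{T}_{C_k}$-open by the previous step, so the union itself belongs to $\mathcal{T}_{C_k}$. This gives $\tau \subseteq \mathcal{T}_{C_k}$ and completes the proof. The main obstacle here is interpretive rather than substantive: one must fix a consistent reading of "open in $W_i$" as referring to the subspace topology induced by the external $\mathcal{T}$, not to any independently chosen topology on $C_k$. Once that convention is in place, the inclusion $C_k \subseteq W_i \cap W_j$ makes the subspace restriction collapse without any further topological work.
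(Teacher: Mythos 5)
Your proposal is correct and follows essentially the same route as the paper: both cite Lemma~\ref{lem:emergence-internal}(3) for the basis claim and prove maximality by writing each basic open of the candidate topology as a restriction of an external open and then closing under unions. Your version is in fact slightly more careful, since you make explicit the step $(U \cap W_i)\cap C_k = U\cap C_k$ via the containment $C_k \subseteq W_i \cap W_j$, which the paper's proof leaves implicit.
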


\begin{proof}
The basis property follows directly from Lemma~6(3).

For maximality, let $\mathcal T'$ be any topology on $C_k$ whose basis consists
of sets of the form $O \cap C_k$, where $O$ is open in $W_i$ or in $W_j$.
Since every such $O$ is of the form $U \cap W_i$ or $U \cap W_j$ for some
$U \in \mathcal T$, it follows that every basic open set of $\mathcal T'$
is contained in $\mathcal T_{C_k}$.

Hence every open set of $\mathcal T'$ is a union of sets in $\mathcal T_{C_k}$,
and therefore
\[
\mathcal T' \subseteq \mathcal T_{C_k}.
\]
This shows that $\mathcal T_{C_k}$ is the maximal topology on $C_k$ whose open
sets are inherited from the subspace topologies on $W_i$ and $W_j$.
\end{proof}

\begin{remark}[Interpretation and limits of CCER]
The CCER rule gives a minimal sufficient condition for the emergence of a new
concept from overlap and residual structure. It does not enforce uniqueness or
maximality: multiple distinct emergent regions may arise from the same profile.
Emergence is declared only when nontrivial remainder structure exists; if the
remainder is empty or witnesses fail to intersect, no emergence occurs.

Importantly, emergence in this framework need not arise from opposition between
distinct concepts. It may also result from internal insufficiency, where existing
neighbourhoods fail to cover the demands of the interaction context.

Topologically, CCER is analogous to gluing constructions in domain theory,
formal concept analysis, and sheaf theory, where new objects are defined by
coherence across overlapping local data.
\end{remark}

\section{Stages (Heraclitean development)}\label{subsec:stages}

At stage \(i\) we work with a concept space
\[
\bigl(C^{(i)},\mathcal F^{(i)},N^{(i)}\bigr)
\]
in the sense of Definition~\ref{def:concept-space}, with external topology
\[
\mathcal T^{(i)} := \mathcal T\!\bigl(N^{(i)}\bigr).
\]

\paragraph{Stage adjacency and overlap.}
For \(C_a,C_b \in C^{(i)}\) define stage-\(i\) adjacency by
\[
C_a \sim_i C_b
\iff
\exists\,U\in N^{(i)}(C_a),\ \exists\,V\in N^{(i)}(C_b)\ \text{with}\ U\cap V\neq\varnothing .
\]
The corresponding overlap family is
\[
O_i(C_a,C_b)
:=
\{\,U\cap V \in \mathcal F^{(i)}
\mid
U\in N^{(i)}(C_a),\ V\in N^{(i)}(C_b),\ U\cap V\neq\varnothing
\,\}.
\]

\begin{definition}[Stage internal restriction]\label{def:stage-internal}
Fix a stage $i$ and let $U \in \mathcal T^{(i)}$.
Define the restricted neighbourhood assignment by
\[
N^{(i)}_U(C_\ell)
:=
\{ W \cap U \mid W \in N^{(i)}(C_\ell) \},
\qquad C_\ell \in C^{(i)}.
\]
The induced topology on $U$ is
\[
\mathcal T^{(i)}_U := \{ W \cap U \mid W \in \mathcal T^{(i)} \}.
\]
\end{definition}

\begin{proposition}
\label{prop:stage-subspace}
The topology $\mathcal T^{(i)}_U$ is the topology generated by the restricted
neighbourhood assignment $N^{(i)}_U$.
Equivalently, $\mathcal T^{(i)}_U$ is the subspace topology of
$(C^{(i)}, \mathcal T^{(i)})$ on $U$.
\end{proposition}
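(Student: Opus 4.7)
The plan is to establish both asserted identifications. The second equivalence, with the subspace topology, is essentially definitional: the formula $\mathcal T^{(i)}_U = \{W \cap U : W \in \mathcal T^{(i)}\}$ in Definition~\ref{def:stage-internal} is exactly how the subspace topology of $(C^{(i)}, \mathcal T^{(i)})$ on $U$ is defined, so no work is required there. The substantive content is the first claim, namely that this collection of sets coincides with the topology generated on $U$ by $N^{(i)}_U$, where in line with Definition~\ref{def:opens} a set $W \subseteq U$ is $N^{(i)}_U$-open iff for every $P \in W$ there exists $V \in N^{(i)}(P)$ with $V \cap U \subseteq W$.

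I would prove the two inclusions separately. For subspace-open $\Rightarrow$ $N^{(i)}_U$-open, start from $W = \tilde W \cap U$ with $\tilde W \in \mathcal T^{(i)}$; for each $P \in W$ openness of $\tilde W$ gives some $V \in N^{(i)}(P)$ with $V \subseteq \tilde W$, whence $V \cap U \subseteq \tilde W \cap U = W$, so $V \cap U \in N^{(i)}_U(P)$ is the required witness. For the converse, given a set $W \subseteq U$ with the $N^{(i)}_U$-neighbourhood property, choose for each $P \in W$ a feasible $V_P \in N^{(i)}(P)$ with $V_P \cap U \subseteq W$; the goal is to promote this data to $\mathcal T^{(i)}$-openness of $W$ itself, which immediately gives $W = W \cap U \in \mathcal T^{(i)}_U$.

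The principal obstacle is that elements of $N^{(i)}(P)$ are feasible regions, not assumed to be open in $\mathcal T^{(i)}$, so $V_P$ need not sit inside $U$; one only knows $V_P \cap U \subseteq W$. This is resolved by exploiting the standing hypothesis $U \in \mathcal T^{(i)}$: since $P \in U$, Definition~\ref{def:opens} supplies some $U_P \in N^{(i)}(P)$ with $U_P \subseteq U$, and then \textup{(N$\cap$)} yields $V_P \cap U_P \in N^{(i)}(P)$ together with $V_P \cap U_P \subseteq V_P \cap U \subseteq W$. Applied pointwise, this verifies the openness criterion for $W$, so $W \in \mathcal T^{(i)}$ and hence $W \in \mathcal T^{(i)}_U$. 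The key conceptual point is that the openness of $U$ compensates for the absence of any cross-point axiom on neighbourhoods; no appeal to \textup{(N$\Rightarrow$)} is needed, and the argument uses only the basic neighbourhood axioms \textup{(N0)}, \textup{(N$\downarrow$)}, and \textup{(N$\cap$)} already carried by the concept space structure.
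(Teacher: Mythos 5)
Your proof is correct, and your forward inclusion (subspace\mbox{-}open $\Rightarrow$ $N^{(i)}_U$\mbox{-}open) is essentially identical to the paper's. For the converse, however, you take a genuinely different route, and it is the more careful one. The paper picks witnesses $W_A \in N^{(i)}(C_A)$ with $W_A \cap U \subseteq O$, forms $V := \bigcup_{C_A \in O} W_A$, and asserts $V \in \mathcal T^{(i)}$ before concluding $O = V \cap U$. That assertion is not warranted by \textup{(N0)}, \textup{(N$\downarrow$)}, \textup{(N$\cap$)} alone: elements of $N^{(i)}(C_A)$ need not be open (this is precisely why Proposition~\ref{prop:base} has to assume the cross-point axiom \textup{(N$\Rightarrow$)} before neighbourhoods form a base), so a point of $V \setminus U$ may have no neighbourhood contained in $V$. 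Your argument sidesteps this entirely by exploiting the standing hypothesis $U \in \mathcal T^{(i)}$: choosing $U_P \in N^{(i)}(P)$ with $U_P \subseteq U$ and shrinking via \textup{(N$\cap$)} to $V_P \cap U_P \in N^{(i)}(P)$ with $V_P \cap U_P \subseteq V_P \cap U \subseteq O$ shows that $O$ is already open in $\mathcal T^{(i)}$, whence $O = O \cap U \in \mathcal T^{(i)}_U$. This buys you slightly more than the stated claim (every $N^{(i)}_U$\mbox{-}open subset of an open $U$ is itself $\mathcal T^{(i)}$\mbox{-}open), it makes visible exactly where the openness of $U$ is used, and it repairs the one step in the paper's own proof that does not follow from the basic axioms. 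Your observation that the second equivalence in the statement is definitional, given Definition~\ref{def:stage-internal}, is also correct.
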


\begin{proof}
First let $O \in \mathcal T^{(i)}_U$, so $O = V \cap U$ for some
$V \in \mathcal T^{(i)}$.
If $C_A \in O$, then $C_A \in V$, and since $V$ is open there exists
$W \in N^{(i)}(C_A)$ with $W \subseteq V$.
Hence $W \cap U \in N^{(i)}_U(C_A)$ and
\[
W \cap U \subseteq V \cap U = O.
\]
Thus $O$ is open in the topology generated by $N^{(i)}_U$.

Conversely, suppose $O \subseteq U$ is open for $N^{(i)}_U$.
Then for each $C_A \in O$ there exists $W_A \in N^{(i)}(C_A)$ such that
$W_A \cap U \subseteq O$.
Set
\[
V := \bigcup_{C_A \in O} W_A.
\]
Then $V \in \mathcal T^{(i)}$, and moreover
\[
V \cap U = \bigcup_{C_A \in O} (W_A \cap U) \subseteq O.
\]
Since each $C_A \in O$ lies in $W_A \cap U$, we have
\[
O = \bigcup_{C_A \in O} (W_A \cap U) = V \cap U,
\]
and hence $O \in \mathcal T^{(i)}_U$.
\end{proof}

\paragraph{Stage channel ideal.}
For $C_a, C_b \in C^{(i)}$ define the stage-$i$ channel ideal
\[
I^{(i)}_{ab}
:=
\downarrow
\left\{
\operatorname{Int}_{\mathcal T^{(i)}}\!\left(\bigcup F\right)
\;\middle|\;
F \subseteq O_i(C_a,C_b)\ \text{finite}
\right\}.
\]

\begin{definition}[Neighbourhoods compatible with a region]
For $R \in \mathcal T^{(i)}$ and $C \in C^{(i)}$, define
\[
N^{(i)}_R(C)
:=
\{ V \in N^{(i)}(C) \mid R \subseteq V \}.
\]
\end{definition}

\paragraph{Stage profiles and remainder.}
For $R \in O_i(C_a,C_b)$, a \emph{stage-$i$ profile} is a tuple
\[
p^{(i)} = (R, U, V_a, V_b)
\]
with
\[
U \in \mathcal T^{(i)}, \qquad
R \subseteq U, \qquad
V_a \in N^{(i)}_R(C_a), \qquad
V_b \in N^{(i)}_R(C_b),
\]
where neighbourhood compatibility is as in Definition~6.2.

The associated remainder is
\[
R^{(i)}_E(p^{(i)})
:=
\operatorname{Int}_{\mathcal T^{(i)}}\!\bigl(U \setminus (V_a \cup V_b)\bigr).
\]

\providecommand{\IntTi}[1]{\operatorname{Int}_{\mathcal T^{(i)}}\!\left(#1\right)}

\newpage
\begin{definition}[Open union of stage remainders]\label{def:stage-RE-union}
Let $\mathrm{Prof}^{(i)}_{ab}$ denote the set of stage-$i$ profiles for $(C_a,C_b)$.
Define
\[
R^{(i),ab}_{E,\mathrm{uni}}
:=
\bigcup_{p^{(i)} \in \mathrm{Prof}^{(i)}_{ab}} R^{(i)}_E(p^{(i)})
\in \mathcal T^{(i)}.
\]
\end{definition}

As before, for a finite set $K \subseteq C^{(i)}$ we write
\[
N^{(i)}(K) := \bigcap_{x \in K} N^{(i)}(x).
\]

\paragraph{Stage CCER (finite cores).}
Assume $(N_{\mathrm{fin}})$ at stage $i$.
Let $p^{(i)} = (R,U,V_a,V_b) \in \mathrm{Prof}^{(i)}_{ab}$.
Choose finite sets
\[
K_a \subseteq U \cap V_a, \qquad K_b \subseteq U \cap V_b,
\]
and witnesses
\[
W_a \in N^{(i)}(K_a) \cap N^{(i)}(C_a), \qquad
W_b \in N^{(i)}(K_b) \cap N^{(i)}(C_b),
\]
such that
\[
C^{(i)}_k
:=
\operatorname{Int}_{\mathcal T^{(i)}}
\bigl(W_a \cap W_b \cap R^{(i)}_E(p^{(i)})\bigr)
\neq \varnothing.
\]
By downward closure we may assume $W_a \subseteq V_a$ and $W_b \subseteq V_b$.

We call $C^{(i)}_k$ a \emph{stage-$i$ emergent region}.
It is open in $\mathcal T^{(i)}$, satisfies $C^{(i)}_k \in I^{(i)}_{ab}$ by construction,
and carries the internal topology inherited from $\mathcal T^{(i)}$
(with bases restricted from $W_a$ and $W_b$ as in Proposition~\ref{prop:max-internal}).

\begin{corollary}[Stagewise emergence: finite cores]\label{cor:stage-emergence}$\;$\\
Fix a stage $i$ and distinct $C_a,C_b \in C^{(i)}$.\\
Suppose there exists a stage-$i$ profile
$p^{(i)} = (R,U,V_a,V_b) \in \mathrm{Prof}^{(i)}_{ab}$,
finite sets
\[
K_a \subseteq U \cap V_a,
\qquad
K_b \subseteq U \cap V_b,
\]
and witnesses
\[
W_a \in N^{(i)}(K_a) \cap N^{(i)}(C_a),
\qquad
W_b \in N^{(i)}(K_b) \cap N^{(i)}(C_b),
\]
such that
\[
\operatorname{Int}_{\mathcal T^{(i)}}\!\bigl(
W_a \cap W_b \cap R^{(i)}_E(p^{(i)})
\bigr)
\neq \varnothing.
\]
Then the emergent region
\[
C^{(i)}_k
:=
\operatorname{Int}_{\mathcal T^{(i)}}\!\bigl(
W_a \cap W_b \cap R^{(i)}_E(p^{(i)})
\bigr)
\]
is nonempty and open, and satisfies
\[
C^{(i)}_k \subseteq U,
\qquad
C^{(i)}_k \subseteq W_a \cap W_b,
\qquad
C^{(i)}_k \cap C_a = \varnothing = C^{(i)}_k \cap C_b.
\]

Its internal topology is the subspace topology
\[
\mathcal T^{(i)}_{C^{(i)}_k}
:=
\{ W \cap C^{(i)}_k \mid W \in \mathcal T^{(i)} \}.
\]

If $\mathcal B_a$ and $\mathcal B_b$ are bases for the subspace topologies
on $W_a$ and $W_b$, respectively, then
\[
\mathcal B^{(i)}_{C^{(i)}_k}
:=
\{ (B_a \cap B_b) \cap C^{(i)}_k
\mid
B_a \in \mathcal B_a,\ B_b \in \mathcal B_b \}
\]
is a basis for $\mathcal T^{(i)}_{C^{(i)}_k}$.
Moreover,
\[
C^{(i)}_k \in I^{(i)}_{ab},
\quad\text{and}\quad
\text{every open } W \subseteq C^{(i)}_k \text{ lies in } I^{(i)}_{ab}.
\]
\end{corollary}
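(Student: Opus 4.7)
My plan is to treat this corollary as the stagewise transcription of Lemma~\ref{lem:emergent-basic}, Lemma~\ref{lem:emergence-internal}, and Proposition~\ref{prop:max-internal}, specialized to the concept space $(C^{(i)},\mathcal F^{(i)},N^{(i)})$ with external topology $\mathcal T^{(i)}$. Since the stage-$i$ data $(C_a,C_b,p^{(i)},K_a,K_b,W_a,W_b)$ is exactly an instance of a CCER situation inside a fixed concept space, most of the work has already been done; the proof mainly consists of routing the earlier lemmas through the stage-$i$ notation and recording that the hypothesis $C^{(i)}_k\neq\varnothing$ is used only to rule out the trivial case.

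First I would establish the elementary containment and disjointness claims. By definition $C^{(i)}_k \subseteq W_a \cap W_b \cap R^{(i)}_E(p^{(i)})$ since the interior of a set is contained in the set itself. From this and the stage remainder identity $R^{(i)}_E(p^{(i)}) \subseteq U \setminus (V_a \cup V_b)$ (which is the stage analog of Lemma~\ref{lem:profile-basic}(2)), I get $C^{(i)}_k \subseteq U$ and $C^{(i)}_k \cap (V_a \cup V_b) = \varnothing$. Since $C_a \in V_a$ and $C_b \in V_b$ by (N0), this yields $C^{(i)}_k \cap C_a = \varnothing = C^{(i)}_k \cap C_b$. The inclusion $C^{(i)}_k \subseteq W_a \cap W_b$ is immediate from the definition.

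Next, for the internal topology and basis claim, I would invoke Proposition~\ref{prop:max-internal} applied to the stage-$i$ ambient space. The statement there produces a basis for the subspace topology on an open region that sits inside $W_i \cap W_j$; here $C^{(i)}_k$ plays exactly that role, sitting inside $W_a \cap W_b$ and being open in $\mathcal T^{(i)}$. So the restriction of arbitrary bases $\mathcal B_a,\mathcal B_b$ of the subspace topologies on $W_a,W_b$, intersected and then cut down to $C^{(i)}_k$, forms a basis for $\mathcal T^{(i)}_{C^{(i)}_k}$. This step requires no new calculation; it is a direct reference.

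Finally, for the channel ideal claims, I would observe that $W_a \in N^{(i)}(C_a)$ and $W_b \in N^{(i)}(C_b)$ by hypothesis, and $W_a \cap W_b \supseteq C^{(i)}_k \neq \varnothing$, so $W_a \cap W_b \in O_i(C_a,C_b)$. Hence $\operatorname{Int}_{\mathcal T^{(i)}}(W_a \cap W_b)$ is a generator of $I^{(i)}_{ab}$. Since $C^{(i)}_k \subseteq W_a \cap W_b$ and $C^{(i)}_k$ is open, we have $C^{(i)}_k \subseteq \operatorname{Int}_{\mathcal T^{(i)}}(W_a \cap W_b)$, and downward closure of the ideal (Proposition~\ref{prop:ideal-props}(1), applied at stage $i$) gives $C^{(i)}_k \in I^{(i)}_{ab}$ and also membership for every open $W \subseteq C^{(i)}_k$. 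The only mild subtlety, and the step I would double-check, is that the stage-level restricted neighbourhood structure from Definition~\ref{def:stage-internal} and Proposition~\ref{prop:stage-subspace} interacts cleanly with the restricted bases, so that $\mathcal B_a$ and $\mathcal B_b$ can indeed be taken as bases of the subspace topologies on $W_a,W_b$ in $\mathcal T^{(i)}$ without auxiliary hypotheses; this follows from Proposition~\ref{prop:subspace-base}, which is the key bookkeeping lemma needed to glue the pieces together.
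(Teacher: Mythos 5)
Your proposal is correct and follows essentially the same route as the paper's proof: containment in $U$ and disjointness from $C_a,C_b$ via $R^{(i)}_E(p^{(i)})\subseteq U\setminus(V_a\cup V_b)$, the basis claim by restricting $\mathcal B_a,\mathcal B_b$ to $C^{(i)}_k\subseteq W_a\cap W_b$, and channel-ideal membership from $W_a\cap W_b\in O_i(C_a,C_b)$ together with downward closure. The only cosmetic differences are that you delegate the basis argument to Lemma~\ref{lem:emergence-internal} and Proposition~\ref{prop:max-internal} where the paper re-derives it inline, and you justify disjointness via $C_a\in V_a$ from \textup{(N0)} where the paper assumes $C_a\subseteq V_a$; both readings are consistent with the setup.
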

\begin{proof}
Let $p^{(i)}=(R,U,V_a,V_b)\in\mathrm{Prof}^{(i)}_{ab}$ and finite sets
$K_a\subseteq U\cap V_a$, $K_b\subseteq U\cap V_b$ be given together with witnesses
\[
W_a\in N^{(i)}(K_a)\cap N^{(i)}(C_a),\qquad
W_b\in N^{(i)}(K_b)\cap N^{(i)}(C_b),
\]
such that
\[
\Int_{\mathcal T^{(i)}}\!\bigl(W_a\cap W_b\cap R_E^{(i)}(p^{(i)})\bigr)\neq\varnothing.
\]
Define
\[
C_k^{(i)}:=\Int_{\mathcal T^{(i)}}\!\bigl(W_a\cap W_b\cap R_E^{(i)}(p^{(i)})\bigr).
\]
Then $C_k^{(i)}$ is open in $\mathcal T^{(i)}$ by definition of interior, and it is nonempty by
assumption.

\smallskip
\noindent\emph{Containments.}
Since $R_E^{(i)}(p^{(i)})=\Int_{\mathcal T^{(i)}}(U\setminus(V_a\cup V_b))\subseteq U$, we have
$C_k^{(i)}\subseteq U$.
Also $C_k^{(i)}\subseteq W_a\cap W_b$ since it is the interior of a subset of $W_a\cap W_b$.

\smallskip
\noindent\emph{Disjointness from $C_a$ and $C_b$.}
Because $R_E^{(i)}(p^{(i)})\subseteq U\setminus(V_a\cup V_b)$ we have
\[
C_k^{(i)}\cap (V_a\cup V_b)=\varnothing.
\]
In particular, if (as in the setup of profiles) $C_a\subseteq V_a$ and $C_b\subseteq V_b$, then
$C_k^{(i)}\cap C_a=\varnothing=C_k^{(i)}\cap C_b$.

\smallskip
\noindent\emph{Internal topology and a basis.}
By definition, the internal topology on $C_k^{(i)}$ is the subspace topology
\[
\mathcal T_{C_k^{(i)}}^{(i)}=\{\,W\cap C_k^{(i)}: W\in\mathcal T^{(i)}\,\},
\]
which is exactly the subspace topology induced from $(C^{(i)},\mathcal T^{(i)})$.
(Equivalently, it is the topology generated by the restricted assignment $N^{(i)}_{C_k^{(i)}}$,
by Proposition~11.)

Now assume $\mathcal B_a$ and $\mathcal B_b$ are bases for the subspace topologies on $W_a$ and
$W_b$, respectively.  Since $C_k^{(i)}\subseteq W_a\cap W_b$, a standard basis for the subspace
topology on $C_k^{(i)}$ is obtained by intersecting basic opens from $W_a$ and $W_b$:
\[
\mathcal B_{C_k^{(i)}}^{(i)}
:=
\{\, (B_a\cap B_b)\cap C_k^{(i)} \mid B_a\in\mathcal B_a,\ B_b\in\mathcal B_b \,\}.
\]
To see this, let $O\in\mathcal T_{C_k^{(i)}}^{(i)}$ and $x\in O$.  Then $O=G\cap C_k^{(i)}$ for
some $G\in\mathcal T^{(i)}$, hence
\[
x\in (G\cap W_a)\cap (G\cap W_b)\cap C_k^{(i)}.
\]
Since $\mathcal B_a$ (resp.\ $\mathcal B_b$) is a basis of $W_a$ (resp.\ $W_b$), choose
$B_a\in\mathcal B_a$ and $B_b\in\mathcal B_b$ with
\[
x\in B_a\subseteq G\cap W_a,\qquad x\in B_b\subseteq G\cap W_b.
\]
Then $x\in (B_a\cap B_b)\cap C_k^{(i)}\subseteq O$, proving that
$\mathcal B_{C_k^{(i)}}^{(i)}$ is a basis of $\mathcal T_{C_k^{(i)}}^{(i)}$.

\smallskip
\noindent\emph{Channel ideal membership.}
Since $W_a\in N^{(i)}(C_a)$ and $W_b\in N^{(i)}(C_b)$ and $W_a\cap W_b\neq\varnothing$
(because its intersection with $R_E^{(i)}(p^{(i)})$ has nonempty interior), we have
$W_a\cap W_b\in O_i(C_a,C_b)$.  Hence $\Int_{\mathcal T^{(i)}}(W_a\cap W_b)$ is one of the
generators used to form the stage-$i$ channel ideal $I^{(i)}_{ab}$, and therefore
\[
\Int_{\mathcal T^{(i)}}(W_a\cap W_b)\in I^{(i)}_{ab}.
\]
Because $C_k^{(i)}\subseteq W_a\cap W_b$, we have
$C_k^{(i)}\subseteq \Int_{\mathcal T^{(i)}}(W_a\cap W_b)$, and since $I^{(i)}_{ab}$ is
downward-closed, it follows that $C_k^{(i)}\in I^{(i)}_{ab}$.
The same downward-closure argument shows that every open subset $W\subseteq C_k^{(i)}$ lies in
$I^{(i)}_{ab}$ as well.
\end{proof}

\newpage

\subsection{Dialectical dynamics (system level)}\label{subsec:dcs-dynamic}

\begin{definition}[Heraclitean Dialectical Concept Space (HDCS)]\label{def:HDCS}
An \emph{HDCS} is a sequence of concept spaces
\[
\bigl(C^{(i)},\mathcal F^{(i)},N^{(i)}\bigr)_{i\in\mathbb N}
\]
equipped with an evolution mechanism \(\Phi\) and carry maps
\(\sigma_i : C^{(i)} \rightharpoonup C^{(i+1)}\) satisfying the Heraclitean flux
conditions \textup{(H1)}--\textup{(H5)} from Section~5.
These conditions govern persistence, locality of change, provenance of emergents,
and the tracking of identities across stages.
\end{definition}

We formalize conceptual evolution by specifying stagewise structure, emergence,
and an evolution map \(\Phi\), subject to coherent flux constraints.
This culminates in a colimit-type construction \(\mathcal C_\infty\) representing
the total history of stagewise transformation. No universal property is claimed here; 
the term ``colimit-type'' is used in an informal, structural sense.

\begin{definition}[Dialectical Concept Space (DCS, dynamic)]\label{def:DCS-dynamic}
Let \(I\subseteq\mathbb N\) be nonempty.
A \emph{dialectical concept space} is a triple
\[
\mathbf D=\bigl((C^{(i)})_{i\in I},\ (N^{(i)})_{i\in I},\ \Phi\bigr)
\]
such that for each \(i\in I\):
\begin{enumerate}[label=\textnormal{(\roman*)}]
\item \textbf{Stage space:}
\((C^{(i)},\mathcal F^{(i)},N^{(i)})\) is a concept space with external topology
\(\mathcal T^{(i)}\) generated by \(N^{(i)}\).
\item \textbf{CCER rule:}
Stage-\(i\) emergents are precisely the nonempty regions
\(
C^{(i)}_k=\Int_{\mathcal T^{(i)}}(W_a\cap W_b\cap R^{(i)}_E(p^{(i)}))
\)
produced from stage-\(i\) profiles and finite cores, endowed with the internal
topology inherited from \(\mathcal T^{(i)}\) and bases restricted from witnesses
(cf.\ Lemma~\ref{lem:emergence-internal} and
Proposition~\ref{prop:subspace-base}). 
Here the quantification ranges over stage-$i$ 
profiles that admit CCER (in the sense of Definition~5.2 and its staged analogue).

\item \textbf{Evolution map:}
The next stage is computed by
\[
\bigl(C^{(i+1)},\,N^{(i+1)}\bigr)
=
\Phi\bigl(C^{(i)},\,N^{(i)},\,\mathrm{Events}^{(i)}\bigr),
\]
where \(\mathrm{Events}^{(i)}\) records all emergents and any declared edits.
\end{enumerate}
\end{definition}

\newpage
\paragraph{Heraclitean flux conditions.}
We require the following coherence properties for all stages $i$.

\begin{itemize}
\item[(H1)] \textbf{Changeability.}
There exist indices $i$ such that
\[
C^{(i+1)} \neq C^{(i)} \quad \text{or} \quad N^{(i+1)} \neq N^{(i)}.
\]

\item[(H2)] \textbf{Structural locality of change.}
If the evolution mechanism $\Phi$ acts within a region
$U \in \mathcal T^{(i)}$, then any change induced outside $U$ must occur
along existing neighbourhood overlaps or channel ideals connecting $U$
to regions in $U^c$. No change propagates except through such structural
connections.

\item[(H3)] \textbf{Emergence persistence.}
Each stage-$i$ emergent region $C^{(i)}_k$ is adjoined to $C^{(i+1)}$ and
retains its inherited internal topology (as a subspace).

\item[(H4)] \textbf{Provenance.}
Each emergent records its parents and witnessing profile:
\[
\mathrm{Parents}(C^{(i)}_k) = \{C_a, C_b\},
\qquad
\mathrm{Prof}(C^{(i)}_k) = p^{(i)}.
\]

\item[(H5)] \textbf{Identity through change.}
There exists a carry map
\[
\sigma_i : C^{(i)} \to C^{(i+1)}
\]
tracking identities across stages, acting as the identity on unchanged concepts.
\end{itemize}

The carry map $\sigma_i$ is typically injective on its domain but need not be
surjective. Some concepts may be deleted or transformed without a successor,
and new concepts may appear without a predecessor. Allowing $\sigma_i$ to be
partial preserves flexibility while supporting provenance and identity tracking.

\newpage

\begin{definition}[Sketch of the evolution map $\Phi$]
The evolution map $\Phi$ is not specified as a fixed algorithm, but as a
constrained transformation.
Given a stage $(C^{(i)}, \mathcal F^{(i)}, N^{(i)})$ and a collection of events
$\mathrm{Events}^{(i)}$, one may define:
\begin{itemize}
\item $C^{(i+1)}$ as the union of carried concepts $\sigma_i(C^{(i)})$ and the
adjoined emergents recorded in $\mathrm{Events}^{(i)}$;
\item $\mathcal F^{(i+1)}$ by carrying forward unchanged feasible regions
(via $\sigma_i$) and adjoining feasible open sets contained in emergent regions;
\item $N^{(i+1)}$ by transporting neighbourhoods along $\sigma_i$ and adding
local neighbourhoods inherited from emergent profiles.
\end{itemize}

This specification is minimal: it is designed to enforce the Heraclitean flux
conditions \textnormal{(H2)--(H5)} while leaving implementation details open.

The purpose of $\Phi$ is to specify the transition from stage $i$ to stage $i+1$
subject to:
\begin{itemize}
\item carry-forward of existing concepts via $\sigma_i$;
\item adjoining of emergent regions produced by CCER;
\item inheritance of neighbourhood and feasibility structure as required by
\textnormal{(H2)--(H5)}.
\end{itemize}

The family $(\sigma_i)_{i \in \mathbb N}$ forms a directed system of partial
stage embeddings and induces an equivalence relation on the disjoint union
$\bigsqcup_i C^{(i)}$ by identifying $x \in C^{(i)}$ with $\sigma_i(x)$ whenever
$\sigma_i(x)$ is defined.
The resulting quotient may be regarded as a colimit-type space $C_\infty$,
equipped with the final topology with respect to the canonical maps
$\iota_i : C^{(i)} \to C_\infty$.

A complete characterization of $\Phi$ in functorial or operational terms is
left for future work.
\end{definition}

\section*{Additional topological properties}
To enrich the structural analysis of HDCS, it is useful to isolate a small
collection of standard topological notions formulated stagewise in the
external topologies $\mathcal T^{(i)}$.

\subsection*{Continuity across stages}

Let $f : C^{(i)} \to C^{(i+1)}$ be a (possibly partial) transition map between
stages. We say that $f$ is \emph{continuous} (with respect to the external
topologies) if
\[
U \in \mathcal T^{(i+1)} \;\Longrightarrow\; f^{-1}(U) \in \mathcal T^{(i)}.
\]

Continuity expresses that open structure is preserved under evolution: an open
configuration at stage $i+1$ pulls back to an open configuration at stage $i$.

\begin{remark}[Continuity of carry maps]
We typically assume that each carry map
\[
\sigma_i : C^{(i)} \rightharpoonup C^{(i+1)}
\]
is continuous on its domain, with respect to $\mathcal T^{(i)}$ and
$\mathcal T^{(i+1)}$. This is not automatic; it is a design constraint on the
evolution mechanism $\Phi$.
\end{remark}

A convenient sufficient condition is the following neighbourhood-compatibility
property: for every concept $C \in \mathrm{dom}(\sigma_i)$ and every
neighbourhood $U \in N^{(i)}(C)$, there exists
$V \in N^{(i+1)}(\sigma_i(C))$ such that
\[
\sigma_i(U) \subseteq V.
\]

Under this condition, $\sigma_i$ is continuous (on its domain) for the induced
neighbourhood topologies.

In practice, this compatibility is ensured when:
\begin{enumerate}
\item stable concepts are carried to structurally compatible concepts,
\item edits respect local openness (e.g.\ removals do not disrupt neighbourhoods
      of carried points), and
\item emergents are adjoined in a way that does not force discontinuous
      identifications.
\end{enumerate}

If carry maps are discontinuous, the global colimit-type space $C_\infty$ cannot
be equipped with the intended final topology (with respect to the canonical
stage maps). In such cases one may instead work with partial colimits or
piecewise continuous limits; we leave such variants to future work.

\subsection*{Convergence within a stage}

\begin{definition}[Local convergence in a stage]\label{def:local-convergence}
Let $(C,\mathcal F,N)$ be a concept space with external topology
$\mathcal T := \mathcal T(N)$.
A net $(x_\alpha)_{\alpha\in A}$ in $C$ \emph{converges} to $x^* \in C$
if for every neighbourhood $U \in N(x^*)$ of $x^*$ there exists
$\alpha_0 \in A$ such that $x_\alpha \in U \quad \text{whenever } \alpha \ge \alpha_0.$
\end{definition}

\noindent
This notion of convergence is stage-internal: it depends only on the
neighbourhood assignment $N$ (equivalently, on the topology $\mathcal T(N)$)
at a fixed stage.

\subsection*{Openness}

Recall that openness in the external topology is characterised by neighbourhood
absorption:
\[
U \in \mathcal T
\iff
\forall x \in U\ \exists V \in N(x)\ \text{with } V \subseteq U.
\]
In stagewise dynamics, one may track how openness is preserved or disrupted by
transition maps (e.g.\ by requiring continuity of carry maps, as above).

\begin{remark}[Connectedness and compactness in HDCS]
The following notions are optional analytical tools; they are not required by the
core HDCS axioms or by the CCER construction.

HDCS need not be globally connected. Nevertheless, local connectedness may be
studied via open subspaces or via clusters induced by neighbourhood interaction.

Compactness can be used to formalise boundedness of coherent structure. A region
$U \subseteq C$ (in particular, a feasible region or an emergent open region, not an
individual concept) is \emph{compact} if every open cover of $U$ admits a finite
subcover in the external topology. This provides a natural criterion for when a
conceptual configuration is topologically ``finite'' or stabilised by finitely many
local contexts. Compactness is understood here in the purely topological sense,
without separation assumptions.
\end{remark}

%EXAMPLESapplications

\newpage

\section{Examples}
\paragraph{Concept-space structure at each stage.}
At each stage $i$ the exchange example is interpreted as a concept space
$(C^{(i)},\mathcal{F}^{(i)},N^{(i)})$ in the sense of Section~1. Concretely,
$C^{(i)}$ is the set of conceptual nodes listed below, and the feasible family
$\mathcal{F}^{(i)} \subseteq \mathcal{P}(C^{(i)})$ consists of coherent
configurations of these concepts and is may cover $C^{(i)}$ and to be
closed under finite intersections; for each concept $C \in C^{(i)}$ the
neighbourhood system $N^{(i)}(C)$ is a nonempty family of feasible regions
containing $C$, downward closed in $\mathcal{F}^{(i)}$ and closed under
finite intersections. Such data may always exists (for example by taking
$\mathcal{F}^{(i)}=\mathcal{P}(C^{(i)})$ and
$N^{(i)}(C)=\{\,U\subseteq C^{(i)} \mid C\in U\,\}$), so the description
that follows simply singles out those feasible regions and neighbourhoods
that carry the intended economic interpretation.

\medskip

\noindent
\textbf{Multi-concept overlap and Barter's emergence.} The construction of Barter as an emergent concept at stage~0 formally relies on the intersection of multiple remainders derived from overlapping pairs of earlier concepts. While the machinery of HDCS handles binary overlaps, more complex emergents can result from multiple pairwise profiles whose remainders jointly intersect. For instance, suppose we identify one overlap $R_1$ between Gift and Obligation (capturing enforced reciprocity), and another $R_2$ between Ritual and Reciprocity (capturing ceremonial imbalance). From each, we extract profiles $p_1$, $p_2$, and corresponding remainders $R_E(p_1)$, $R_E(p_2)$ encoding tensions where standard exchange fails. If these remainders share a common open subset $W$, then $W$ becomes the core of the emergent Barter region:
\[
C^{(0)}_{\textsf{Barter}} := \operatorname{Int}_{\mathcal{T}^{(-1)}}\left( R_E(p_1) \cap R_E(p_2) \right).
\]
Thus Barter emerges from the conjunction of distinct tensions: not from a single overlap but from the joint structure of multiple partial profiles. This illustrates how HDCS supports multi-concept emergents through finite intersections, extending the standard biparent construction.

Ethnographic and historical studies of premonetary and early monetary
economies suggest that everyday exchange within communities is dominated by
dense webs of gift, obligation, and ritual reciprocity, whereas barter and
impersonal trade tend to appear at the margins between groups or in situations
where these obligations are weakened or suspended
\cite{Mauss1925,Sahlins1972,Dalton1965,Einzig1949,Graeber2011,Polanyi1944,Polanyi1957}.
On this view, gift-obligation systems and ceremonial exchanges provide the
structural background from which more impersonal commodity and monetary forms
emerge, rather than forming simple, isolated stages in a universal
barter, money, credit sequence
\cite{Hudson2002,Grierson1977,Ingham2004,Scott2017,Zelizer1994}.

\subsection{How the exchange example instantiates HDCS tools}

We briefly unpack the exchange example in terms of the general HDCS machinery
developed in Sections~1--5. This makes explicit how feasible families,
neighbourhood assignments, profiles, remainders, the CCER principle, and
channel ideals are used at each stage of the dialectical evolution.

\paragraph{Stage \texorpdfstring{$-1$}{-1}: Gift/Ritual as initial concept space.}

At the pre-economic stage we have a concept space
\[
  C^{(-1)} = \{ C_{\textsf{Gift}}, C_{\textsf{Ritual}},
                 C_{\textsf{Obligation}}, \dots, C_{\textsf{Reciprocity}}  \}.
\]
The feasible family $\mathcal{F}^{(-1)} \subseteq \mathcal{P}(C^{(-1)})$
consists of symbolic and social configurations that are cognitively and
socially coherent: for instance regions where gift, ritual, and obligation
co-occur as part of a stable practice. The neighbourhood assignment
$N^{(-1)}$ assigns to each concept $C$ a family $N^{(-1)}(C) \subseteq
\mathcal{F}^{(-1)}$ of feasible regions which play the role of local
contexts in which $C$ is active. The external topology $\mathcal{T}^{(-1)}$
is generated from $N^{(-1)}$ as in Section~1: a set $U$ is open iff for
every $C \in U$ there is some $V \in N^{(-1)}(C)$ with $V \subseteq U$.

Within this stage we consider \emph{overlaps} of neighbourhoods in the sense
of Section~2. For instance, overlaps between the neighbourhoods of
$C_{\textsf{Gift}}$ and $C_{\textsf{Obligation}}$,
\[
  \mathcal{O}^{(-1)}(C_{\textsf{Gift}}, C_{\textsf{Obligation}})
  \subseteq \mathcal{T}^{(-1)},
\]
represent situations where the practice of giving is tightly bound up with
norms of repayment. Analogous overlap families for
$C_{\textsf{Ritual}}$ and $C_{\textsf{Reciprocity}}$ encode more structured,
rule-governed patterns of delayed return.

From such overlaps we form \emph{profiles} in the sense of the CCER
machinery. A typical stage $-1$ profile has the form
\[
   p^{(-1)} = (R, U, V_a, V_b),
\]
where $U \in \mathcal{T}^{(-1)}$ is a feasible region, $V_a \in
N^{(-1)}(C_a)$ and $V_b \in N^{(-1)}(C_b)$ are neighbourhoods taken from
the overlap families (for example around $C_{\textsf{Gift}}$ and
$C_{\textsf{Obligation}}$), and $R$ records the relevant relation. The
associated \emph{remainder} is
\[
  R_E(p^{(-1)}) := \Int_{\mathcal{T}^{(-1)}}\!\bigl(U \setminus
  (V_a \cup V_b)\bigr),
\]
an open region where the constraints of both $V_a$ and $V_b$ have been
“subtracted” but the background context $U$ remains. Intuitively, such
remainders encode situations where the gift/ritual system is strained:
obligations persist, but the symbolic forms that originally generated them
are no longer sufficient. In line with the CCER construction, 
we restrict attention to finite families of
profiles whose associated feasible constraints admit a nonempty intersection
(cf.\ $(N_{\mathrm{fin}})$).

The assumption that a finite family of such remainders has nonempty
intersection:
\[
  \bigcap_{\ell=1}^m R_E(p^{(-1)}_\ell) \neq \varnothing,
\]
says that there is a stable region of practice where multiple tensions of
this kind coexist. By the stagewise CCER principle, this yields an emergent
concept at the next stage:
\[
  C^{(0)}_{\textsf{Barter}}
  := \Int_{\mathcal{T}^{(-1)}}\!\Bigl(
       \bigcap_{\ell=1}^m R_E(p^{(-1)}_\ell)
     \Bigr).
\]
Proposition~\ref{prop:emergent-open} (By the general theory) then tells us that $C^{(0)}_{\textsf{Barter}}$ is an open
set in $\mathcal{T}^{(-1)}$ and lies in the appropriate \emph{channel ideal}
$I^{(-1)}_{ab}$ generated by overlaps between its “parents”
$C_a, C_b \in C^{(-1)}$. Thus barter appears as an emergent open region
that is topologically anchored in the overlap structure of gift, ritual,
and obligation.\\

While each profile in the CCER construction involves a pair of concepts, the emergent
\( C^{(0)}_{\textsf{Barter}} \) is generated from a family of such profiles whose overlaps span
distinct conceptual pairs: such as (\textsf{Gift}, \textsf{Obligation}) and (\textsf{Ritual}, \textsf{Reciprocity}).
Hence, although each profile is binary, the full set of profiles involved in the emergence may collectively
draw on three or more concepts. The associated channel ideal \( I^{(-1)}_{\textsf{Barter},*} \)
is generated by these multiple overlaps. This illustrates how multi-parent emergence naturally arises in HDCS, even when the formal mechanism operates pairwise.

By stage~0, our concept space has refocused to explicitly economic notions. Earlier concepts like Gift or Ritual, while part of the background, are no longer explicit elements of $C^{(0)}$: the carry-over map $\sigma_{-1}$ thus applies only trivially (identity on any unchanged core concepts, and not defined for Gift/Ritual which don’t carry forward as independent concepts). This illustrates (H1) changeability: the ‘ontology’ of the concept space itself shifts to accommodate the emergent.\\\\
Note that: Since the HDCS evolution principle requires that any emergent stage-$(E+1)$
concept be supported by an existing region of stage $E$, the intersection
\[
\bigcap_{\ell=1}^{m} R_E(p_\ell^{(-1)})
\]
must exist as a region whenever the corresponding concept arises.
Thus, in the situation under consideration, we restrict to cases in which this
intersection is a nonempty $E$-region.

\begin{lemma}[Existence of multi-remainder support]
\label{lem:multi-remainder-support}
Let $p^{(-1)}_1,\dots,p^{(-1)}_m$ be biparent profiles on a common background
region $U$ at stage $E$, with remainders
\[
R_E(p^{(-1)}_\ell)
  \;=\;
  \operatorname{Int}\bigl(U \setminus (V_{i,\ell} \cup V_{j,\ell})\bigr)
  \subseteq R_E
\]
for suitable competitor neighbourhoods $V_{i,\ell},V_{j,\ell} \in \mathcal{N}(R_E)$.
Assume that the finite conjunction of the associated feasible constraints is itself
feasible, i.e.
\[
\bigcap_{\ell=1}^{m}
\bigl(U \setminus (V_{i,\ell} \cup V_{j,\ell})\bigr) \in F_E.
\]
Then
\[
\bigcap_{\ell=1}^{m} R_E(p^{(-1)}_\ell) \neq \varnothing,
\]
so the intersection is a nonempty $E$-region.
\end{lemma}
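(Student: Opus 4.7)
The plan is to reduce the statement to a nonemptiness claim about the interior of a single feasible set, then to apply the structural axioms (F$\cap$), (N0), and (N$\downarrow$) to produce an explicit neighbourhood witness. Writing $S_\ell := U \setminus (V_{i,\ell} \cup V_{j,\ell})$, the remainder is by definition $R_E(p^{(-1)}_\ell) = \operatorname{Int}_{\mathcal{T}^{(E)}}(S_\ell)$, and the intersection
$$\bigcap_{\ell=1}^m R_E(p^{(-1)}_\ell) \;=\; \bigcap_{\ell=1}^m \operatorname{Int}_{\mathcal{T}^{(E)}}(S_\ell) \;=\; \operatorname{Int}_{\mathcal{T}^{(E)}}\!\Bigl(\bigcap_{\ell=1}^m S_\ell\Bigr)$$
using the standard topological identity that interior commutes with finite intersections. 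So it suffices to show that $\operatorname{Int}_{\mathcal{T}^{(E)}}(S) \neq \varnothing$ for $S := \bigcap_\ell S_\ell$, which by hypothesis lies in $F_E$.

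Next I would exploit feasibility and (F$\emptyset$) to pick some $P \in S$, and then construct a neighbourhood of $P$ sitting inside $S$. By (N0), $N^{(E)}(P)$ is nonempty; choose any $W \in N^{(E)}(P)$, which automatically lies in $F_E$ and contains $P$. Then $P \in S \cap W$, so $S \cap W \neq \varnothing$, and (F$\cap$) gives $S \cap W \in F_E$. Now (N$\downarrow$) applied to $P \in S \cap W \subseteq W$ yields $S \cap W \in N^{(E)}(P)$, and since $S \cap W \subseteq S$, this witnesses $P \in \operatorname{Int}_{\mathcal{T}^{(E)}}(S)$, completing the argument.

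The only genuinely load-bearing step is the commutativity of interior with finite intersection, which turns the original many-set problem into a single-set problem where the feasibility hypothesis can be cashed in directly. I expect the potentially confusing point in presentation to be bookkeeping between the \emph{feasibility} of $S$ (which gives nonemptiness via (F$\emptyset$)) and the \emph{openness} needed for the conclusion, which is supplied by combining (F$\cap$) with (N$\downarrow$); no finitary coherence assumption like $(N_{\mathrm{fin}})$ is required once the hypothesis $S \in F_E$ is granted.
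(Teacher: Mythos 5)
Your proof is correct and follows essentially the same route as the paper's: both reduce the claim to the observation that $S=\bigcap_{\ell}\bigl(U\setminus(V_{i,\ell}\cup V_{j,\ell})\bigr)$ is feasible, hence nonempty with nonempty interior, and that this interior sits inside every remainder $R_E(p^{(-1)}_\ell)$. You go further than the paper in one useful respect: the paper simply asserts that a feasible region has nonempty interior, whereas you actually derive this from (N0), (F$\cap$) and (N$\downarrow$) by manufacturing the neighbourhood witness $S\cap W$. One small point of care: in this framework a single witness $S\cap W\in N(P)$ with $S\cap W\subseteq S$ does not by itself place $P$ in $\operatorname{Int}_{\mathcal T}(S)$, since without the cross-point axiom (N$\Rightarrow$) the set of points admitting such a witness need not be open; but your construction applies verbatim to every $P\in S$, so it shows that $S$ itself satisfies the openness criterion of Definition~\ref{def:opens}, whence $\operatorname{Int}_{\mathcal T}(S)=S\neq\varnothing$ and the conclusion stands.
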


\begin{proof}
By feasibility, the set
\[
W := \bigcap_{\ell=1}^{m}
       \bigl(U \setminus (V_{i,\ell} \cup V_{j,\ell})\bigr)
\]
is an $E$-region, and therefore has nonempty interior $\operatorname{Int}(W)$.
For each $\ell$ we have $W \subseteq U \setminus (V_{i,\ell} \cup V_{j,\ell})$, hence
$\operatorname{Int}(W) \subseteq U \setminus (V_{i,\ell} \cup V_{j,\ell})$ and
so $\operatorname{Int}(W) \subseteq R_E(p^{(-1)}_\ell)$ by the definition of
remainders.  Thus
\[
\operatorname{Int}(W) \subseteq
\bigcap_{\ell=1}^{m} R_E(p^{(-1)}_\ell),
\]
and the right-hand side is nonempty because $\operatorname{Int}(W)\neq\varnothing$.
\end{proof}

\paragraph{Stage \texorpdfstring{$0$}{0}: From Barter to Commodity money.}

At stage $0$ the concept space
\[
  C^{(0)} = \{ C_{\textsf{Debt}},
               C_{\textsf{Surplus}}, C_{\textsf{Mobility}}, \dots,  C_{\textsf{Barter}} \}
\]
collects explicitly economic notions. The feasible family $\mathcal{F}^{(0)}$
now consists of configurations of these concepts that are practically
realizable, and $N^{(0)}$ assigns neighbourhoods capturing local regimes of
direct exchange, credit, storage, and movement.

Overlaps between the neighbourhoods of $C_{\textsf{Barter}}$ and other
concepts,
\[
  \mathcal{O}^{(0)}(C_{\textsf{Barter}}, C_k), \quad
  C_k \in \{ C_{\textsf{Debt}}, C_{\textsf{Surplus}},
             C_{\textsf{Mobility}} \},
\]
record situations where barter coexists with delayed repayment, stockpiling,
or high spatial separation of agents. Each such overlap generates profiles
$p^{(0)} = (R, U, V_{\textsf{Barter}}, V_k)$ and hence remainders
\[
  R_E(p^{(0)}) = \Int_{\mathcal{T}^{(0)}}\!\bigl(
     U \setminus (V_{\textsf{Barter}} \cup V_k)
  \bigr).
\]
These remainders describe contexts in which the constraints of pure
pairwise barter break down (for example, where surplus cannot be easily
traded, or where mobility prevents direct matching), yet the background
economic field $U$ persists.

When finitely many such remainders have nonempty intersection, the CCER
construction yields the emergent concept
\[
  C^{(1)}_{\textsf{Commodity}}
  := \Int_{\mathcal{T}^{(0)}}\!\Bigl(
       \bigcap_{\ell=1}^m R_E(p^{(0)}_\ell)
     \Bigr).
\]
By Proposition~\ref{prop:emergent-open} again,
$C^{(1)}_{\textsf{Commodity}}$ is an open set and belongs to the channel
ideal $I^{(0)}_{\textsf{Barter},*}$ generated by overlaps between barter and
its neighbours. In HDCS terms, commodity money is a new open region in the
channel ideal of barter: a dialectical transformation that resolves the
tensions of direct exchange by introducing durable, widely acceptable goods
of exchange.

Historical and anthropological studies of early currencies suggest that
generalised commodity monies arise precisely where surplus, credit and
mobility put pressure on simple pairwise barter. Dalton, Einzig and Grierson
show that objects such as cattle, shells or metals first function as stores of
accumulated surplus and as media for settling obligations in long distance or
intergroup exchanges, rather than as neutral lubricants of local spot trade
\cite{Dalton1965,Einzig1949,Grierson1977}. Graeber, Ingham, Hudson and Scott
emphasise that the constraints of dyadic barter: double coincidence of wants,
spatial separation of agents, and temporal delay are systematically
overcome by credit relations and by commodities that circulate as
generalised equivalents and tax/tribute units
\cite{Graeber2011,Ingham2004,Hudson2002,Scott2017,Innes1913,Innes1914,Polanyi1957,Zelizer1994}. 
This supports modelling commodity money as emerging from overlapping regimes
of barter, debt, surplus and mobility, rather than as a simple linear
replacement for barter alone.

\paragraph{Stage \texorpdfstring{$1$}{1}: From Commodity money to Coinage.}

At stage $1$, the concept $C^{(1)}_{\textsf{Commodity}}$ is now part of the
stage-1 space $C^{(1)}$ and interacts with broader institutional concepts
such as $C_{\textsf{Authority}}$ and $C_{\textsf{Mobility}}$ (political and
logistical structure). The neighbourhood assignment $N^{(1)}$ assigns to
these concepts regions encoding, for example, stable control, taxation, and
large-scale circulation.

Overlaps $\mathcal{O}^{(1)}(C_{\textsf{Commodity}}, C_{\textsf{Authority}})$
and $\mathcal{O}^{(1)}(C_{\textsf{Commodity}}, C_{\textsf{Mobility}})$
yield profiles $p^{(1)}$ whose remainders
\[
  R_E(p^{(1)}) = \Int_{\mathcal{T}^{(1)}}\!\bigl(
    U \setminus (V_{\textsf{Commodity}} \cup V_{\textsf{Authority}})
  \bigr)
\]
and similar express regimes where the need for standardization, guaranteed
value, and controlled circulation becomes salient. As before, a nonempty
finite intersection of such remainders produces an emergent
\[
  C^{(2)}_{\textsf{Coinage}}
  := \Int_{\mathcal{T}^{(1)}}\!\Bigl(
       \bigcap_{\ell=1}^m R_E(p^{(1)}_\ell)
     \Bigr),
\]
which by general theory is open and lies in the channel ideal
$I^{(1)}_{\textsf{Commodity},\textsf{Authority}}$. Coinage thus appears as an
emergent open in the channel between commodity money and authority-based
institutions.

Historical accounts of early coinage emphasise precisely the intersection of
commodity values with political authority and large scale circulation.
Grierson and Einzig argue that coined money emerges when states or
city states begin stamping pieces of metal to guarantee weight and value and
to stabilise payments over distance \cite{Grierson1977,Einzig1949}. Innes,
Graeber, Ingham and Hudson stress that such coinage is closely tied to
taxation, tribute and the financing of armies: authorities designate a
standard unit, demand it back in taxes, and thereby drive its circulation
\cite{Innes1913,Innes1914,Graeber2011,Ingham2004,Hudson2002,Polanyi1944}.
Scott and Zelizer further underline the role of the state and other
institutions in organising controlled circuits of monetary flows for
administration and redistribution \cite{Scott2017,Zelizer1994}. This supports
treating coinage as an emergent concept at the interface of commodity money,
institutional authority and mobility.

\paragraph{Stage \texorpdfstring{$2$}{2}: From Coinage to Fiat money.}

The final step in the example treats overlaps between coinage and higher
institutional concepts such as law and trust. At stage $2$, the concept
space $C^{(2)}$ includes $C_{\textsf{Coinage}}, C_{\textsf{Law}},
C_{\textsf{Trust}}$, with neighbourhoods representing legal frameworks,
symbolic value, and expectations of acceptance. Profiles built from overlaps
$\mathcal{O}^{(2)}(C_{\textsf{Coinage}}, C_{\textsf{Law}})$ and
$\mathcal{O}^{(2)}(C_{\textsf{Coinage}}, C_{\textsf{Trust}})$ give remainders
\[
  R_E(p^{(2)}) = \Int_{\mathcal{T}^{(2)}}\!\bigl(
    U \setminus (V_{\textsf{Coinage}} \cup V_{\textsf{Law}})
  \bigr),
\]
encoding contexts where the material content of coins recedes and legal or
symbolic guarantees dominate. A nonempty finite intersection of such
remainders yields
\[
  C^{(3)}_{\textsf{Fiat}}
    := \Int_{\mathcal{T}^{(2)}}\!\Bigl(
         \bigcap_{\ell=1}^m R_E(p^{(2)}_\ell)
       \Bigr),
\]
an open region lying in the channel ideal
$I^{(2)}_{\textsf{Coinage},\textsf{Law}}$. Fiat money is thus an emergent
open in the channel ideal linking coinage to legal and trust structures.

Accounts of modern money stress that fiat currencies derive their value less
from metallic content and more from legal designation, tax obligations and
shared expectations of acceptance. Innes and Graeber argue that state money
functions as a transferable liability of the issuing authority, backed by the
requirement to pay taxes and settle debts in that unit rather than by any
intrinsic commodity value \cite{Innes1913,Innes1914,Graeber2011}. Ingham,
Hudson and Polanyi likewise emphasise legal-tender status, state spending and
taxation as central mechanisms that sustain monetary circuits independently of
convertibility into metal \cite{Ingham2004,Hudson2002,Polanyi1944,Polanyi1957},
while Scott and Zelizer highlight the broader institutional and social
frameworks that stabilise trust in such symbols \cite{Scott2017,Zelizer1994}.
This supports treating fiat money as emerging where coinage overlaps with
legal and trust structures, rather than as a purely material refinement of
metallic currency.

\paragraph{Stagewise CCER and the recursive chain.}

Collecting these constructions, the exchange chain is a concrete instance of
the general stagewise CCER rule. At each step
\[
  C^{(i+1)}_k = \Int_{\mathcal{T}^{(i)}}\!\Bigl(
    \bigcap_{\ell=1}^{m_i} R_E(p^{(i)}_\ell)
  \Bigr),
\]
with $p^{(i)}_\ell$ taken from overlaps among the relevant parent concepts.
Each emergent is an open element of a channel ideal generated by those
parents, and carries the maximal internal topology described in
Proposition~\ref{prop: max-internal}. The evolution maps and carry maps
$\sigma_i$ then adjoin these emergents to the next stage in accordance with
the Heraclitean flux conditions (H1)--(H5).

\paragraph{Global picture: channel components and connectedness.}

Section~7 globalizes this stagewise structure. The disjoint union
$\bigsqcup_i C^{(i)}$ with the sum topology, quotiented by the carry maps,
produces a global quotient space $(X,T_X)$. The exchange concepts
\[
  C_{\textsf{Gift/Ritual}} \to
  C_{\textsf{Barter}} \to
  C_{\textsf{Commodity}} \to
  C_{\textsf{Coinage}} \to
  C_{\textsf{Fiat}}
\]

become a single path in $X$. Corollary~\ref{cor:exchange-profile} shows that each node in this chain is an open region contained in some channel ideal, that their images remain open under the quotient embedding, and that, consequently, the union of mediated channels from barter to fiat forms a path-connected subset of $X$. In other words, the historical evolution of exchange appears as a single connected channel component in the global HDCS: a continuous trajectory of concepts generated by repeated application of the CCER mechanism.

%=================
%section8 page 
%============

\section{Topology on the HDCS of Exchange Evolution}

We fix a dialectical development $(C^{(i)},\mathcal F^{(i)},N^{(i)})_{i\in I}$ with external
topologies $\mathcal T^{(i)}$ generated by $N^{(i)}$, and evolution/carry maps
$\Phi$ and $\sigma_i:C^{(i)}\to C^{(i+1)}$ satisfying (H1)–(H5).

\subsection{Stagewise structure}

\begin{proposition}[Emergents are open and live in channel ideals]
\label{prop:emergent-open}
If $C^{(i)}_k=\Int_{\mathcal T^{(i)}}(W_a\cap W_b\cap R^{(i)}_E(p^{(i)}))$ is an emergent at stage $i$
from a profile $p^{(i)}=(R,U,V_a,V_b)$, then $C^{(i)}_k\in\mathcal T^{(i)}$ and
$C^{(i)}_k\in \mathcal I^{(i)}_{ab}$, where $\mathcal I^{(i)}_{ab}$ is the stage-$i$ channel ideal
generated by $O_i(C_a,C_b)$.
\end{proposition}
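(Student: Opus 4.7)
The plan is to verify the two assertions separately, both of which follow almost directly from the definitions, since this proposition is essentially a repackaging of the statement already established in Corollary 1 (Stagewise emergence).

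First, for openness, I would observe that $C^{(i)}_k$ is by definition the interior of the set $W_a \cap W_b \cap R^{(i)}_E(p^{(i)})$ taken in the external topology $\mathcal T^{(i)}$. Interiors are always open in the ambient topology, so $C^{(i)}_k \in \mathcal T^{(i)}$ is immediate.

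Next, for channel ideal membership, I would proceed as follows. Since $W_a \in N^{(i)}(C_a)$ and $W_b \in N^{(i)}(C_b)$ by the choice of witnesses in the CCER construction, and since the hypothesis that $C^{(i)}_k$ is a genuine emergent means $\operatorname{Int}_{\mathcal T^{(i)}}(W_a \cap W_b \cap R^{(i)}_E(p^{(i)})) \neq \varnothing$, in particular $W_a \cap W_b \neq \varnothing$. By closure of the feasible family under nonempty intersection (F$\cap$), $W_a \cap W_b \in \mathcal F^{(i)}$, so $W_a \cap W_b \in O_i(C_a, C_b)$. Hence $\operatorname{Int}_{\mathcal T^{(i)}}(W_a \cap W_b)$ is a generator of the stage-$i$ channel ideal $\mathcal I^{(i)}_{ab}$, and in particular lies in $\mathcal I^{(i)}_{ab}$.

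To conclude, I would note that $C^{(i)}_k \subseteq W_a \cap W_b$ by construction, and $C^{(i)}_k$ is open, so $C^{(i)}_k \subseteq \operatorname{Int}_{\mathcal T^{(i)}}(W_a \cap W_b)$. By downward closure of the ideal $\mathcal I^{(i)}_{ab}$ (part of Proposition~3 adapted to the stage setting), it follows that $C^{(i)}_k \in \mathcal I^{(i)}_{ab}$. There is no genuine obstacle here; the only minor subtlety is recognising that the nonemptiness hypothesis implicit in calling $C^{(i)}_k$ an ``emergent'' is exactly what is needed to place $W_a \cap W_b$ into the overlap family $O_i(C_a,C_b)$, since the definition of that family requires a nonempty intersection witness.
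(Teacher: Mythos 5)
Your proof is correct and follows essentially the same route as the paper's: place $W_a \cap W_b$ in the overlap family $O_i(C_a,C_b)$, note that its interior is a generator of $\mathcal I^{(i)}_{ab}$, and conclude by downward closure since $C^{(i)}_k \subseteq \operatorname{Int}_{\mathcal T^{(i)}}(W_a \cap W_b)$. You actually supply slightly more detail than the paper does (the explicit appeal to (F$\cap$) and to nonemptiness of the emergent to certify membership in the overlap family), which matches the fuller argument given in the proof of Corollary~\ref{cor:stage-emergence}.
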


\begin{proof}
Since $W_a$ and $W_b$ arise from a stage-$i$ profile between $C_a$
and $C_b$, their intersection $W_a \cap W_b$ lies in the overlap
family $O_i(C_a,C_b)$. Hence $\Int_{T^{(i)}}(W_a \cap W_b)$ is a
generator of $I^{(i)}_{ab}$, and by downward closure $C^{(i)}_k$ lies
in $I^{(i)}_{ab}$.
\end{proof}

\begin{proposition}[Maximal internal topology]\label{prop: max-internal}
Let $C^{(i)}_k$ be an emergent arising from witnesses $W_a,W_b \in \mathcal{T}^{(i)}$.
Then the internal (subspace) topology
\[
  \mathcal{T}^{(i)}_{C^{(i)}_k}
  \;:=\;
  \{\, U \cap C^{(i)}_k : U \in \mathcal{T}^{(i)} \,\}
\]
is maximal among topologies on $C^{(i)}_k$ whose bases consist solely of sets of the
form $B \cap C^{(i)}_k$ with $B$ open in $W_a$ or $W_b$ (with the subspace
topology from $\mathcal{T}^{(i)}$).
\end{proposition}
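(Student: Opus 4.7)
The plan is to adapt the proof of Proposition~\ref{prop:max-internal} directly to the stage-$i$ setting: the stagewise framework introduces no new structural ingredient that alters the argument, and the maximality claim reduces to a clean application of the containment $C^{(i)}_k \subseteq W_a \cap W_b$. This containment holds because $C^{(i)}_k = \Int_{\mathcal{T}^{(i)}}(W_a \cap W_b \cap R^{(i)}_E(p^{(i)}))$ is, by definition, the interior of a subset of $W_a \cap W_b$ (cf.\ Corollary~\ref{cor:stage-emergence}). The containment is exactly what allows any basic open inherited from a single witness to be rewritten as a restriction of an element of $\mathcal{T}^{(i)}$ to $C^{(i)}_k$.

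First, I would let $\mathcal{T}'$ denote an arbitrary topology on $C^{(i)}_k$ whose basis $\mathcal{B}'$ consists solely of sets of the form $B \cap C^{(i)}_k$, where $B$ is open in the subspace topology (inherited from $\mathcal{T}^{(i)}$) on either $W_a$ or $W_b$. For each such $B$ there exists $O \in \mathcal{T}^{(i)}$ with $B = O \cap W_a$ or $B = O \cap W_b$. Using $C^{(i)}_k \subseteq W_a$ in the first case and $C^{(i)}_k \subseteq W_b$ in the second, one obtains
\[
B \cap C^{(i)}_k \;=\; O \cap C^{(i)}_k,
\]
which is a member of $\mathcal{T}^{(i)}_{C^{(i)}_k}$ by definition of the subspace topology.

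To conclude, every element of $\mathcal{T}'$ is a union of elements of $\mathcal{B}'$, and since $\mathcal{T}^{(i)}_{C^{(i)}_k}$ is a topology on $C^{(i)}_k$ it is closed under arbitrary unions; hence every open set of $\mathcal{T}'$ lies in $\mathcal{T}^{(i)}_{C^{(i)}_k}$, giving $\mathcal{T}' \subseteq \mathcal{T}^{(i)}_{C^{(i)}_k}$, which is precisely the maximality statement. There is no genuine obstacle in this argument; the only step requiring care is the use of $C^{(i)}_k \subseteq W_a \cap W_b$ to collapse $(O \cap W_a) \cap C^{(i)}_k$ and $(O \cap W_b) \cap C^{(i)}_k$ both to $O \cap C^{(i)}_k$. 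This collapse is the mechanism by which opens drawn from either witness merge into a single family inside $\mathcal{T}^{(i)}_{C^{(i)}_k}$ and is exactly what makes the subspace topology maximal among candidate topologies of this form.
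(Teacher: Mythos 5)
Your proof is correct and follows the same overall strategy as the paper's: take an arbitrary topology $\mathcal T'$ with a basis of restricted witness-opens, rewrite each basic set in terms of a global open $O \in \mathcal T^{(i)}$, and conclude $\mathcal T' \subseteq \mathcal T^{(i)}_{C^{(i)}_k}$. The one substantive difference is in the middle step, and it is in your favour. The paper only derives the containment $B \cap C^{(i)}_k = (V \cap W_a)\cap C^{(i)}_k \subseteq V \cap C^{(i)}_k$ and then asserts that every open of $\mathcal T'$, being a union of such basics, "lies in" $\mathcal T^{(i)}_{C^{(i)}_k}$; read literally, that inference is incomplete, since a union of sets each merely \emph{contained in} some subspace-open need not itself be subspace-open. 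You instead invoke $C^{(i)}_k \subseteq W_a \cap W_b$ (which indeed holds, as $C^{(i)}_k$ is the interior of a subset of $W_a \cap W_b$; cf.\ Lemma~\ref{lem:emergent-basic} and Corollary~\ref{cor:stage-emergence}) to upgrade the containment to the exact identity $(O \cap W_a)\cap C^{(i)}_k = O \cap C^{(i)}_k$, so each basic set of $\mathcal T'$ is literally an element of $\mathcal T^{(i)}_{C^{(i)}_k}$ and the union step is then immediate from closure of a topology under arbitrary unions. This small repair is exactly what the paper's argument needs, so your version is the tighter one; no gaps.
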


\begin{proof}
Let $\mathcal{S}$ be any topology on $C^{(i)}_k$ with a base $\mathcal{B}$ such that every
$B \in \mathcal{B}$ has the form $B = U \cap C^{(i)}_k$ for some open set $U$ in the
subspace $W_a$ or $W_b$. Since $W_a$ and $W_b$ carry the subspace topology from
$\mathcal{T}^{(i)}$, every such $U$ can be written as $U = V \cap W_a$ or
$U = V \cap W_b$ for some $V \in \mathcal{T}^{(i)}$. Hence each basic element
$B \in \mathcal{B}$ satisfies
\[
  B = U \cap C^{(i)}_k
    = (V \cap W_a) \cap C^{(i)}_k
    \subseteq V \cap C^{(i)}_k,
\]
or similarly with $W_b$ in place of $W_a$. In particular, $V \cap C^{(i)}_k$ is an
element of $\mathcal{T}^{(i)}_{C^{(i)}_k}$ and contains $B$.

Therefore every basic open of $\mathcal{S}$ lies inside some open set
of $\mathcal{T}^{(i)}_{C^{(i)}_k}$, and consequently every open set of $\mathcal{S}$,
being a union of such basics, also lies in $\mathcal{T}^{(i)}_{C^{(i)}_k}$. Thus
$\mathcal{S} \subseteq \mathcal{T}^{(i)}_{C^{(i)}_k}$, which shows that
$\mathcal{T}^{(i)}_{C^{(i)}_k}$ is maximal among topologies on $C^{(i)}_k$ whose bases are
obtained by restricting opens from $W_a$ or $W_b$.
\end{proof}

\begin{lemma}[Structural locality under edits]\label{lem:local-stability}
If the evolution mechanism $\Phi$ acts within a region
$U \in \mathcal T^{(i)}$, then any change induced outside $U$ must be
mediated by existing neighbourhood overlaps or channel ideals connecting
$U$ to regions in $U^c$. In particular, no unmediated change propagates
outside the edit region.
\end{lemma}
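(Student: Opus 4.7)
The plan is to derive this lemma from condition (H2) together with the continuity hypothesis on the carry map $\sigma_i$, by a contrapositive argument. First, I would make the phrase ``$\Phi$ acts within $U$'' precise: the data $\mathrm{Events}^{(i)}$ feeding $\Phi$ is \emph{supported in $U$}, meaning that every declared edit and every profile $p^{(i)}=(R,U',V_a,V_b)$ recorded in $\mathrm{Events}^{(i)}$ satisfies $R,U' \subseteq U$, with parent concepts $C_a,C_b \in U$ and witnesses $V_a \in N^{(i)}(C_a)$, $V_b \in N^{(i)}(C_b)$ whose overlap contribution lies in $O_i(C_a,C_b)$ restricted to $U$. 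This captures the intended notion of an edit confined to the region $U$.

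Next, I would fix a concept $C^* \in C^{(i)}\setminus U$ and suppose that under $\Phi$ it undergoes a genuine change, in the sense that either $\sigma_i(C^*) \neq C^*$ when identified inside $C^{(i+1)}$, or $N^{(i+1)}(\sigma_i(C^*))$ differs from the transport of $N^{(i)}(C^*)$ along $\sigma_i$. To argue by contrapositive, I would assume additionally that $C^*$ has \emph{no} structural connection to $U$: for every $P \in U$, every $V \in N^{(i)}(C^*)$ and every $W \in N^{(i)}(P)$ we have $V \cap W = \varnothing$, and moreover the stage-$i$ analogue of the mediated channel ideal $I^{(*)}_{PC^*}$ is trivial, i.e.\ equal to $\{\varnothing\}$, for every $P \in U$.

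The heart of the argument is then a direct application of (H2): since no adjacency chain of any finite length connects any $P \in U$ to $C^*$, no open witness in $\mathcal T^{(i)}$ lies in a nontrivial channel transporting $U$-supported edit data to $C^*$. By the locality clause of (H2), $\Phi$ must therefore leave $C^*$ and its neighbourhood data invariant, contradicting the assumed change. Continuity of $\sigma_i$ enters at this point to ensure that if $N^{(i)}(C^*)$ is untouched, then the image neighbourhood system at stage $i+1$ is the faithful transport rather than a perturbation introduced through a discontinuous identification. Taking the contrapositive yields the lemma.

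The main obstacle is interpretive rather than computational: because $\Phi$ is specified only schematically as any transformation satisfying \textup{(H2)--(H5)}, the lemma is essentially a consistency assertion on admissible evolution maps rather than a deep theorem about a fixed construction. A secondary subtlety is pinning down the correct notion of mediating channel; I would use the multi-step ideals $I^{(*)}_{ij}$ of Section~4 rather than only the one-step $I^{(i)}_{ij}$, so that iterated adjacency propagation is captured and the lemma covers indirect as well as direct mediation through chains of overlapping neighbourhoods.
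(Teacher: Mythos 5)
Your proposal is correct in substance, but it does far more work than the paper, whose entire proof reads: ``This is precisely the structural locality condition (H2).'' The lemma is a verbatim restatement of the axiom (H2), so the paper treats it as true by fiat rather than as something to be derived. Your contrapositive argument --- formalising ``$\Phi$ acts within $U$'' as support of $\mathrm{Events}^{(i)}$ in $U$, isolating a concept $C^* \notin U$ with trivial mediated ideals $I^{(*)}_{PC^*}$ for all $P \in U$, and concluding invariance of $C^*$ from (H2) --- is a reasonable attempt to give the statement genuine mathematical content, and your closing diagnosis (that this is a consistency assertion on admissible evolution maps rather than a theorem about a fixed construction) matches the paper's actual stance exactly. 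Two caveats: the continuity of $\sigma_i$ that you invoke is not part of the lemma's hypotheses (the paper treats carry-map continuity as a separate, optional design constraint introduced later), so your argument quietly imports an assumption the paper does not use here; and your appeal to the multi-step ideals $I^{(*)}_{ij}$ is a sensible reading of ``mediated,'' but the paper never pins down which notion of channel (one-step or mediated) (H2) refers to, so this is an interpretive choice rather than something forced by the text. Neither point is an error --- they are places where you are filling in detail the paper leaves unspecified.
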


\begin{proof}
This is precisely the structural locality condition (H2).
\end{proof}

\newcommand{\Sat}{\operatorname{Sat}}

\newpage

%page27

\subsection{Globalizing across stages}

Define the \emph{timeline space} as the disjoint union
$\bigsqcup_{i\in I} C^{(i)}$ with the sum topology
$\bigsqcup_{i\in I} T^{(i)}$.
For each $i$ and each $x$ in the domain of the carry map
$\sigma_i : C^{(i)} \rightharpoonup C^{(i+1)}$, identify
$x \in C^{(i)}$ with its carried point $\sigma_i(x)\in C^{(i+1)}$.
Let $X$ be the resulting quotient; write
$q:\bigsqcup_i C^{(i)}\to X$ for the quotient map, and
\[
  q_i := q|_{C^{(i)}} : C^{(i)} \to X
\]
for its restriction to stage $i$.
We call $(X,T_X)$ the \emph{HDCS colimit of stages}, and we also write
$\iota_i := q_i$ for the inclusion of $C^{(i)}$ into $X$.

Given a subset $U^{(i)} \subseteq C^{(i)}$, we write
\[
  \Sat(U^{(i)}) \;:=\; q^{-1}\bigl(q_i(U^{(i)})\bigr)
  \;\subseteq\; \bigsqcup_{j} C^{(j)}
\]
for its \emph{saturation} with respect to the quotient $q$.  Equivalently,
$\Sat(U^{(i)})$ is the union of all equivalence classes in the timeline
space that meet $U^{(i)}$.

\begin{proposition}[Stage inclusions into the HDCS colimit]\label{prop:stage-inclusion}
Let $X$ be the HDCS colimit of a dialectical concept space
$\bigl(C^{(i)},F^{(i)},N^{(i)}\bigr)_{i \in I}$ with carry maps
$\sigma_i : C^{(i)} \rightharpoonup C^{(i+1)}$ as in \textup{(H5)}.
Let $q : \bigsqcup_{j} C^{(j)} \to X$ be the quotient map and
$q_i := q|_{C^{(i)}}$ its restriction to the $i$th stage. Then:
\begin{enumerate}
  \item For each $i$, the map
  \[
    q_i : \bigl(C^{(i)},\mathcal T^{(i)}\bigr) \longrightarrow (X,\mathcal T_X)
  \]
  is injective and continuous.

  \item Moreover, if $U^{(i)} \in \mathcal T^{(i)}$ is an open set whose
  saturation $\operatorname{Sat}(U^{(i)}) = q^{-1}\bigl(q_i(U^{(i)})\bigr)$
  is open in the sum topology on $\bigsqcup_j C^{(j)}$, then
  $q_i(U^{(i)})$ is open in $X$.
\end{enumerate}
\end{proposition}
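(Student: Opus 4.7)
The plan is to treat the three assertions of the proposition separately: continuity of $q_i$, the openness statement, and injectivity of $q_i$. The first two are essentially formal consequences of the quotient-topology definition, while injectivity is where the structural content of carry-map injectivity genuinely enters.

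For continuity, the quotient map $q : \bigsqcup_{j} C^{(j)} \to X$ is continuous by definition of $\mathcal T_X$, and the canonical inclusion $C^{(i)} \hookrightarrow \bigsqcup_j C^{(j)}$ is continuous for the sum topology; hence $q_i = q|_{C^{(i)}}$ is continuous as a composition of continuous maps. For the openness assertion, I would apply the defining property of the quotient topology directly: a set $V \subseteq X$ is open iff $q^{-1}(V)$ is open in the sum topology on the timeline. Taking $V = q_i(U^{(i)})$, and noting that $q^{-1}(q_i(U^{(i)})) = \operatorname{Sat}(U^{(i)})$ by the definition of saturation, the hypothesis that $\operatorname{Sat}(U^{(i)})$ is open in the sum topology immediately yields $q_i(U^{(i)}) \in \mathcal T_X$.

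Injectivity is the step with genuine content. Here I would unfold the equivalence relation $\sim$ on $\bigsqcup_j C^{(j)}$ generated by the identifications $x \sim \sigma_j(x)$ for $x \in \operatorname{dom}(\sigma_j)$. By the remark following (H5), each $\sigma_j$ is injective on its domain, so each carry map is a partial injection between stages. From this I plan to deduce that every equivalence class meets each stage $C^{(j)}$ in at most one point. The cleanest route is to consider a minimal zigzag in the timeline between two putatively equivalent points $x, y \in C^{(i)}$: any adjacent \emph{up-then-down} segment (going from $z$ to $\sigma_j(z)$ and immediately back) collapses because $\sigma_j$ is a function, and any adjacent \emph{down-then-up} segment collapses by partial injectivity of $\sigma_j$. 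A minimal zigzag between two points of the same stage must therefore have length zero, forcing $x = y$, and hence $q_i$ is injective.

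The main obstacle is this combinatorial handling of zigzags when the carry maps are only partially defined, so that backward preimage moves are not always available. The bookkeeping needs to be set up explicitly, most naturally by induction on zigzag length together with a case analysis on the direction of each step, to ensure that no subtle interaction between forward and backward moves across several stages can glue two distinct points of the same $C^{(i)}$. Once this reduction is in place, the three assertions assemble into the stated proposition.
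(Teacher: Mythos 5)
Your proposal is correct and follows essentially the same route as the paper: continuity of $q_i$ as a composite of the sum-topology inclusion with the quotient map, openness via $q^{-1}(q_i(U^{(i)}))=\operatorname{Sat}(U^{(i)})$ and the defining property of the quotient topology, and injectivity from the injectivity of the carry maps on their domains. The only difference is that you spell out the zigzag reduction (collapsing peaks by injectivity of $\sigma_j$ and valleys by single-valuedness) where the paper simply asserts in one line that each equivalence class meets each stage at most once; your version supplies the combinatorial detail that assertion tacitly relies on.
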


\begin{proof}
On the disjoint union $\bigsqcup_j C^{(j)}$ equipped with the sum
topology, each inclusion
\[
  \iota_i' : C^{(i)} \hookrightarrow \bigsqcup_j C^{(j)}
\]
is continuous. The quotient map $q$ is continuous by definition of the
quotient topology, so the composite $q_i = q \circ \iota_i'$ is
continuous.

The equivalence relation used to form $X$ is generated by pairs
$(x,\sigma_i(x))$ for $x$ in the domain of $\sigma_i$. Since each
$\sigma_i$ is injective on its domain (by \textup{(H5)}), an
equivalence class contains at most one point from each stage
$C^{(i)}$. Thus if $x,y \in C^{(i)}$ and $q_i(x)=q_i(y)$, then $x$ and
$y$ lie in the same class and hence $x=y$. This shows that $q_i$ is
injective.

For \textup{(2)}, recall that a subset $V \subseteq X$ is open if and only if its
full preimage $q^{-1}(V)$ is open in the disjoint union. By definition
of $\operatorname{Sat}$ we have
\[
  q^{-1}\bigl(q_i(U^{(i)})\bigr) \;=\; \operatorname{Sat}(U^{(i)}).
\]
If $\operatorname{Sat}(U^{(i)})$ is open in $\bigsqcup_j C^{(j)}$, then
$q_i(U^{(i)})$ is open in $X$ by the definition of the quotient
topology.
\end{proof}

\noindent
In the applications below we only use part~\textup{(2)} for special open sets
(such as emergent regions and finite unions of their remainders) whose
saturations are open by construction. We do not assume that the
saturation of an arbitrary stage-open is open in the union.

\medskip
In particular, by Proposition~12 each emergent region $C^{(i)}_k$ is open in $T^{(i)}$ at the
stage where it is first constructed.  By the Heraclitean persistence axiom (H3), its carried copy
at the next stage is adjoined to $C^{(i+1)}$ via the carry map $\sigma_i$ in such a way that the
internal topology on $\sigma_i(C^{(i)}_k)$ agrees with that of $C^{(i)}_k$ (up to the
identification induced by $\sigma_i$).  Thus dialectical innovations persist not only as abstract
concepts but as locally stable regions across the evolutionary timeline: the HDCS guarantees that
emergent regions do not vanish or dissolve in subsequent stages, but instead propagate coherently
under the system's dynamics.  This reflects a key Heraclitean intuition: conceptual change proceeds
through transformation and layering, not discontinuity or rupture.

To make the continuity of the global evolution map $\hat{\sigma}_i$ precise, we assume that each
carry map
\[
  \sigma_i : (C^{(i)},T^{(i)}) \longrightarrow (C^{(i+1)},T^{(i+1)})
\]
is continuous in the usual topological sense: for every $U^{(i+1)} \in T^{(i+1)}$ the preimage
$\sigma_i^{-1}(U^{(i+1)})$ lies in $T^{(i)}$.  By Proposition~\ref{prop:stage-inclusion} the
inclusion $\iota_{i+1} = q_{i+1} : C^{(i+1)} \to X$ is continuous, so the composite
\[
  \hat{\sigma}_i \;:=\; \iota_{i+1} \circ \sigma_i : C^{(i)} \longrightarrow X
\]
is continuous as well.  Informally, the continuity of $\sigma_i$ expresses that edits and the
introduction of emergent regions occur inside open regions of the stage topology: open regions of
$C^{(i+1)}$ pull back to open regions of $C^{(i)}$, and hence the global maps $\hat{\sigma}_i$
respect the topological structure of the HDCS while transporting concepts and their neighbourhoods
forward through time.

\begin{definition}[Stage convergence]\label{def:stage-convergence}
Let $X$ be the HDCS colimit with quotient map
$q : \bigsqcup_i C^{(i)} \to X$.
A net $(x_\alpha)_{\alpha\in A}$ with $x_\alpha \in C^{(i_\alpha)}$
\emph{stage-converges} to a point $x \in X$ if the image net
$(q(x_\alpha))_\alpha$ converges to $x$ in $(X,\mathcal T_X)$.
We say that $(x_\alpha)$ \emph{stabilizes in stage $j$} if there exists
$\alpha_0 \in A$ such that $i_\alpha \ge j$ and $q(x_\alpha) \in q(C^{(j)})$
for all $\alpha \succeq \alpha_0$.
\end{definition}

\begin{proposition}[Persistence $\Rightarrow$ eventual stabilization]
\label{prop:persistence-stabilization}
Let $x^{(i)} \in C^{(i)}$ be a fixed concept, and let
$(x_\alpha)$ be a net of carried copies of $x^{(i)}$ along the carry
maps $\sigma_i,\sigma_{i+1},\dots$; that is, for each $\alpha$ there is
some $k_\alpha \ge i$ with
\[
  x_\alpha \;=\; \sigma_{k_\alpha-1} \circ \cdots \circ \sigma_i(x^{(i)})
  \in C^{(k_\alpha)}.
\]
Then $(x_\alpha)$ stabilizes and stage-converges to its trajectory class
$q(x^{(i)}) \in X$.
\end{proposition}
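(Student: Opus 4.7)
The plan is to reduce everything to the observation that the quotient $q$ collapses each carry-chain of a fixed concept to a single point, so the image net $(q(x_\alpha))$ is actually \emph{constant}.

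First I would recall the construction of $X$: the equivalence relation used to form $X$ is generated by the pairs $(x,\sigma_k(x))$ for $x \in \mathrm{dom}(\sigma_k)$. By transitivity, this relation identifies $x^{(i)}$ with every iterated image $\sigma_{k-1}\circ\cdots\circ\sigma_i(x^{(i)})$ whenever the composite is defined. Hence, for every $\alpha$,
\[
q(x_\alpha) \;=\; q\bigl(\sigma_{k_\alpha-1}\circ\cdots\circ\sigma_i(x^{(i)})\bigr) \;=\; q(x^{(i)}).
\]
Thus the net $(q(x_\alpha))_\alpha$ in $X$ is the constant net with value $q(x^{(i)})$, and so it converges trivially to $q(x^{(i)}) \in (X,\mathcal T_X)$. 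This gives stage-convergence to the trajectory class of $x^{(i)}$ in the sense of Definition~\ref{def:stage-convergence}.

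Next I would verify stabilization in stage $j := i$. By hypothesis, $k_\alpha \ge i$ for every $\alpha$, so the index condition $i_\alpha \ge j$ holds with $\alpha_0$ any element of $A$. Since $q(x_\alpha) = q(x^{(i)})$ and $x^{(i)} \in C^{(i)}$, we have $q(x_\alpha) \in q(C^{(i)}) = q_i(C^{(i)})$ for every $\alpha$. Both clauses of Definition~\ref{def:stage-convergence} are therefore satisfied, and $(x_\alpha)$ stabilizes in stage $i$.

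There is no real obstacle here: the only subtle point to make explicit is that the equivalence relation is the one \emph{generated} by the pairs $(x,\sigma_k(x))$, so that arbitrary finite composites of carry maps are included. Once this is stated, both stage-convergence and stabilization reduce to the fact that all $x_\alpha$ lie in the same $q$-fibre as $x^{(i)}$, and no topological argument beyond Proposition~\ref{prop:stage-inclusion} is needed.
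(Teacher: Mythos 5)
Your proposal is correct and follows essentially the same route as the paper: the image net $(q(x_\alpha))$ is constant because all carried copies lie in one $q$-fibre, so stage-convergence is immediate. Your stabilization step is in fact slightly cleaner than the paper's, which vaguely appeals to (H3)--(H5) for a ``tail'' of stages, whereas you simply take $j=i$ and verify both clauses of Definition~\ref{def:stage-convergence} directly from $k_\alpha \ge i$ and $q(x_\alpha)=q(x^{(i)})\in q(C^{(i)})$.
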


\begin{proof}
By construction and the identity-through-change condition (H5), all
carried copies of $x^{(i)}$ lie in the same equivalence class in the
quotient. Thus
\[
  q(x_\alpha) = q(x^{(i)}) \qquad \text{for all }\alpha,
\]g
so the image net $(q(x_\alpha))_\alpha$ is constant and hence converges
to $q(x^{(i)})$ in $X$. This is exactly stage convergence to the
trajectory class $q(x^{(i)})$.

Moreover, the Heraclitean persistence conditions (H3)–(H5) guarantee
that the carried copies of $x^{(i)}$ occur in some tail of the stage
sequence $C^{(j)},C^{(j+1)},\dots$, so there is $j$ and $\alpha_0$ such
that $x_\alpha \in C^{(j')}$ with $j' \ge j$ for all $\alpha \succeq
\alpha_0$. Hence $(x_\alpha)$ stabilizes in stage $j$ in the sense of
Definition~\ref{def:stage-convergence}.
\end{proof}

\subsection{Continuity of evolution and channel structure}
For each $i$ let $\iota_i : C^{(i)} \to X$ be the stage inclusion
$\iota_i := q_i = q|_{C^{(i)}}$ and define the global evolution map
\[
  \hat{\sigma}_i \;:=\; \iota_{i+1} \circ \sigma_i : C^{(i)} \longrightarrow X.
\]

\begin{proposition}[Continuity of evolution]\label{prop:continuity-evolution}
Assume that each carry map
\[
  \sigma_i : \bigl(C^{(i)},\mathcal T^{(i)}\bigr)
  \longrightarrow \bigl(C^{(i+1)},\mathcal T^{(i+1)}\bigr)
\]
is continuous. Then, for every $i$, the global evolution map
\[
  \hat{\sigma}_i : \bigl(C^{(i)},\mathcal T^{(i)}\bigr)
  \longrightarrow (X,\mathcal T_X)
\]
is continuous. If, moreover, the evolution map $\Phi$ is defined on opens
and satisfies \textup{(H2)}–\textup{(H3)}, then
$\Phi : \mathcal T^{(i)} \to \mathcal T^{(i+1)}$ is interior-preserving,
($\Phi$ preserves interiors of open regions on which it acts), 
and $\hat{\sigma}_i$ is an open map when restricted to emergent regions
(equipped with their internal topologies).
\end{proposition}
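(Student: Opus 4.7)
The plan is to decompose the proposition into its three constituent claims and discharge each one using the structural results already established. For the continuity of $\hat{\sigma}_i$, I would invoke Proposition~\ref{prop:stage-inclusion}(1), which gives continuity of the stage inclusion $\iota_{i+1} = q_{i+1} : C^{(i+1)} \to X$, and then observe that $\hat{\sigma}_i = \iota_{i+1} \circ \sigma_i$ is a composition of continuous maps under the standing hypothesis that each $\sigma_i$ is continuous. Concretely, for $V \in \mathcal T_X$ we have $\hat{\sigma}_i^{-1}(V) = \sigma_i^{-1}(\iota_{i+1}^{-1}(V))$, and both preimages lie in the relevant stage topologies by continuity of the respective maps.

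For the interior-preserving property of $\Phi$, the idea is to unpack the sketch of $\Phi$ given earlier in combination with \textup{(H2)}. Given $U \in \mathcal T^{(i)}$ on which $\Phi$ acts, structural locality says that any change propagates only through existing neighbourhood overlaps; combined with the construction of $\mathcal T^{(i+1)}$ from carried-forward neighbourhoods plus adjoined emergent opens, this should yield that if $P \in \Phi(U)$ and $P = \sigma_i(Q)$ for some $Q \in U$, then a neighbourhood $W \in N^{(i)}(Q)$ with $W \subseteq U$ maps into a neighbourhood of $P$ contained in $\Phi(U)$. Thus $\Phi(U)$ absorbs a neighbourhood around each of its points and is therefore open in $\mathcal T^{(i+1)}$, which gives $\Phi(\operatorname{Int}_{\mathcal T^{(i)}}(U)) \subseteq \operatorname{Int}_{\mathcal T^{(i+1)}}(\Phi(U))$.

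For the openness of $\hat{\sigma}_i$ restricted to an emergent region $C^{(i)}_k$ with its internal topology $\mathcal T^{(i)}_{C^{(i)}_k}$, I would first use \textup{(H3)} (emergence persistence) to note that $\sigma_i$ transports $C^{(i)}_k$ to a subregion of $C^{(i+1)}$ carrying the same inherited internal topology. Thus, for any open $U \subseteq C^{(i)}_k$ in its subspace topology, $\sigma_i(U)$ is open in the subspace topology on $\sigma_i(C^{(i)}_k) \subseteq C^{(i+1)}$, and by persistence it extends to an open set of $\mathcal T^{(i+1)}$ of the form $V \cap \sigma_i(C^{(i)}_k)$. Its saturation across the timeline $\bigsqcup_j C^{(j)}$ is open by construction, because emergent regions are adjoined stage by stage along the carry maps and their images form an open system under the sum topology. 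Proposition~\ref{prop:stage-inclusion}(2) then delivers that $\iota_{i+1}(\sigma_i(U)) = \hat{\sigma}_i(U)$ is open in $X$.

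The main obstacle is the interior-preserving claim for $\Phi$, because $\Phi$ is only sketched rather than fully specified. The argument must not appeal to any particular implementation of $\Phi$ but only to the axiomatic behaviour guaranteed by \textup{(H2)}–\textup{(H3)} together with the minimal description of neighbourhood inheritance. The delicate point is showing that no discontinuity is introduced by the adjoining of emergents: one needs to rule out the pathological case in which carrying forward a neighbourhood creates a point in $\Phi(U)$ whose stage-$(i+1)$ neighbourhoods escape $\Phi(U)$. The locality axiom \textup{(H2)} is precisely what forbids this, but making that reasoning fully rigorous requires some care in how neighbourhoods of carried points are constructed inside $\Phi(U)$.
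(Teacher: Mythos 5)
Your proposal follows essentially the same route as the paper's proof: continuity of $\hat{\sigma}_i$ via composition with the continuous stage inclusion from Proposition~\ref{prop:stage-inclusion}, the interior-preserving property of $\Phi$ argued axiomatically from \textup{(H2)}--\textup{(H3)} (with the same honest acknowledgement that $\Phi$ is only sketched), and openness on emergent regions from persistence of their internal topology. Your routing of the last step through saturations and Proposition~\ref{prop:stage-inclusion}(2) is marginally more explicit than the paper's direct appeal to openness of subspace inclusions, but it is the same argument at the same level of rigour.
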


\begin{proof}
By Proposition~\ref{prop:stage-inclusion}, each inclusion
$\iota_{i+1} = q_{i+1} : C^{(i+1)} \to X$ is continuous. Since
$\sigma_i : (C^{(i)},\mathcal T^{(i)}) \to (C^{(i+1)},\mathcal T^{(i+1)})$
is continuous by assumption, the composite
\[
  \hat{\sigma}_i \;=\; \iota_{i+1} \circ \sigma_i
\]
is continuous as a composite of continuous maps. This proves the first
claim.

For the second claim, (H2) states that edits are local: outside any region
$U \subseteq C^{(i)}$ on which $\Phi$ acts, the subspace topologies on
$C^{(i)} \setminus U$ and $C^{(i+1)} \setminus U$ agree. Condition (H3)
says that emergent regions $C^{(i)}_k$ are adjoined at stage $i{+}1$ as
subspaces that retain their inherited internal topology. Together with
Proposition~12, which asserts that emergent interiors are open in
$\mathcal T^{(i)}$ and lie in the appropriate channel ideals, this implies
that $\Phi$ sends interior points to interior points; in particular,
$\Phi : \mathcal T^{(i)} \to \mathcal T^{(i+1)}$ is interior-preserving.

On an emergent region $C^{(i)}_k$, the restriction of $\hat{\sigma}_i$
agrees with the inclusion of an open subspace of $C^{(i+1)}$ into $X$ (via
$\iota_{i+1}$). Such inclusions are open with respect to the subspace
topologies, so $\hat{\sigma}_i$ is open on emergent regions.
\end{proof}

\noindent
Informally, the continuity assumption on $\sigma_i$ reflects the way
edits and emergent regions are constructed: if an open region
$U^{(i+1)}$ does not meet any new emergents, then $\sigma_i$ acts like
the identity on $U^{(i+1)}$; if it does meet an emergent $C_k^{(i+1)}$,
then $C_k^{(i+1)}$ arose from an open remainder region at stage $i$.
Thus preimages of opens are unions of opens, so the evolution maps
respect the stage topologies.

\begin{proposition}[Monotonicity of mediated channels]\label{prop:monotone-mediated}
Fix concepts $A,B$ that persist across stages. For each $k \ge 1$ let
$I^{(k)}_{AB}$ denote the $k$-step channel ideal between $A$ and $B$
(as in Proposition~\ref{prop:multi-step-ideals}), and let
$I^{(*)}_{AB}$ be the comprehensive mediated channel ideal generated by
all finite paths from $A$ to $B$. Then
\[
  I^{(1)}_{AB} \;\subseteq\; I^{(2)}_{AB} \;\subseteq\; \cdots
  \qquad\text{and}\qquad
  \bigcup_{k \ge 1} I^{(k)}_{AB} \;\subseteq\; I^{(*)}_{AB}.
\]
\end{proposition}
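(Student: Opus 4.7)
The statement is essentially a specialization of Proposition~\ref{prop:multi-step-ideals}(2) to the pair $(A,B)$, together with an unwinding of the definition of $I^{(*)}_{AB}$. My plan is to handle the two inclusions in turn, reducing each to a direct comparison of generating families, and then to close the argument by invoking downward closure of the ideals.

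For the first chain $I^{(1)}_{AB}\subseteq I^{(2)}_{AB}\subseteq\cdots$, I would argue at the level of generators. Every generator of $I^{(k)}_{AB}$ has the form $\operatorname{Int}_{\mathcal T}(\bigcup_{t=1}^{m} S_t)$ with $S_t\in\mathcal O^{(k)}(A,B)$, and each such $S_t$ is a union of overlap witnesses along some length-$k$ adjacency chain $(A=X_0,\dots,X_k=B)$. Extending this chain to length $k+1$ by repeating the terminal concept $X_k=B$ (which is adjacent to itself once $N(B)\neq\varnothing$, guaranteed by (N0); alternatively, one may repeat $X_0=A$ at the start), and adjoining a witness for the trivial overlap $B\sim B$, yields an element of $\mathcal O^{(k+1)}(A,B)$ with the same underlying union. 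Hence $\mathcal O^{(k)}(A,B)\subseteq \mathcal O^{(k+1)}(A,B)$, so the generators of $I^{(k)}_{AB}$ are also generators of $I^{(k+1)}_{AB}$, and by downward closure $I^{(k)}_{AB}\subseteq I^{(k+1)}_{AB}$.

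For the second inclusion, I would simply compare the generating families defining $I^{(k)}_{AB}$ and $I^{(*)}_{AB}$. By Definition~\ref{def:k-step-ideal}, the generators of $I^{(*)}_{AB}$ are interiors of finite unions of elements drawn from $\bigcup_{k\ge 1}\mathcal O^{(k)}(A,B)$, while those of $I^{(k)}_{AB}$ come from $\mathcal O^{(k)}(A,B)$ alone. Thus every generator of $I^{(k)}_{AB}$ appears among the generators of $I^{(*)}_{AB}$, giving $I^{(k)}_{AB}\subseteq I^{(*)}_{AB}$ for every $k\ge 1$. Since a union of subsets of $I^{(*)}_{AB}$ is contained in $I^{(*)}_{AB}$, we conclude $\bigcup_{k\ge 1}I^{(k)}_{AB}\subseteq I^{(*)}_{AB}$.

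The argument is almost entirely formal, and I do not anticipate a genuine obstacle; the only mildly delicate point is the chain-extension step in the first inclusion, which implicitly uses that $N(B)$ (equivalently $N(A)$) is nonempty so that $B\sim B$ witnesses a trivial length-one overlap. This is automatic under (N0) for singleton concepts, and the persistence of $A,B$ across stages ensures that the stage at which one forms $I^{(k)}_{AB}$ continues to have both endpoints available, so the extension is well-defined at every $k$.
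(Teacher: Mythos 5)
Your proof is correct and follows essentially the same route as the paper, which simply invokes Proposition~\ref{prop:multi-step-ideals}(2); your chain-extension argument is precisely the one used to prove that proposition, and your comparison of generating families for the second inclusion matches the paper's ``by construction'' remark. The only minor imprecision is the claim that the extended chain has ``the same underlying union'': adjoining a witness for $B \sim B$ may strictly enlarge the union $\bigcup_r W_r$, but since each $S \in \mathcal O^{(k)}(A,B)$ is then \emph{contained in} some $S' \in \mathcal O^{(k+1)}(A,B)$, downward closure of $I^{(k+1)}_{AB}$ still yields the inclusion, so nothing breaks.
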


\begin{proof}
In the static setting, Proposition~\ref{prop:multi-step-ideals} shows that
for any pair of concepts $A,B$ the $k$-step channel ideals satisfy
\[
  I^{(1)}_{AB} \subseteq I^{(2)}_{AB} \subseteq \cdots \subseteq I^{(*)}_{AB},
\]
and that each $I^{(k)}_{AB}$ is contained in the ideal generated by all
finite-step overlaps. Applying this stagewise whenever $A$ and $B$ are
present yields the claimed chain of inclusions, and the union
$\bigcup_{k\ge 1} I^{(k)}_{AB}$ is contained in $I^{(*)}_{AB}$ by
construction.
\end{proof}

\noindent
When $U \in I^{(k)}_{AB}$ has saturation
$\operatorname{Sat}(U) = q^{-1}(q_i(U))$ open in the timeline space,
Proposition~\ref{prop:stage-inclusion} implies that its image
$q_i(U) \subseteq X$ is open. In the examples below we apply this only to
specific opens (such as overlap regions and finite unions of remainders)
whose saturations are open by construction.

\subsection{Connectedness and compactness along channels}

\begin{definition}[Mediated adjacency]\label{def:mediated-adjacency}
Let $A,B$ be regions (opens) in a stage space $(C^{(i)},\mathcal T^{(i)})$.
We write
\[
  A \sim^{(*)} B
\]
if there exists a finite sequence of regions
\[
  A = X_0, X_1,\dots,X_k = B
\]
with $k \ge 1$ such that $X_{r-1} \sim X_r$ for all $r=1,\dots,k$, where
$\sim$ denotes single-step adjacency. In other words, $A$ and $B$ are
joined by a finite adjacency path.
\end{definition}

\begin{definition}[Channel component]\label{def:channel-component}
Let $A,B$ be regions in a stage space with $A \sim^{(*)} B$ in the sense
of Definition~\ref{def:mediated-adjacency}. 
The channel component from $A$ to $B$, written $\mathrm{Ch}(A\Rightarrow B)$,
is the union of all regions appearing in adjacency paths from $A$ to $B$.

\end{definition}

\begin{proposition}[Connectedness of a channel component]\label{prop:channel-connected}
Let $A,B$ be concepts at some stage, and suppose there exists a finite
witnessed adjacency path
\[
  A = X_0 \sim X_1 \sim \cdots \sim X_k = B
\]
with witnesses $W_r \in O(X_{r-1},X_r)$, $r=1,\dots,k$.  Define
\[
  U \;:=\; \bigcup_{r=1}^k W_r \;\subseteq C^{(i)}.
\]
If each witness $W_r$ is connected (in the stage topology $\mathcal T^{(i)}$),
then $U$ is connected, and hence $A$ and $B$ lie in the same connected
component of the channel generated by this path.
\end{proposition}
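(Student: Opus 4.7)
The plan is to reduce the statement to the classical topological fact that a finite family of connected subsets of a space, arranged so that consecutive members share a point (a \emph{chain of connected sets}), has connected union. Given this, the task becomes purely combinatorial: verify that the $W_r$ link up appropriately along the adjacency path, and then chain them together.

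First, I would unpack each witness as $W_r = U_r \cap V_r$ with $U_r \in N(X_{r-1})$ and $V_r \in N(X_r)$, so by (N0) the concept $X_{r-1}$ lies in $U_r$ and $X_r$ lies in $V_r$. For the linking step I would focus on the pivot concept $X_r$ shared between witnesses $W_r$ and $W_{r+1}$: both $V_r$ and $U_{r+1}$ are neighbourhoods of $X_r$, so by (N$\cap$) their intersection $V_r \cap U_{r+1}$ is a nonempty element of $N(X_r)$, giving a canonical ``pivot neighbourhood'' through which consecutive witnesses communicate. Then I would verify, by induction on $k$, that the partial unions $\bigcup_{r=1}^s W_r$ are connected: the inductive step concatenates a connected set to an already connected tail via the pivot, invoking the chain lemma. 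Finally, since $A = X_0 \in U_1 \supseteq W_1 \subseteq U$ (or rather, $A$ lies in the closure of this channel structure) and $B = X_k$ similarly sits within the last witness's neighbourhood, $A$ and $B$ are contained in the same connected component of the channel $\mathrm{Ch}(A \Rightarrow B)$.

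The main obstacle is precisely the linking step: it is \emph{not} automatic that $W_r \cap W_{r+1} \neq \varnothing$, since $W_r = U_r \cap V_r$ need not contain $X_r$ (as $X_r$ may fail to lie in $U_r$), and symmetrically $X_r$ need not lie in $W_{r+1}$. So the pivot neighbourhood $V_r \cap U_{r+1}$ does not obviously meet $W_r$ or $W_{r+1}$. To rescue the argument I see two routes: (a) strengthen the hypothesis by requiring that the given witnessed path satisfies $W_r \cap W_{r+1} \neq \varnothing$ for each $r$ (which can be arranged without loss of generality by refining witnesses via (N$\cap$), since both $W_r$ and $W_{r+1}$ meet some common neighbourhood of $X_r$); or (b) replace $U$ by the slightly larger connected set $U' := U \cup \bigcup_{r=1}^{k-1}(V_r \cap U_{r+1})$, which lies entirely in the channel component by construction and to which the chain lemma applies directly. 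Either route yields the conclusion; I would favour (a) as a clarifying remark and carry out (b) as the formal argument, noting that $U'$ still lies in $\mathrm{Ch}(A \Rightarrow B)$ and that $A,B$ belong to its closure in the stage topology.
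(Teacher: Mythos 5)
Your core argument is the same as the paper's: both reduce to the classical chain lemma that a finite union of connected sets with nonempty consecutive intersections is connected, applied inductively to $W_1,\dots,W_k$. You have also correctly put your finger on the one genuinely delicate point: nothing in the stated hypotheses forces $W_r \cap W_{r+1} \neq \varnothing$, since $W_r = U_r \cap V_r$ need not contain the pivot $X_r$ (which lies in $V_r$ and in $U_{r+1}$, but not necessarily in $U_r$ or in $V_{r+1}$). The paper's own proof simply declares that successive witnesses overlap ``by assumption'', so you have identified a hypothesis the paper uses silently rather than a defect unique to your write-up.

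However, neither of your proposed repairs actually closes the gap. Route (b) fails twice over: the pivot set $V_r \cap U_{r+1}$ is a neighbourhood of $X_r$ by (N$\cap$), hence nonempty, but it need not be \emph{connected} (connectedness is hypothesised only for the $W_r$), and it need not meet either adjacent witness, since $W_r \cap (V_r \cap U_{r+1}) = U_r \cap V_r \cap U_{r+1}$ and $(V_r \cap U_{r+1}) \cap W_{r+1} = V_r \cap U_{r+1} \cap V_{r+1}$ are triple intersections that the axioms do not force to be nonempty. Route (a)'s ``without loss of generality'' is also unjustified: refining $V_r$ and $U_{r+1}$ to $V_r \cap U_{r+1}$ via (N$\cap$) produces the candidate witness $U_r \cap V_r \cap U_{r+1}$, which may be empty, so the refinement can destroy the witnessed path. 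The honest resolution is route (a) minus the ``wlog'': add $W_r \cap W_{r+1} \neq \varnothing$ (or, more naturally, $X_r \in W_r \cap W_{r+1}$) as an explicit hypothesis, which is what the paper's proof tacitly assumes. The same caveat applies to the endpoints: $A = X_0$ need not lie in $U$ at all unless $X_0 \in V_1$, so the concluding claim that $A$ and $B$ sit in the same component also requires this strengthening rather than an appeal to closures.
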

\begin{proof}
By assumption each $W_r$ is a connected subspace of $C^{(i)}$, and successive witnesses overlap:
\[
W_r \cap W_{r+1} \neq \varnothing 
\qquad (r = 1,\dots,k-1).
\]
It is a standard fact that a finite union of connected sets with nonempty successive intersections is connected; this follows by induction on $k$, using that if $A$ and $B$ are connected and $A \cap B \neq \varnothing$, then $A \cup B$ is connected.

Applying this inductively to $W_1,\dots,W_k$ shows that
\[
U = \bigcup_{r=1}^k W_r
\]
is connected. The channel component of $A$ and $B$ generated by this path contains $U$ by construction, and hence $A$ and $B$ lie in the same connected component of that channel.
\end{proof}

\begin{corollary}[Path connectedness under extra hypotheses]
\label{cor:channel-path-connected}
In the setting of Proposition~\ref{prop:channel-connected}, assume in
addition that each witness $W_r$ is path connected (and hence connected).
Then the union $U = \bigcup_{r=1}^k W_r$ is path connected, and $A$ and
$B$ lie in the same path component of the channel generated by this
path.
\end{corollary}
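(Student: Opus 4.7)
The plan is to mirror the inductive strategy used in the proof of Proposition~\ref{prop:channel-connected}, replacing the lemma ``a finite union of connected sets with nonempty successive intersections is connected'' by its path-connected analogue. The final conclusion about $A$ and $B$ will then follow from the same reasoning as in the connected case, using Definition~\ref{def:channel-component}.

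First I would set up the base case: when $k=1$ the set $U = W_1$ is path connected by hypothesis. For the inductive step, writing $U_{k-1} := \bigcup_{r=1}^{k-1} W_r$ and assuming path connectedness of $U_{k-1}$, I would use the successive-overlap condition $W_{k-1} \cap W_k \neq \varnothing$ (which already underlies the connected-case proof) to fix a bridge point $z \in W_{k-1} \cap W_k$. Given arbitrary $x,y \in U_k$, I split into three cases according to whether each of $x,y$ lies in $U_{k-1}$ or in $W_k$. When both lie in $U_{k-1}$, apply the inductive hypothesis; when both lie in $W_k$, use path connectedness of $W_k$; in the mixed case, concatenate a path inside $U_{k-1}$ from $x$ to $z$ with a path inside $W_k$ from $z$ to $y$, and appeal to the pasting lemma applied to the cover $\{U_{k-1}, W_k\}$ to conclude that the concatenated map into $U_k$ is continuous.

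Once $U = \bigcup_{r=1}^k W_r$ is shown to be path connected, the final sentence follows exactly as in Proposition~\ref{prop:channel-connected}: by Definition~\ref{def:channel-component} the channel component $\mathrm{Ch}(A \Rightarrow B)$ contains all regions appearing in the adjacency chain, including each witness $W_r$, so $U$ is a path-connected subset of this channel component. Hence $A$ and $B$ lie in the same path component of the channel.

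The argument is essentially routine; the one point that needs brief care is the pasting lemma for path concatenation (with the implicit affine reparametrization of $[0,1]$), but this is standard and raises no new topological obstacle beyond the connected-case argument. The only genuine input from the hypothesis of path connectedness (as opposed to connectedness) is the existence of explicit paths within each $W_r$, which is precisely what allows the concatenation step to go through.
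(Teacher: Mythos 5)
Your proposal is correct and follows essentially the same route as the paper: the paper's proof is exactly an induction on $k$ using the fact that two path-connected sets with a common point have path-connected union, which is the lemma you prove in detail via the bridge point and the pasting lemma. You simply spell out the concatenation step that the paper leaves implicit.
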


\begin{proof}
The proof is analogous: if $A$ and $B$ are path connected subsets with
$A \cap B \neq \varnothing$, then $A \cup B$ is path connected.  An
induction on $k$ shows that $U$ is path connected, and hence any two
points in $U$ are joined by a path inside the channel.
\end{proof}

\begin{proposition}[Connectivity from a core via witness chains] 
\label{prop:connectivity witness chains}
Let $(X,\mathcal T)$ be a topological space and let $\{p_j\}_{j=1}^m$ be a finite family of profiles with remainders $R_E(p_j)\subseteq X$. Set
\[
C \;:=\; \Int_{\mathcal T}\!\Big(\bigcap_{j=1}^m R_E(p_j)\Big).
\]
Assume there exists a connected subset $K\subseteq C$ such that for every $x\in C$ there are connected sets
\[
W_1,\dots,W_n \subseteq C
\]
with
\[
x\in W_1,\qquad W_r\cap W_{r+1}\neq\varnothing\ (r=1,\dots,n-1),\qquad W_n\cap K\neq\varnothing.
\]
Then $C$ is connected.
\end{proposition}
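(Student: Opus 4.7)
The plan is to reduce the claim to two classical facts about connectedness: (a) a finite union of connected sets in which consecutive members overlap is connected, and (b) a union of connected subsets sharing a common point (equivalently, a common connected subset) is connected. Fact (a) is already invoked inside the proof of Proposition~\ref{prop:channel-connected}, and (b) is the standard ``common point'' lemma.

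First, I would fix an arbitrary $x \in C$ and pass to the witness chain $W_1,\dots,W_n \subseteq C$ provided by the hypothesis. An inductive application of (a) shows that the finite union
\[
U_x \;:=\; \bigcup_{r=1}^{n} W_r
\]
is connected, lies in $C$, contains $x$ (since $x \in W_1$), and meets $K$ (since $W_n \cap K \neq \varnothing$). Because $U_x$ and $K$ are both connected with $U_x \cap K \neq \varnothing$, the elementary two-set version of (b) yields that
\[
A_x \;:=\; U_x \cup K
\]
is a connected subset of $C$ which contains $K$ together with the point $x$.

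Next, I would observe the equality
\[
C \;=\; \bigcup_{x \in C} A_x,
\]
where $\supseteq$ follows from $A_x \subseteq C$ (using $U_x \subseteq C$ and $K \subseteq C$), and $\subseteq$ follows from $x \in A_x$. Since every $A_x$ contains the fixed connected set $K$, picking any point of $K$ as a common point and applying (b) to the family $\{A_x\}_{x \in C}$ shows that their union $C$ is connected.

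The main concern is less an obstacle than a point of care: the hypothesis $W_r \subseteq C$ is precisely what guarantees $U_x \subseteq C$, so that the union of the $A_x$ recovers $C$ rather than a larger subset of $X$. The connectedness of $K$ plays the anchoring role that binds the per-point tubes $U_x$ together, and no finitary coherence assumption such as $(N_{\mathrm{fin}})$ enters, since the argument is purely topological.
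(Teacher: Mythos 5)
Your proposal is correct and follows essentially the same route as the paper's own proof: build a connected chain union $W_1\cup\cdots\cup W_n$ anchored to $K$ for each $x\in C$, then glue all these per-point connected sets along the common connected subset $K$ to conclude that $C$ is connected. The only cosmetic difference is that you first form $U_x$ and then adjoin $K$ in a separate step, whereas the paper absorbs $K$ directly as the final link of the chain; the underlying lemmas and logic are identical.
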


\begin{proof}
Fix $x\in C$ and choose connected sets $W_1,\dots,W_n\subseteq C$ as in the hypothesis. Since $W_n\cap K\neq\varnothing$ and $W_r\cap W_{r+1}\neq\varnothing$ for all $r$, the finite union
\[
K \cup \bigcup_{r=1}^n W_r
\]
is connected (a finite union of connected sets with nonempty successive intersections is connected). In particular, $x\in W_1$ lies in the same connected subset of $C$ as $K$.

Now vary $x$ over $C$. For each $x\in C$ define
\[
S_x \;:=\; K \cup \bigcup_{r=1}^{n(x)} W_r^{(x)} \;\subseteq\; C,
\]
where $W_1^{(x)},\dots,W_{n(x)}^{(x)}$ is a chosen witness chain for $x$. Each $S_x$ is connected and contains $K$, hence all the sets $S_x$ have nonempty common intersection (namely $K$). Therefore their union
\[
C \;=\; \bigcup_{x\in C} S_x
\]
is connected. 
\end{proof}

\begin{proposition}[Finite-overlap compactness]\label{prop:finite-overlap}
Let
\[
  \mathcal B \;=\; 
  \bigl\{\, \Int_{\mathcal T^{(i)}}\!\bigl(\textstyle\bigcup F\bigr)
  \;\big|\;
  F \subseteq O_i(A,B)\ \text{finite} \,\bigr\}
\]
be the overlap base for the channel ideal
$\mathcal I^{(i)}_{AB} := \downarrow \mathcal B$ at stage $i$.
If every open cover of $\mathcal B$ by members of $\mathcal B$ has a
finite subcover, then the channel ideal $\mathcal I^{(i)}_{AB}$ is
quasi-compact in the sense that every cover of $\mathcal I^{(i)}_{AB}$
by members of $\mathcal I^{(i)}_{AB}$ admits a finite subcover.
\end{proposition}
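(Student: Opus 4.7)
The plan is to propagate quasi-compactness from the overlap base $\mathcal{B}$ to the full ideal $\mathcal{I} := \mathcal{I}^{(i)}_{AB}$ by exploiting the identity $\mathcal{I} = \downarrow\!\mathcal{B}$. First I would take an arbitrary cover $\{W_\alpha\}_{\alpha \in A} \subseteq \mathcal{I}$ of $\mathcal{I}$ and, for each index $\alpha$, use the definition of the downward closure to pick a witness $B_\alpha \in \mathcal{B}$ with $W_\alpha \subseteq B_\alpha$ (a single application of the axiom of choice). The lifted family $\{B_\alpha\}_{\alpha \in A} \subseteq \mathcal{B}$ dominates the original cover, since $\bigcup_\alpha B_\alpha \supseteq \bigcup_\alpha W_\alpha$; hence any $B' \in \mathcal{B}$ absorbed by the $W_\alpha$'s is a fortiori absorbed by the $B_\alpha$'s. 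Thus $\{B_\alpha\}$ is a $\mathcal{B}$-cover of $\mathcal{B}$ in the sense of the hypothesis.

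Second, I would apply the hypothesis to $\{B_\alpha\}$ to extract a finite subcollection $\{B_{\alpha_1}, \ldots, B_{\alpha_n}\}$ that already covers $\mathcal{B}$, and then take the corresponding subfamily $\{W_{\alpha_1}, \ldots, W_{\alpha_n}\}$ (selected by the same indices) as the candidate finite $\mathcal{I}$-subcover. To verify coverage, given any $W \in \mathcal{I}$ I would choose $B \in \mathcal{B}$ with $W \subseteq B$; the finite $\mathcal{B}$-subcover absorbs $B$, and downward closure of $\mathcal{I}$ then absorbs $W$. Closure of $\mathcal{I}$ under finite unions (Proposition~\ref{prop:ideal-props}) ensures that $\bigcup_j W_{\alpha_j}$ remains in $\mathcal{I}$, so the coverage condition is preserved under the extraction.

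The main obstacle is precisely this transfer step from the $\mathcal{B}$-level back to the $\mathcal{I}$-level. In the strict set-theoretic sense each $W_{\alpha_j}$ may be a proper subset of $B_{\alpha_j}$, so $\bigcup_j W_{\alpha_j}$ need not exhaust $\bigcup_j B_{\alpha_j}$ as a set. The proof therefore rests on reading ``cover of $\mathcal{I}$'' in the order-theoretic sense natural for ideals, where a family covers $\mathcal{I}$ precisely when every member of $\mathcal{I}$ is dominated by the union of the family. I would make this reading explicit at the outset and then observe that under it the lifting-then-descending strategy above goes through uniformly, relying only on $\mathcal{B}$-quasi-compactness together with the downward closure and finite-union closure of $\mathcal{I}$ already established for channel ideals.
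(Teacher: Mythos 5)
Your argument follows the same lift-and-descend route as the paper's own proof: replace each covering set $W_\alpha$ by a base element $B_\alpha \in \mathcal B$ containing it, invoke the hypothesis to extract a finite subfamily of the $B_\alpha$'s, and pass back to the correspondingly indexed $W_{\alpha_j}$'s. You have also correctly located the weak point, namely that each $W_{\alpha_j}$ may be a proper subset of $B_{\alpha_j}$, so $\bigcup_j W_{\alpha_j}$ need not exhaust $\bigcup_j B_{\alpha_j}$. The difficulty is that the repair you propose does not close this gap. Even under the order-theoretic reading of ``cover'' (every member of $\mathcal I^{(i)}_{AB}$ is dominated by the union of the family), the final verification still requires, for an arbitrary $W \in \mathcal I^{(i)}_{AB}$, that $W \subseteq \bigcup_j W_{\alpha_j}$. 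What you actually establish is $W \subseteq B \subseteq \bigcup_j B_{\alpha_j}$ for some $B \in \mathcal B$, and the step ``downward closure of $\mathcal I$ then absorbs $W$'' only restates that $W$ belongs to the ideal; it gives no containment of $W$ in the union of the \emph{selected original} sets $W_{\alpha_j}$. Likewise, closure under finite unions shows that $\bigcup_j W_{\alpha_j}$ is again a member of the ideal, not that it dominates every member. The descent step therefore fails under every reading of ``cover'' in which covers are not automatically preserved by shrinking their members.

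What the argument genuinely yields is that the finitely many base elements $B_{\alpha_1},\dots,B_{\alpha_n}$ --- which lie in $\mathcal I^{(i)}_{AB}$ but need not belong to the given cover --- dominate every member of the ideal; that is, every cover by members of the ideal admits a finite \emph{refinement} drawn from $\mathcal B$, not a finite \emph{subcover}. To obtain the proposition as stated one must either weaken the conclusion to this refinement property, or strengthen the hypothesis so that it applies to the trace of the original family $\{W_\alpha\}$ on $\mathcal B$ rather than to the enlarged family $\{B_\alpha\}$. For what it is worth, the paper's own proof performs exactly the same lift-and-descend and is open to the same objection, so your proposal faithfully reproduces the intended argument; but as a self-contained piece of reasoning it does not yet establish the claim.
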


\begin{proof}
By definition, every member of $\mathcal I^{(i)}_{AB}$ is contained in
some element of $\mathcal B$.  Given any cover of $\mathcal I^{(i)}_{AB}$
by members of $\mathcal I^{(i)}_{AB}$, refine each covering set to a
member of $\mathcal B$ that contains it.  This yields a cover of
$\mathcal B$ by elements of $\mathcal B$, which by hypothesis has a
finite subcover.  The corresponding finitely many original covering
sets then form a finite subcover of $\mathcal I^{(i)}_{AB}$.
\end{proof}

\begin{remark}

The hypothesis of Proposition~\ref{prop:finite-overlap}, that every cover
of the overlap base $\mathcal B$ admits a finite subcover, amounts to a
form of compactness at the level of local overlaps.  This is not
guaranteed by the general HDCS axioms and may fail in large or
unbounded concept spaces.  In practical models, this assumption must be
justified by domain knowledge or imposed as an additional constraint.
In the exchange example, for instance, one could argue that a finite
collection of key exchange profiles suffices to generate the full
ideal, making the compactness condition plausible.  We refer to this
property as \emph{finite-overlap compactness}, highlighting its role in
ensuring quasi-compactness of channel ideals.
\end{remark}

\newpage

\subsection{Specialization to the exchange chain}

Let
\[
\textsf{Gift/Ritual}\ \longrightarrow\ \textsf{Barter}\ \longrightarrow\ \textsf{Commodity}\
\longrightarrow\ \textsf{Coinage}\ \longrightarrow\ \textsf{Fiat}
\]
be the emergent chain constructed in the previous section. Then:

\begin{corollary}[Topological profile of the exchange HDCS]
\label{cor:exchange-profile}
\begin{enumerate}[label=(\alph*)]$\;$\\

\item Each node in the chain is an open region of some stage and an element of the appropriate
      channel ideal linking its parents (Prop.~\ref{prop:emergent-open}).

\item Their images in the colimit $X$ are open, and the carry maps are
    continuous on them (Prop. ~\ref{prop:stage-inclusion} and ~ \ref{prop:continuity-evolution}).

\item   The union of mediated channels from Barter to Fiat is connected;
hence the exchange evolution sits in a single channel component.
\end{enumerate}
\end{corollary}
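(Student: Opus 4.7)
The plan is to handle parts (a)--(c) of the corollary as three successive applications of the stagewise and global machinery developed in Propositions~\ref{prop:emergent-open}, \ref{prop:stage-inclusion}, \ref{prop:continuity-evolution}, and \ref{prop:channel-connected}. For (a) I would simply observe that each node after $\textsf{Gift/Ritual}$ was constructed in the preceding section as a stagewise emergent of the form
\[
C^{(i+1)}_{\bullet} \;=\; \Int_{\mathcal T^{(i)}}\!\Bigl(\bigcap_{\ell=1}^{m} R_E(p^{(i)}_\ell)\Bigr)
\]
from profiles built on overlaps between explicit parent concepts. A single invocation of Proposition~\ref{prop:emergent-open} at each stage then delivers both openness in $\mathcal T^{(i)}$ and membership in the relevant parent channel ideal (for instance $C^{(1)}_{\textsf{Commodity}} \in I^{(0)}_{\textsf{Barter},\bullet}$, and analogously for $\textsf{Coinage}$ and $\textsf{Fiat}$).

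For (b), the plan is to apply Proposition~\ref{prop:stage-inclusion}(2) to each such emergent. The required hypothesis, that the saturation under the quotient map $q$ is open in the timeline space, will follow from the persistence axiom (H3): every later carried copy $\sigma_i(C^{(i)}_{\bullet})$, $\sigma_{i+1}\!\circ\!\sigma_i(C^{(i)}_{\bullet})$, and so on, is adjoined at the successor stage as an open subspace with its inherited internal topology, so the saturation is a disjoint union of opens across the stages in which the emergent persists. Continuity of the global evolution maps $\hat{\sigma}_i$ on these regions then follows at once from Proposition~\ref{prop:continuity-evolution}, using the standing assumption that each carry map is continuous.

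For (c) I would proceed in two steps. At each stage $i$ I would first assemble a witnessed adjacency path joining consecutive nodes of the exchange chain out of the CCER witnesses $W_a, W_b$ together with the emergent itself, which lies in the appropriate channel ideal by part (a); Proposition~\ref{prop:channel-connected} then yields a connected union inside $\mathcal T^{(i)}$, provided the chosen witnesses are connected. Second, these stagewise connected pieces will be glued in $X$ through the carry maps: by (H3)--(H5), consecutive stages share the $\iota$-image of the intermediate emergent, so their images in $X$ form a finite family of connected subsets with pairwise nonempty intersection, and a standard induction shows their union is connected. This places $\textsf{Barter}$, $\textsf{Commodity}$, $\textsf{Coinage}$, and $\textsf{Fiat}$ inside a single channel component of $X$.

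The hard part will be the connectedness hypothesis in (c): the HDCS axioms do not force overlap witnesses or remainders to be connected, so one either has to impose connectedness as an additional modelling assumption on the economic profiles or else retreat to path-connectedness along a chosen chain by appealing to Corollary~\ref{cor:channel-path-connected} in place of Proposition~\ref{prop:channel-connected}. A secondary technical point is to verify in (b) that the relevant saturation really is a union of stage-opens, and not merely an open set at a single stage; this is guaranteed by (H3) together with continuity of the $\sigma_i$, but should be checked case by case against the specific carry maps used at each transition of the exchange chain.
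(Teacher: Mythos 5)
Your proposal is correct and follows essentially the same route as the paper, which in fact supplies no separate proof for this corollary beyond the parenthetical citations in its statement: part (a) is a direct reading of Proposition~\ref{prop:emergent-open} applied to each stagewise emergent, part (b) invokes Propositions~\ref{prop:stage-inclusion} and~\ref{prop:continuity-evolution}, and part (c) rests on Proposition~\ref{prop:channel-connected} (or Corollary~\ref{cor:channel-path-connected}). Your two flagged caveats are exactly the hypotheses the paper itself leaves as modelling assumptions rather than consequences of the axioms: the remark following Proposition~\ref{prop:stage-inclusion} explicitly restricts part (2) to opens ``whose saturations are open by construction,'' and Proposition~\ref{prop:channel-connected} carries the connectedness of the witnesses $W_r$ as an explicit hypothesis, so your observation that (c) requires either an added connectedness assumption on the economic profiles or a retreat to the path-connected variant is a fair and accurate reading of what the corollary actually needs.
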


\subsection*{Summary and Interpretation}

The topological structure of the Heraclitean Dialectical Concept Space (HDCS) ensures that
each emergent concept, such as \textsf{Barter}, \textsf{Commodity Money}, \textsf{Coinage}, and
\textsf{Fiat Money}, appears as an \emph{open region} within its stage's external topology.
These emergents are not arbitrary; each resides within a specific \emph{channel ideal}
generated by overlaps between its conceptual parents.
Internally, the subspace topology inherited from these parents is \emph{maximal}, meaning that
no finer topology can be formed using only opens restricted from their neighbourhoods.
This establishes local completeness: every new concept is a topologically well-defined
continuation of its progenitors.

When the developmental stages are considered collectively, their disjoint union carries a
natural \emph{colimit} (quotient) topology, identifying each concept across evolutionary
steps through the carry maps~$\sigma_i$.
These maps are \emph{continuous and open}, preserving the structure of emergence from one
stage to the next.  Consequently, evolutionary trajectories form \emph{stable nets}, sequences
of carried concepts that converge to well-defined limit points representing persistent
conceptual identities.
This formalizes the idea of continuity in conceptual development: once a notion appears,
its transformations remain topologically traceable through subsequent stages.

At a structural level, the system of channel ideals exhibits \emph{monotone growth}:
each $k$-step channel ideal is contained within the next, culminating in a comprehensive
mediated ideal~$\mathcal{I}^{(*)}$.
Within this global topology, connectedness holds for any pair of concepts
that can be joined by a finite adjacency chain $X_0 \sim X_1 \sim \cdots \sim X_k$.
The mediated region between them forms a single connected component, showing that conceptual
transformations.  Moreover, when channel generation relies on finitely many overlaps, these components
are \emph{quasi-compact}, ensuring that large-scale conceptual relations can be covered by
finitely many local interactions.

Applied to the historical evolution of exchange, these results reveal that the chain
\[
\textsf{Gift/Ritual} \;\longrightarrow\;
\textsf{Barter} \;\longrightarrow\;
\textsf{Commodity Money} \;\longrightarrow\;
\textsf{Coinage} \;\longrightarrow\;
\textsf{Fiat}
\]
is not a sequence of discrete inventions but a \emph{connected, continuous trajectory}
within the HDCS.
Each economic form emerges as an open subspace of the preceding stage, preserving
continuity and compactness across transitions.
Topologically, the entire evolution of exchange lies within a single
\emph{connected channel component}: a unified region of conceptual space encoding the
smooth dialectical transformation of economic systems through time.

Historical work on exchange and money portrays the shift from gift/ritual
systems to barter, commodity money, coinage and modern fiat as cumulative and
overlapping rather than a sequence of isolated inventions. Mauss and Sahlins
emphasise enduring webs of reciprocity beneath both ceremonial and everyday
exchange \cite{Mauss1925,Sahlins1972}, while Dalton, Einzig and Grierson
document diverse commodity and proto-monetary forms that coexist and shade
into one another \cite{Dalton1965,Einzig1949,Grierson1977}. Polanyi, Hudson
and Graeber further stress that credit, taxation and state authority reshape
monetary media instead of simply replacing them
\cite{Polanyi1944,Hudson2002,Graeber2011}, supporting our representation of
the path from gift/ritual to fiat money as a connected trajectory in a single
evolving conceptual space.

%==========
%page 38
%=============

\pagebreak

\section{Cross-Space Emergence: The Concept of Zero}

In the HDCS framework, linguistic and cognitive evolution are treated as two interacting
concept spaces whose overlaps generate new, higher-order concepts. Let
\[
  (C_{\textsf{Lang}},\mathcal{F}_{\textsf{Lang}}, N_{\textsf{Lang}}) \quad\text{and}\quad
  (C_{\textsf{Cogn}},\mathcal{F}_{\textsf{Cogn}}, N_{\textsf{Cogn}})
\]
denote the linguistic and cognitive concept spaces, respectively, with external topologies
$T_{\textsf{Lang}}$ and $T_{\textsf{Cogn}}$ generated by their neighbourhood assignments. Interactions
between them occur through product neighbourhoods of the form
\[
  N_{\times}(x_{\textsf{Cogn}},x_{\textsf{Lang}})
  := \{\, U \times V : U \in N_{\textsf{Cogn}}(x_{\textsf{Cogn}}),\; V \in N_{\textsf{Lang}}(x_{\textsf{Lang}})\,\},
\]
and the induced product topology $T_{\times} := T_{\textsf{Cogn}} \otimes T_{\textsf{Lang}}$ represents the
space of possible joint conceptual--linguistic realizations.

\subsection{Emergence of Zero as a Cognitive Concept}

We first work entirely inside the cognitive concept space
$(C_{\textsf{Cogn}},\mathcal{F}_{\textsf{Cogn}}, N_{\textsf{Cogn}})$, with external topology
$T^{(0)}_{\textsf{Cogn}}$ at an early numerical stage. Among the concepts in $C_{\textsf{Cogn}}$ we
distinguish
\[
  C^{(0)}_{\textsf{Counting}},\quad
  C^{(0)}_{\textsf{Trade}},\quad
  C^{(0)}_{\textsf{Notation}},
\]
representing, respectively, basic enumeration, practical exchange of goods, and the use of
marks or tokens for record-keeping. Their neighbourhoods
\[
  V_{\textsf{Count}} \in N^{(0)}_{\textsf{Cogn}}(C^{(0)}_{\textsf{Counting}}),\quad
  V_{\textsf{Trade}} \in N^{(0)}_{\textsf{Cogn}}(C^{(0)}_{\textsf{Trade}}),\quad
  V_{\textsf{Note}} \in N^{(0)}_{\textsf{Cogn}}(C^{(0)}_{\textsf{Notation}})
\]
encode feasible configurations in which these capacities are locally active. For instance,
$V_{\textsf{Count}}$ may gather contexts where agents enumerate items, while $V_{\textsf{Trade}}$
captures contexts of balanced exchange.

Conceptual tension arises when enumeration and trade demand a representation of an
\emph{absent} quantity: empty stores, canceled debts, or positions in a counting scheme where
nothing is present but the structure still requires a placeholder. Formally, this is modelled by
profiles
\[
  p^{(0)} = (R,U,V_{\textsf{Count}},V_{\textsf{Trade}}) \in \textsf{Prof}^{(0)}_{\textsf{Cogn}},
\]
where $U \in T^{(0)}_{\textsf{Cogn}}$ is a feasible background region, and
$V_{\textsf{Count}},V_{\textsf{Trade}}$ are neighbourhoods taken from the overlap families
$\mathcal{O}^{(0)}(C^{(0)}_{\textsf{Counting}}, C^{(0)}_{\textsf{Trade}})$. The associated remainder
\[
  R^{(0)}_{E}(p^{(0)}) :=
  \Int_{T^{(0)}_{\textsf{Cogn}}}\bigl(U \setminus (V_{\textsf{Count}} \cup V_{\textsf{Trade}})\bigr)
\]
is an open region where the background context $U$ persists, but neither the usual counting
patterns nor straightforward trade operations suffice to handle certain situations (for example,
“no sheep present” but still a position in the flock ledger).

Under the finite-consistency condition $(N_{\mathrm{fin}})$, we may select a finite family of such
profiles $p^{(0)}_1,\dots,p^{(0)}_m$ whose remainders overlap nontrivially: If such profiles exist, then 
\[
  \bigcap_{\ell=1}^m R^{(0)}_{E}(p^{(0)}_{\ell}) \neq \varnothing.
\]
By the CCER construction, this gives rise to a stage-$1$ emergent
\[
  C^{(1)}_{\textsf{Zero}}
    := \Int_{T^{(0)}_{\textsf{Cogn}}}\!\Bigl(
         \bigcap_{\ell=1}^m R^{(0)}_{E}(p^{(0)}_{\ell})
       \Bigr),
\]
which is an open subset of $C_{\textsf{Cogn}}$. Proposition~\ref{prop:emergent-open}
ensures that $C^{(1)}_{\textsf{Zero}}$ lies in the appropriate channel ideal generated by overlaps
between counting and trade: it is not an isolated stipulation, but an internally well-based
region in the cognitive topology. Intuitively, $C^{(1)}_{\textsf{Zero}}$ collects precisely those
configurations where “emptiness” must be treated as a determinate quantity if counting,
trading, and notation are to remain coherent.

\subsection{Linguistic Realizations and Transmission Channels}

We now turn to the linguistic concept space
$(C_{\textsf{Lang}},\mathcal{F}_{\textsf{Lang}},N_{\textsf{Lang}})$ with topology
$T_{\textsf{Lang}}$. Within $C_{\textsf{Lang}}$ we distinguish a chain of phonetic--morphological
realizations of the zero concept:
\[
  C^{(0)}_{\textsf{sunya}},\quad
  C^{(0)}_{\textsf{sifr}},\quad
  C^{(0)}_{\textsf{zephirum}},\quad
  C^{(0)}_{\textsf{zero}},
\]
corresponding, respectively, to Sanskrit, Persian--Arabic, medieval Latin, and modern European
forms. Each concept carries neighbourhoods in $N_{\textsf{Lang}}$ describing contexts of use,
orthographic variants, and semantic associations.

The overlap families
\[
  \mathcal{O}\bigl(C^{(0)}_{\textsf{sunya}}, C^{(0)}_{\textsf{sifr}}\bigr) \neq \varnothing,\quad
  \mathcal{O}\bigl(C^{(0)}_{\textsf{sifr}}, C^{(0)}_{\textsf{zephirum}}\bigr) \neq \varnothing,\quad
  \mathcal{O}\bigl(C^{(0)}_{\textsf{zephirum}}, C^{(0)}_{\textsf{zero}}\bigr) \neq \varnothing
\]
express that there are nonempty regions of the linguistic topology in which two successive
realizations coexist or interact (for example, bilingual or scholarly contexts). From these
overlaps one forms linguistic profiles and associated channel ideals. In particular, if we
schematically group the cultural regions as \emph{India}, \emph{Arabia}, and \emph{Europe}, we obtain
channel ideals: (Here the indices label cultural–regional parent concepts rather than stages.)

\[
  I^{(1)}_{\textsf{India,Arabia}}
   \subseteq I^{(2)}_{\textsf{Arabia,Europe}}
   \subseteq I^{(*)}_{\textsf{Global}},
\]
each generated by finite unions of overlap witnesses between the relevant signifiers. These
ideals represent the progressive stabilization of a single zero signifier across distinct linguistic
and cultural regimes: once the concept itself is available, the linguistic topology provides
continuous transmission paths along which the sign can travel.

Historical work in the history of mathematics and historical linguistics
traces a well-defined chain of linguistic realizations of the zero concept,
from Sanskrit \emph{śūnya} in Indian mathematical texts, through Arabic
\emph{ṣifr}, to medieval Latin forms such as \emph{zephirum} and finally the
modern European \emph{zero} and its cognates
\cite{Ifrah2000,Menninger1992,Plofker2009,Joseph2011}. These transitions are
reconstructed as occurring in concrete contact zones, translation schools,
bilingual scholarly communities and trade networks linking India, the Islamic
world and Europe, where multiple realizations coexisted in overlapping usage
\cite{Campbell2013,MalloryAdams2006,Staal1988}. This supports modelling a
chain of signifiers linked by nonempty overlap regions and associated channel
ideals in the linguistic concept space.

\subsection{Cross-Space Emergent Symbol}

We now combine the cognitive and linguistic spaces in the product topology
$T_{\times} = T_{\textsf{Cogn}} \otimes T_{\textsf{Lang}}$ on
$C_{\textsf{Cogn}} \times C_{\textsf{Lang}}$. Profiles in the product use neighbourhoods of the
form $U \times V$, with $U \in N_{\textsf{Cogn}}(\cdot)$ and $V \in N_{\textsf{Lang}}(\cdot)$, as in
the general cross-space CCER definition. Concretely, we may choose:

\begin{itemize}
\item a cognitive witness $W_{\textsf{Cogn}} \subseteq C_{\textsf{Cogn}}$ that lies inside
      $C^{(1)}_{\textsf{Zero}}$ and captures stable use of zero as a numerical concept (for example,
      contexts of place-value notation or accounting with explicit zero entries);
\item a linguistic witness $W_{\textsf{Lang}} \subseteq C_{\textsf{Lang}}$ lying in the global
      channel ideal $I^{(*)}_{\textsf{Global}}$, where the zero signifier is phonologically and
      orthographically stabilized.
\end{itemize}

Their product $W_{\textsf{Cogn}} \times W_{\textsf{Lang}}$ is an open set in $T_{\times}$ in which the
abstract cognitive notion of zero and a concrete linguistic form co-occur.
By the cross-space CCER result (Proposition~22), emergent regions in the product
space $C_{\mathrm{Cogn}} \times C_{\mathrm{Lang}}$ are obtained as interiors of
finite intersections of product remainders.

More generally, a finite family of product profiles $p_a$ with remainders
$R_E(p_a)$ whose intersection is nonempty gives, by Proposition~\ref{prop:cross-existence}, 
a cross-space emergent:
\[
  C^{(2)}_{\textsf{ZeroSymbol}}
    := \Int_{T_{\times}}\!\Bigl(
          \bigcap_{a=1}^m R_{E}(p_a)
       \Bigr),
\]
which is an open subset of $C_{\textsf{Cogn}} \times C_{\textsf{Lang}}$ lying in the downward closure
of the product channel ideal. In the present example we can take, more simply,
\[
  C^{(2)}_{\textsf{ZeroSymbol}}
    := \Int_{T_{\times}}(W_{\textsf{Cogn}} \times W_{\textsf{Lang}}),
\]
as a canonical representative of this emergent. It represents the fully stabilized symbol ``0'':
a joint region in which emptiness is treated as a determinate number and is coupled to a
specific, reproducible sign.

By Proposition~\ref{prop:cross-proj}, the coordinate projections
$\pi_{\textsf{Cogn}},\pi_{\textsf{Lang}}$ restrict to continuous open maps on
$C^{(2)}_{\textsf{ZeroSymbol}}$, so that both the cognitive and linguistic aspects of zero are
visible as open images. Under the additional ``common core'' hypothesis of
Proposition~\ref{prop:connectivity witness chains},
any overlap witness $K \subseteq
C^{(2)}_{\textsf{ZeroSymbol}}$ shared by all profiles guarantees that
$C^{(2)}_{\textsf{ZeroSymbol}}$ is connected (and path connected if $K$ is), reflecting the
phenomenological unity of ``zero'' as symbol and concept.

We obtain channel ideals\footnote{Here $I^{(i)}_{C_a,C_b}$ denotes the 
channel ideal between concepts $C_a$ and $C_b$ at stage $i$. If one concept 
is fixed and the other left open-ended (e.g.\ $I^{(0)}_{\textsf{Barter}\to *}$), 
this refers to the ideal generated by overlaps between that concept and any of 
its neighbors at stage $i$. In other words, “$C_a$’s ideal with its neighbors.” 
For readability, we sometimes use composite labels like \emph{India} or 
\emph{Arabia} as names for broad semantic clusters of concepts (here, cultural–linguistic 
regions), rather than single concepts.  Finally, note that channel ideals persist 
and grow monotonically across stages (Prop. ~\ref{prop:monotone-mediated}), so we use 
a superscript $(*\,)$ to denote the comprehensive ideal achieved in the global 
limit stage.} 
\[ I^{(1)}_{\textsf{India,Arabia}} \subseteq I^{(2)}_{\textsf{Arabia,Europe}} \subseteq I^{(*)}_{\textsf{Global}}, \]

Historical accounts of zero indicate that once place-value notation and an
explicit zero sign are stabilised, the abstract idea of “nothing” and its
written mark are effectively fused in mathematical practice
\cite{Ifrah2000,Menninger1992,Plofker2009,Joseph2011}. The cross-space
emergent $C^{(2)}_{\text{ZeroSymbol}}$ is meant to capture precisely this
joint stabilization of numerical role and linguistic form.

\subsection{Interpretation}

In HDCS terms, the concept of zero is not the endpoint of a single, purely cognitive trajectory
nor a merely conventional mark. It arises as a \emph{joint remainder} between the internal
requirements of numerical representation and the external channels of linguistic transmission.
First, internal contradictions in counting, trade, and notation generate an emergent cognitive
open $C^{(1)}_{\textsf{Zero}}$ in $T_{\textsf{Cogn}}$. Second, successive linguistic realizations form a
connected chain of overlaps whose channel ideals stabilize a sign for this emergent. Finally,
their interaction in the product topology $T_{\times}$ produces the cross-space emergent
$C^{(2)}_{\textsf{ZeroSymbol}}$, an open, connected region of the global HDCS in which abstract
emptiness and concrete written form are inseparably linked.

Thus the historical phenomenon of ``zero'', as both number and glyph, appears as a
cross-domain emergent open in the sense of the general theory: internally well based, externally
transmissible, and topologically connected across cognitive and linguistic spaces.

\subsection*{Cross-space structure and product profiles}

Given two concept spaces $(C_1, \mathcal{F}_1, N_1)$ and $(C_2, \mathcal{F}_2, N_2)$, their \emph{product concept space} is defined as follows. The underlying set is $C_1 \times C_2$. The feasible family is generated by products of feasible sets: $\mathcal{F}_\times = \{ U \times V : U \in \mathcal{F}_1, V \in \mathcal{F}_2 \}$. The neighbourhood assignment is $N_\times((x_1,x_2)) = \{ U \times V : U \in N_1(x_1), V \in N_2(x_2) \}$.

\medskip
]

\begin{proposition}~\label{prop:product-feasible}
If $(C_1, \mathcal{F}_1, N_1)$ and $(C_2, \mathcal{F}_2, N_2)$ are concept spaces, and each feasible family $\mathcal{F}_i$ is closed under finite intersection and covers $C_i$, then the product family $\mathcal{F}_\times = \{ U \times V : U \in \mathcal{F}_1, V \in \mathcal{F}_2 \}$ also satisfies (F1) and $(F\cap)$.
\end{proposition}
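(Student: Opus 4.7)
\medskip
\noindent\textbf{Proof proposal.}
The plan is to verify the two axioms (F1) and (F$\cap$) directly, using the corresponding hypotheses on each factor and the elementary identity
\[
(U_1 \times V_1) \cap (U_2 \times V_2) \;=\; (U_1 \cap U_2) \times (V_1 \cap V_2)
\]
for subsets of $C_1 \times C_2$. Neither step requires anything beyond the stated closure and covering assumptions on $\mathcal{F}_1$ and $\mathcal{F}_2$, so no deep machinery is needed.

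For (F1), I would fix an arbitrary point $(x_1, x_2) \in C_1 \times C_2$ and invoke the covering hypothesis on each factor: by (F1) for $\mathcal{F}_1$ there exists $U \in \mathcal{F}_1$ with $x_1 \in U$, and by (F1) for $\mathcal{F}_2$ there exists $V \in \mathcal{F}_2$ with $x_2 \in V$. Then $U \times V$ is a member of $\mathcal{F}_\times$ by construction and contains $(x_1, x_2)$, establishing (F1) for the product family.

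For (F$\cap$), I would take two generators $U_1 \times V_1$ and $U_2 \times V_2$ in $\mathcal{F}_\times$ whose intersection is nonempty, and apply the product-intersection identity above. Nonemptiness of the product forces both factor intersections $U_1 \cap U_2$ and $V_1 \cap V_2$ to be nonempty, at which point closure under nonempty finite intersections of $\mathcal{F}_1$ and $\mathcal{F}_2$ yields $U_1 \cap U_2 \in \mathcal{F}_1$ and $V_1 \cap V_2 \in \mathcal{F}_2$. Hence their product lies in $\mathcal{F}_\times$, which is exactly (F$\cap$). As a small bonus, the same reasoning shows that $\varnothing \notin \mathcal{F}_\times$, since a product $U \times V$ is empty only when one factor is empty, which is excluded by (F$\emptyset$) on each side.

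There is no real obstacle here: the argument is essentially a bookkeeping exercise, and the only point that deserves explicit care is ensuring that the nonemptiness clause in (F$\cap$) is used correctly, namely that nonemptiness of the product is transferred to each factor before invoking closure. This caveat matters only because (F$\cap$) is conditional on nonempty overlap, but it is handled automatically by the product-intersection identity.
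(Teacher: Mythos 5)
Your proposal is correct and follows essentially the same route as the paper's proof: verify (F1) by taking a product of covering regions, and verify (F$\cap$) via the identity $(U_1\times V_1)\cap(U_2\times V_2)=(U_1\cap U_2)\times(V_1\cap V_2)$. Your explicit step transferring nonemptiness of the product intersection to each factor before invoking the conditional closure (F$\cap$) is in fact slightly more careful than the paper's version, which elides that point.
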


\begin{proof}
First, note that for any $(x_1, x_2) \in C_1 \times C_2$, there exist $U \in \mathcal{F}_1$ and $V \in \mathcal{F}_2$ such that $x_1 \in U$, $x_2 \in V$, hence $(x_1, x_2) \in U \times V \in \mathcal{F}_\times$. So $\mathcal{F}_\times$ covers $C_1 \times C_2$, satisfying (F1).

Second, take two feasible rectangles $U_1 \times V_1$ and $U_2 \times V_2$. Then
\[
(U_1 \times V_1) \cap (U_2 \times V_2) = (U_1 \cap U_2) \times (V_1 \cap V_2),
\]
which belongs to $\mathcal{F}_\times$ since $\mathcal{F}_1$ and $\mathcal{F}_2$ are closed under intersections. So $\mathcal{F}_\times$ satisfies $(F\cap)$.
\end{proof}

The external topology $\mathcal{T}_\times$ is the product topology on $C_1 \times C_2$ generated from $N_\times$ in the usual way. An \emph{overlap} in the product space is a nonempty set of the form $(U_1 \times V_1) \cap (U_2 \times V_2) = (U_1 \cap U_2) \times (V_1 \cap V_2)$, with $U_1, U_2 \in \mathcal{F}_1$, $V_1, V_2 \in \mathcal{F}_2$.

A \emph{product profile} in this setting has the form $p^\times = (R, U_1 \times V_1, U_2 \times V_2)$, and its associated remainder is
\[ R_E(p^\times) := \operatorname{Int}_{\mathcal{T}_\times}((U \setminus (U_1 \times V_1 \cup U_2 \times V_2))). \]
This mirrors the standard profile and remainder construction from Section~2, now lifted to the product space. These structures support the cross-space CCER construction used in Section~9 and 10.

\noindent
\textbf{Clarification on cross-space staging.} In examples such as \textsf{ZeroSymbol} and the mammalian ear, the two component spaces (e.g., cognitive and linguistic, or morphology and perception) evolve independently through their own stage sequences. In practice, we consider a joint product space $C_1 \times C_2$ only once the relevant precursors in each domain have appeared. For example, in the Zero case, we treat the development of linguistic variants (\textit{sunya}, \textit{sifr}, \textit{zephirum}, etc.) as part of stage~0, while the cognitive concept of zero emerges at stage~1. Their interaction then yields the symbolic concept \textsf{ZeroSymbol} at stage~2 of the combined system. This approach allows asynchronous development within each component space while preserving a unified stage index for their interaction.

\section{Cross-Space Emergence: Evolution of the Mammalian Ear}

The transformation of the mammalian auditory system provides a biological instance of
cross-space emergence, linking morphological evolution with perceptual adaptation. Let
\[
  (C_{\textsf{Morph}},\mathcal{F}_{\textsf{Morph}},N_{\textsf{Morph}})
  \quad\text{and}\quad
  (C_{\textsf{Percep}},\mathcal{F}_{\textsf{Percep}},N_{\textsf{Percep}})
\]
denote the morphological and perceptual concept spaces, with external topologies
$T_{\textsf{Morph}}$ and $T_{\textsf{Percep}}$ generated by their neighbourhood assignments. Their
interaction is described by the product topology
\[
  T_{\times} := T_{\textsf{Morph}}\otimes T_{\textsf{Percep}}
\]
on $C_{\textsf{Morph}}\times C_{\textsf{Percep}}$, whose opens represent anatomically possible and
functionally meaningful configurations. Profiles and remainders in the product are defined as
in the general cross-space theory, using neighbourhoods of the form $U\times V$ with
$U\in N_{\textsf{Morph}}(\cdot)$ and $V\in N_{\textsf{Percep}}(\cdot)$.

\medskip
\noindent
\textit{Note:}  We assume that the biological scenarios involved in these profiles share a
basic core condition: the functional need to transmit mechanical vibrations into
neural signals. This provides a common overlap region $K$ within all remainders,
ensuring that the emergent middle-ear configuration is connected, as required by
Proposition~\ref{prop:connectivity witness chains}.

\subsection{Morphological and Perceptual Feasible Families}

At an early synapsid stage, the morphological feasible family
$\mathcal{F}^{(0)}_{\textsf{Morph}}\subseteq\mathcal{P}(C_{\textsf{Morph}})$ contains concepts such as
\[
  C^{(0)}_{\textsf{Dentary}},\quad
  C^{(0)}_{\textsf{Articular}},\quad
  C^{(0)}_{\textsf{Quadrate}},\quad
  C^{(0)}_{\textsf{Angular}},
\]
jointly supporting jaw mechanics for feeding and biting. For each of these there are
neighbourhoods in $N^{(0)}_{\textsf{Morph}}$ that capture coherent arrangements of bones, joints, and
muscles that realize effective mastication.

In parallel, the perceptual concept space carries a feasible family
$\mathcal{F}^{(0)}_{\textsf{Percep}}$ including primitive forms of vibration detection and resonance,
for instance
\[
  C^{(0)}_{\textsf{CranialRes}},\quad C^{(0)}_{\textsf{Auditory}},
\]
with neighbourhoods $N^{(0)}_{\textsf{Percep}}$ encoding coarse-grained sensitivity to whole-skull
or whole-body oscillations. At this stage the two families are only weakly coupled; product
neighbourhoods of the form
\[
  U_{\textsf{Jaw}}\times V_{\textsf{Auditory}}
\]
exist in $T^{(0)}_{\times}$, but the corresponding overlap families
$\mathcal{O}(C^{(0)}_{\textsf{Articular}},C^{(0)}_{\textsf{Auditory}})$ and \\
$\mathcal{O}(C^{(0)}_{\textsf{Quadrate}},C^{(0)}_{\textsf{Auditory}})$ are sparse or empty, reflecting the
fact that jaw motion and auditory perception are functionally distinct.

Comparative studies of early synapsids suggest that, at this stage, jaw
elements still serve primarily masticatory roles, with only limited and
indirect sensitivity to substrate-borne or cranial vibrations
\cite{AllinHopson1992,Kemp2005}. In our terms, the morphological and
perceptual feasible families are therefore only weakly coupled: coherent jaw
configurations and primitive vibration detection coexist, but their product
neighbourhoods occupy sparse regions of the joint space, reflecting the
functional distinctness of feeding and hearing.

\subsection{Formation of Cross-Domain Overlaps}

Over evolutionary time, changes in skull architecture and jaw articulation increase mechanical
resonance within the jaw bones and their coupling to surrounding tissues. In HDCS terms this
means that the overlap families
\[
  \mathcal{O}\bigl(C^{(0)}_{\textsf{Articular}},C^{(0)}_{\textsf{Auditory}}\bigr),\qquad
  \mathcal{O}\bigl(C^{(0)}_{\textsf{Quadrate}},C^{(0)}_{\textsf{Auditory}}\bigr)
\]
become nonempty in the product topology: there are open regions in $T^{(0)}_{\times}$ in which
jaw elements bear significant vibrational loads and those vibrations are detectable by the
nascent auditory system.

These regions are captured by product profiles
\[
  p^{(0)} = (R,U,V_{\textsf{Jaw}},V_{\textsf{Skull}}),
\]
where $U\in T^{(0)}_{\times}$ is a feasible background configuration, and
$V_{\textsf{Jaw}},V_{\textsf{Skull}}\in T^{(0)}_{\times}$ are opens drawn from the overlap witnesses
associated to jaw mechanics and cranial vibration sensitivity. The remainder
\[
  R^{(0)}_{E}(p^{(0)})
    := \Int_{T^{(0)}_{\times}}\bigl(
          U \setminus (V_{\textsf{Jaw}}\cup V_{\textsf{Skull}})
       \bigr)
\]
is an open subset of $C_{\textsf{Morph}}\times C_{\textsf{Percep}}$ where the overall anatomical context
$U$ persists, but neither purely feeding mechanics nor purely diffuse vibration detection
constitute an adequate description. Biologically, such remainders represent transitional
morphologies in which jaw bones are beginning to serve both feeding and vibrational functions,
without yet having fully specialized.

Comparative and fossil studies of late synapsids indicate exactly this kind of
double duty phase, in which postdentary bones remain structurally integrated
into the jaw while increasingly transmitting cranial vibrations to softtissue
receptors \cite{AllinHopson1992,Kemp2005,Luo2007,Manley2017}. The HDCS
remainder regions $R_E^{(0)}(p^{(0)})$ are intended to model these
transitional morphologies, where neither a purely masticatory nor a fully
specialised auditory description is adequate, but both functions are beginning
to overlap within a shared anatomical configuration.

\subsection{Emergence of the Middle Ear System}

Assuming the finite-consistency condition $(N_{\mathrm{fin}})$ for the product feasible family,
we may select a finite collection of such profiles $p^{(0)}_1,\dots,p^{(0)}_m$ with nontrivial
overlap:
\[
  \bigcap_{\ell=1}^{m} R^{(0)}_{E}(p^{(0)}_{\ell}) \neq \varnothing.
\]

By the cross-space CCER principle and Proposition~\ref{prop:cross-existence}
(Existence, openness, and channel containment of cross-space emergents),
this yields an emergent open
\[
  C^{(1)}_{\text{MiddleEar}} := \Int_{T^{(0)}_\times}
  \Bigl( \bigcap_{\ell=1}^m R_E^{(0)}(p^{(0)}_\ell) \Bigr)
  \subseteq C_{\text{Morph}} \times C_{\text{Percep}}.
\]

This region corresponds to the stabilized configuration in which elements derived from the
articular, quadrate, and angular detach from the primary jaw joint and reconfigure as the
malleus, incus, and tympanic ring, forming a dedicated middle-ear system that efficiently
transmits vibrations from a tympanic membrane to the inner ear.

By Proposition~\ref{prop:cross-existence},
$C^{(1)}_{\text{MiddleEar}}$ is a nonempty open in $T^{(0)}_\times$ and lies in the
downward closure of the channel ideal generated by
finite unions of overlap witnesses across the two spaces. In particular, its projections to the
morphological and perceptual factors are continuous and open (Proposition~\ref{prop:cross-proj}), so that both a distinctmorphological subsystem and a refined auditory function appear as open images of a single
cross-space emergent.

Comparative anatomy and fossil reconstructions indicate that the mammalian
middle ear arises precisely through the kind of reconfiguration captured by
$C^{(1)}_{\text{MiddleEar}}$. In advanced synapsids and early mammals, elements of
the postdentary jaw complex (articular, quadrate, angular) progressively
detach from the primary jaw joint and are incorporated into a dedicated
ossicular chain and tympanic support. The malleus, incus and ectotympanic
ring, specialised for transmitting vibrations to the inner ear
\cite{AllinHopson1992,Kemp2005,Luo2007,Manley2017}. This supports treating
the modern middle ear as an emergent configuration in which previously
masticatory bones and evolving perceptual structures are jointly stabilised
within a single cross-domain system.

\subsection{Specialization and Frequency Adaptation}

Once a dedicated middle ear exists, further evolution refines the perceptual side. Within the
perceptual concept space $(C_{\textsf{Percep}},T_{\textsf{Percep}})$, new profiles at a later stage
encode tuning of resonance peaks and sensitivity curves. In particular, we consider profiles
$p^{(1)}$ whose remainders
\[
  R^{(1)}_{E}(p^{(1)})
    := \Int_{T^{(1)}_{\times}}\bigl(
         U' \setminus (V_{\textsf{MiddleEar}}\cup V_{\textsf{Noise}})
       \bigr)
\]
describe configurations in which the middle-ear mechanics are present and the effective
transfer function of the auditory chain is concentrated in a specific frequency band. For
humans this band is typically modeled in the $2$--$7\,\mathrm{kHz}$ range, which overlaps with the
dominant formant structure of spoken language.

Finite intersections of such specialized remainders,
\[
  \bigcap_{\ell=1}^{m} R^{(1)}_{E}(p^{(1)}_{\ell}) \neq \varnothing,
\]
give rise to a further emergent
\[
  C^{(2)}_{\textsf{HumanEar}}
    := \Int_{T^{(1)}_{\times}}\!\Bigl(
         \bigcap_{\ell=1}^{m} R^{(1)}_{E}(p^{(1)}_{\ell})
       \Bigr),
\]
an open region representing the human auditory system optimized for speech perception.
As before, this cross-space emergent lies in the channel ideal generated by overlaps between
the middle-ear morphology and perceptual concepts associated with vocal communication:
it is not merely a collection of anatomical traits, but a jointly morphological--perceptual
solution to a communication-driven constraint.

Under the “common core’’ hypothesis of Proposition~\ref{prop:cross-connected},
we may take a connected open region $K$ of tri-ossicular and cochlear configurations that is
contained in all relevant remainders. This guarantees that both $C^{(1)}_{\textsf{MiddleEar}}$ and
$C^{(2)}_{\textsf{HumanEar}}$ are connected (and in fact path connected), reflecting the continuity of
the underlying evolutionary trajectory.

Comparative studies of the mammalian cochlea indicate that human hearing is
particularly sensitive in the mid-frequency range of roughly 2--7\,kHz, where
the formant structure of spoken language is most concentrated
\cite{Manley2017}. This band-specific enhancement is achieved through
specialised middle-ear transfer mechanics and graded cochlear tuning, which
distinguish humans from many other mammals with low- or high-frequency
specialisations \cite{AllinHopson1992,Kemp2005,Luo2007,Manley2017}. In our
terms, the emergent $C^{(2)}_{\textsf{HumanEar}}$  thus represents a jointly
morphological–perceptual configuration adapted to the acoustic demands of
speech communication.

\subsection{Interpretation}

In HDCS terms, the evolution of the mammalian ear exemplifies a cross-domain resolution
between mechanical and sensory feasible families. Initially, neighbourhoods associated with
jaw motion and vibration detection sit in largely separate regions of $T_{\textsf{Morph}}$ and
$T_{\textsf{Percep}}$, and their product neighbourhoods exhibit little structured overlap. As
mechanical and sensory constraints interact, nonempty overlaps form in the product topology,
and their remainders yield new stable opens corresponding first to a dedicated middle-ear
apparatus and then to frequency-tuned specializations such as the human ear.

These emergents are open, internally well based, and projectively visible, as guaranteed by the
general cross-space propositions. They persist under localized edits to the surrounding
neighbourhood systems (Corollary~\ref{cor:cross-stability}), and they belong to a single channel
component linking feeding mechanics to high-resolution acoustic perception. The resulting
auditory architecture thus appears as a connected region in the global HDCS, where
morphological change and perceptual adaptation co-evolve through iterative applications of
the CCER mechanism, ultimately producing a system finely tuned to the acoustic frequencies
of social communication.

\newcommand{\CHum}{C_{\mathrm{Hum}}}
\newcommand{\CAI}{C_{\mathrm{AI}}}
\newcommand{\CEcon}{C_{\mathrm{Econ}}}
\newcommand{\CSurplusPop}{C^{(1)}_{\mathrm{SurplusPop}}}

\newcommand{\FHum}{F_{\mathrm{Hum}}}
\newcommand{\FAI}{F_{\mathrm{AI}}}
\newcommand{\FEcon}{F_{\mathrm{Econ}}}
\newcommand{\NHum}{N_{\mathrm{Hum}}}
\newcommand{\NAI}{N_{\mathrm{AI}}}
\newcommand{\NEcon}{N_{\mathrm{Econ}}}

\newcommand{\Tprod}{T_{\times}}

\newcommand{\chanHA}{\chi_i : \CEcon \rightsquigarrow \CAI}
\newcommand{\chanAH}{\psi_j : \CAI \rightsquigarrow \CEcon}

\newcommand{\RE}[1]{R_E(#1)}  
\newcommand{\profile}[4]{(#1,#2,#3,#4)} 

\newcommand{\Overlap}[2]{O(#1,#2)} 

\newpage

\section{AI-Driven Surplus and the Emergence of a Redundant Population Concept}

The rapid scaling of artificial systems introduces a further cross-space
interaction between economic and AI representational structures. Let
$C_{\mathrm{Econ}}$ denote a human economic concept space, equipped with
feasible families representing coherent configurations of production,
surplus, labour, income, demand, and distribution; and let
$C_{\mathrm{AI}}$ denote the representational concept space of large-scale AI
systems, whose feasible regions encode task-capabilities, model behaviour, and
the semantic organisation of algorithmic outputs. Interaction between these
spaces is mediated by channels arising from automation, deployment,
task-substitution, and the social uptake of AI-generated processes:
\[
   \chi_i : C_{\mathrm{Econ}} \rightsquigarrow C_{\mathrm{AI}}, 
   \qquad
   \psi_j : C_{\mathrm{AI}} \rightsquigarrow C_{\mathrm{Econ}}.
\]

These channels generate overlap families
\[
   O(C_{\mathrm{Econ}},C_{\mathrm{AI}}) 
     \subseteq T_{\mathrm{Econ}} \otimes T_{\mathrm{AI}},
\]
whose nonempty regions correspond to configurations in which AI capabilities
and economic structures co-exist within coherent regimes of production,
allocation, or substitution. From such overlaps we extract cross-space profiles
\[
   p = (R,U,V_{\mathrm{Econ}},V_{\mathrm{AI}}),
\]
and their associated remainders
\[
   R_E(p)
     = \Int_{T_\times}\bigl(U \setminus
       (V_{\mathrm{Econ}} \cup V_{\mathrm{AI}})\bigr),
\]
which capture tensions in inherited economic interpretations. For example,
situations in which increasing AI-driven productivity coexists with declining
marginal economic value for human labour, or where growing surplus fails to
generate proportional expansion of human roles.

Assuming finite consistency (Nfin) for the product feasible family,
nonempty intersection of finitely many such remainders,
\[
   \bigcap_{\ell=1}^{m} R_E(p_\ell) \neq \varnothing,
\]
produces via the cross-space CCER principle (Proposition~22) an emergent open
region
\[
   C^{(1)}_{\mathrm{SurplusPop}}
      := \Int_{T_\times}
          \Bigl(\bigcap_{\ell=1}^{m} R_E(p_\ell)\Bigr)
      \subseteq C_{\mathrm{Econ}} \times C_{\mathrm{AI}}.
\]
This emergent represents a stabilised configuration in which AI-driven surplus
and high automation capability jointly reduce the coherence of older concepts
such as full employment, universal labour-value, or widespread economic
usefulness. The region $C^{(1)}_{\mathrm{SurplusPop}}$ therefore models the
conceptual formation of a ``redundant'' or ``surplus'' human population within
the economic concept space: a configuration in which AI systems are
structurally more productive than most forms of human labour across numerous
domains.

The HDCS framework does not assert that such an outcome is inevitable.
Rather, it provides a formal means to represent it as a possible emergent,
determined by the structure of the channels and overlaps linking economic and
AI concept spaces. In this way, HDCS captures how accelerating automation and
AI-generated surplus may produce new conceptual regions concerning labour,
usefulness, and value: regions whose stability or instability depends on the
broader topology of human, machine interaction.

\section{General Facts on Cross-Space Emergence}

Let $(C_1,\mathcal{F}_1,N_1)$ and $(C_2,\mathcal{F}_2,N_2)$ be concept spaces with external
topologies $\mathcal{T}_1,\mathcal{T}_2$ generated by $N_1,N_2$, and let
$\mathcal{T}_\times:=\mathcal{T}_1\otimes\mathcal{T}_2$ be the product topology on $C_1\times C_2$.
Profiles and remainders in the product are defined as in the single-space case, using
neighbourhoods of the form $U\times V$ with $U\in N_1(\cdot)$, $V\in N_2(\cdot)$.

\begin{proposition}[Existence, openness, and channel containment of cross-space emergents]
\label{prop:cross-existence}
Suppose there are finitely many product profiles $p_a$ with remainders
$R_E(p_a)\in\mathcal{T}_\times$ such that $\bigcap_{a=1}^m R_E(p_a)\neq\varnothing$.
Then the cross-space emergent
\[
E\;:=\;\Int_{\mathcal{T}_\times}\!\Bigl(\,\bigcap_{a=1}^m R_E(p_a)\Bigr)
\]
is a nonempty open subset of $C_1\times C_2$.
Moreover, $E$ lies in the downward closure of the channel ideal generated by finite unions
of overlap witnesses across the two spaces, i.e.
\[
E\ \in\ \downarrow\Bigl\{\ \Int_{\mathcal{T}_\times}\!\Bigl(\bigcup F\Bigr)\ :\
F\subseteq \{\, (U_1\cap U_1')\times (U_2\cap U_2') \,\}\ \text{finite}\Bigr\}.
\]
\end{proposition}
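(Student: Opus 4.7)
The proof splits naturally into two independent claims: openness and nonemptiness of $E$, and its membership in the asserted ideal.

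For the first, I would observe that each remainder $R_E(p_a)$ is already open in $\mathcal{T}_\times$, since by definition it is an interior. A finite intersection of open sets is open, so $\bigcap_{a=1}^m R_E(p_a)$ lies in $\mathcal{T}_\times$ and equals its own interior. Hence
\[
E \;=\; \Int_{\mathcal{T}_\times}\!\Bigl(\bigcap_{a=1}^m R_E(p_a)\Bigr) \;=\; \bigcap_{a=1}^m R_E(p_a),
\]
which is open by construction and nonempty by the standing hypothesis $\bigcap_{a=1}^m R_E(p_a)\neq\varnothing$. This step is essentially automatic from the definition of interior.

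For the channel containment I would unpack each product profile $p_a=(R_a,U_a,V_{1,a},V_{2,a})$. By the definition of the product neighbourhood assignment $N_\times$, each witness $V_{i,a}$ is of the form $A_{i,a}\times B_{i,a}$ with $A_{i,a}\in N_1(\cdot)$ and $B_{i,a}\in N_2(\cdot)$. Each profile thus yields a natural product overlap witness
\[
W_a \;:=\; (A_{1,a}\cap A_{2,a})\times(B_{1,a}\cap B_{2,a}),
\]
which has exactly the form $(U_1\cap U_1')\times(U_2\cap U_2')$ listed in the ideal's generating set. The plan is to assemble a finite family $F$ built from such witnesses, starting with $F=\{W_1,\ldots,W_m\}$ (or a refinement thereof), and verify $E\subseteq\Int_{\mathcal{T}_\times}(\bigcup F)$, which by definition of downward closure places $E$ in the claimed ideal.

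The hard step will be producing a suitable $F$. By Lemma~4 each remainder $R_E(p_a)$ is disjoint from $V_{1,a}\cup V_{2,a}$, so one cannot invoke a direct inclusion $E\subseteq W_a$ by arguing at the subtracted neighbourhoods. My approach is therefore to argue via the background regions: since $E\subseteq R_E(p_a)\subseteq U_a$ for every $a$, and each $U_a$ is open in $\mathcal{T}_\times$ and hence a union of product basic opens of the desired form, the finiteness of the profile family forces a corresponding finite selection from among these product overlap witnesses. The main obstacle is avoiding a compactness assumption on $E$; the cleanest route is to let the profile data itself dictate $F$, so that $|F|\le m$ is automatic, and then verify the covering condition $E\subseteq\Int_{\mathcal{T}_\times}(\bigcup F)$ directly from the profile structure on $\bigcap_a U_a$. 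Once this covering is in place, downward closure of the ideal immediately completes the argument.
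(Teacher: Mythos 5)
Your treatment of openness and nonemptiness is correct and coincides with the paper's: each $R_E(p_a)$ is an interior and hence open, so the finite intersection is already open, equals its own interior, and is nonempty by hypothesis.

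The channel-containment half, however, has a genuine gap that you flag but never close. The claim requires exhibiting a \emph{finite} family $F$ of overlap witnesses of the form $(U_1\cap U_1')\times(U_2\cap U_2')$ with $E\subseteq \Int_{\mathcal{T}_\times}\bigl(\bigcup F\bigr)$. Your candidate $W_a=(A_{1,a}\cap A_{2,a})\times(B_{1,a}\cap B_{2,a})$ is exactly $V_{1,a}\cap V_{2,a}$, and $E\subseteq R_E(p_a)$ is by construction \emph{disjoint} from $V_{1,a}\cup V_{2,a}$; so not only does $E\subseteq W_a$ fail, $E$ does not meet any of these witnesses at all. Your fallback through the backgrounds $U_a$ does not repair this: $U_a$ is an arbitrary open of $\mathcal{T}_\times$, hence a union of basic product opens, but that union need not be finite, the basic opens are single product neighbourhoods rather than overlap witnesses, and without compactness of $E$ there is no finite subselection. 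The sentence ``the finiteness of the profile family forces a corresponding finite selection'' is not an argument --- the number of profiles bounds nothing about how many basic opens are needed to cover $E$. The proposal therefore stops exactly at the step on which the conclusion depends.

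For comparison, the paper's own proof (itself only a sketch) takes a different route: it asserts that each individual remainder $R_E(p_a)$ is contained in the interior of a finite union of overlap witnesses, and then concludes by intersecting and invoking downward closure. That assertion is also left unjustified there --- it implicitly presupposes that the background region of every product profile is generated by finitely many overlap witnesses --- but it at least localises the missing hypothesis to the single remainders rather than demanding, as your plan does, a global finite cover of $E$ extracted from nowhere. Either way, you should not present the containment as something that will fall out ``directly from the profile structure''; it needs an explicit additional assumption or a reformulation of the profile data.
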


\begin{proof}[Proof sketch]
Openness and nonemptiness are immediate from taking interior of a nonempty finite intersection
of opens in $\mathcal{T}_\times$. Each $R_E(p_a)$ is contained in an interior of a finite union
of overlap witnesses, so the intersection is contained in a downward closure of the corresponding
channel ideal; taking interiors preserves membership.
\end{proof}

\begin{proposition}[Projection continuity and internal bases]
\label{prop:cross-proj}
Let $\pi_i:C_1\times C_2\to C_i$ be the coordinate projections.
Then the restrictions $\pi_i|_E:E\to \pi_i(E)$ are continuous and open onto their images.
If $\mathcal{B}_i$ is a base for $(C_i,\mathcal{T}_i)$, then
\[
\mathcal{B}_E\ :=\ \{\, (B_1\times B_2)\cap E\ :\ B_i\in\mathcal{B}_i\,\}
\]
is a base for the subspace topology on $E$.
\end{proposition}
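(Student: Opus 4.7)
The plan is to split the statement into three essentially independent pieces and reduce each to a standard fact about product topologies combined with the openness of $E$ established in Proposition~\ref{prop:cross-existence}. First I would handle continuity, then openness, then the base claim.

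For continuity, I would simply invoke the universal property of the product topology: each projection $\pi_i:(C_1\times C_2,\mathcal{T}_\times)\to (C_i,\mathcal{T}_i)$ is continuous by construction, and the restriction of a continuous map to any subspace (here $E$ with its subspace topology) remains continuous into $\pi_i(E)$ endowed with the subspace topology inherited from $\mathcal{T}_i$. No further work is needed.

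For openness of $\pi_i|_E$, I would use the crucial fact from Proposition~\ref{prop:cross-existence} that $E\in\mathcal{T}_\times$. Given an open set $O$ of the subspace topology on $E$, write $O=W\cap E$ for some $W\in\mathcal{T}_\times$; since $E$ is open, $O$ is actually open in $C_1\times C_2$. Now I would appeal to the standard fact that coordinate projections from a product space are open maps (a basic open $B_1\times B_2$ projects to the open set $B_1$, and arbitrary unions of basic opens project to unions of opens). Therefore $\pi_i(O)\in\mathcal{T}_i$, and a fortiori $\pi_i(O)$ is open in $\pi_i(E)$ with its subspace topology. The subtle point, which is worth flagging, is that this argument uses openness of $E$ in the full product, not merely in $E$ itself; without Proposition~\ref{prop:cross-existence}, the claim would fail in general.

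For the base claim, I would start from the standard fact that $\{B_1\times B_2 : B_1\in\mathcal{B}_1,\ B_2\in\mathcal{B}_2\}$ is a base for $\mathcal{T}_\times$. Given any $O'\in\mathcal{T}_{E}$ and any point $x\in O'$, write $O'=W\cap E$ with $W\in\mathcal{T}_\times$; then choose $B_1\in\mathcal{B}_1$, $B_2\in\mathcal{B}_2$ with $x\in B_1\times B_2\subseteq W$, so that $x\in (B_1\times B_2)\cap E\subseteq O'$. This shows $\mathcal{B}_E$ generates $\mathcal{T}_E$. Conversely, each $(B_1\times B_2)\cap E$ is open in $E$ as the intersection of an open set of $\mathcal{T}_\times$ with $E$, so $\mathcal{B}_E\subseteq\mathcal{T}_E$. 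Hence $\mathcal{B}_E$ is a base. The hard part is essentially bookkeeping rather than substance: the only nontrivial ingredient, the openness of $E$, has already been secured in the previous proposition, and the remainder is a direct application of product-topology generalities.
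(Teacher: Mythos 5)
Your proof is correct and follows essentially the same route as the paper's own (two-line) sketch: continuity of $\pi_i|_E$ by restriction of a continuous map, openness via the standard openness of product projections, and the product base intersected with $E$ for the subspace base. Your explicit observation that the openness of $\pi_i|_E$ onto its image genuinely relies on $E$ being open in the full product (secured by Proposition~\ref{prop:cross-existence}), rather than following from restriction alone, is a point the paper's sketch glosses over and is worth keeping.
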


\begin{proof}[Proof sketch]
$\pi_i$ is open and continuous for the product topology; restrictions to subspaces preserve both.
Intersecting the standard product base with $E$ yields a base for the subspace topology.
\end{proof}

\begin{remark}[Common cores as connectivity anchors in applications]
A connected open set $K\subseteq \bigcap_{a=1}^m R_E(p_a)$ does not, in general, force
\[
E := \Int_{T_\times}\!\Big(\bigcap_{a=1}^m R_E(p_a)\Big)
\]
to be connected.
However, in typical HDCS models the profiles are chosen so that the emergent region $E$
is generated around a shared overlap witness (a ``common core'') $K$ by chains of overlapping
witness regions inside $E$. Under such coherence assumptions, $E$ is connected (and is path-connected
if the witness regions are taken path-connected).
\end{remark}

\begin{corollary}[Stability under localized edits]
\label{cor:cross-stability}
If edits to $N_1$ or $N_2$ occur inside a product open $W\subseteq C_1\times C_2$ with
$W\cap \overline{E}=\varnothing$, then the emergent $E$ persists as a subspace of $(C_1 \times C_2, T_X)$, with its internal topology unaffected by such edits.
\end{corollary}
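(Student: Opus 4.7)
The plan is to show that under an edit localized to a product open $W \subseteq C_1 \times C_2$ disjoint from $\overline{E}$, two things survive: the openness of $E$ itself, and the subspace topology $\mathcal{T}_\times|_E$ that gives $E$ its internal structure. The first step is to fix a precise reading of ``edits inside $W$.'' Following the locality condition (H2), I would interpret this as: the edited neighbourhood assignments $N_1', N_2'$ agree with $N_1, N_2$ at every point of $C_1 \times C_2$ that lies outside $W$, and more strongly, any basic product neighbourhood $V_1 \times V_2$ that is contained in $W^c$ belongs to the edited product topology $\mathcal{T}_\times'$ if and only if it belonged to $\mathcal{T}_\times$. This is the operative form of locality needed here.

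The geometric input is then a single observation: since $W$ is open and $W \cap \overline{E} = \varnothing$, we have $\overline{E} \subseteq W^c$, and in particular every point of $E$ lies in $W^c$ together with a neighbourhood base contained in $W^c$. To prove $E$ remains open in $\mathcal{T}_\times'$, I would take $p \in E$ and, using Proposition~\ref{prop:cross-existence} together with the product-base description in Proposition~\ref{prop:cross-proj}, find a basic product open $B = B_1 \times B_2 \ni p$ with $B \subseteq E$. Because $\overline{E} \subseteq W^c$ and $E$ is open, one can shrink $B$ (using closure of neighbourhood systems under intersection, (N$\cap$)) so that $B \subseteq W^c$ as well. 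Locality then transports $B$ unchanged into $\mathcal{T}_\times'$, so $E$ is open in the edited topology.

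For the subspace topology, I would argue that the family $\{(B_1 \times B_2) \cap E : B_i \text{ basic in } \mathcal{T}_i\}$ is a base for $\mathcal{T}_\times|_E$ by Proposition~\ref{prop:cross-proj}, and that only basic opens whose intersection with $E$ is nonempty matter. Any such basic open can be replaced by its intersection with $W^c$ without changing its trace on $E \subseteq W^c$, and these $W^c$-supported basic opens are invariant under the edit by the locality interpretation above. The same argument runs in the opposite direction, showing that the two subspace topologies on $E$ coincide as collections of subsets of $E$.

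The main obstacle is entirely the first step: pinning down a formal meaning of ``edit localized to $W$'' that is strong enough to make the rest of the argument go through, yet consistent with the informal statement of (H2) in Section~6. Once one commits to the agreement of $\mathcal{T}_\times$ and $\mathcal{T}_\times'$ on all opens contained in $W^c$, the remaining verifications are routine applications of the product-base description and the fact that $E \subseteq \overline{E} \subseteq W^c$ separates $E$ from the edit region with room to spare. The conclusion that $E$ persists as a subspace with its internal topology intact then follows immediately.
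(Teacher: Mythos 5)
Your proposal is correct and follows essentially the same route as the paper, whose own proof is only a two-sentence sketch (``edits disjoint from $E$ do not affect the generating neighbourhoods of $E$ nor its interior''); you supply the missing formalization of ``localized edit'' and the verification that both openness of $E$ and its subspace topology survive. One small presentational point: a basic open intersected with the closed set $W^{c}$ need not itself be open, so in the subspace-topology step it is cleaner to either cover the trace $(B_1\times B_2)\cap E$ by smaller basic opens contained in $E\subseteq W^{c}$, or to invoke the restricted-neighbourhood characterization of the subspace topology (Proposition~\ref{prop:stage-subspace}) together with the fact that the edited assignment agrees with the original at every point of $E$.
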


\begin{proof}[Proof sketch]
Edits disjoint from $E$ do not affect the generating neighbourhoods of $E$ nor its
interior; hence the internal (subspace) topology on $E$ is unchanged.
\end{proof}

\begin{remark}
(1) In applications (e.g.\ \emph{Zero} and \emph{Middle Ear}), the ``common core'' $K$
can be taken as an overlap witness region shared by all profiles (a standard modelling choice),
guaranteeing connectedness of the emergent.
(2) Proposition~\ref{prop:cross-existence} and \ref{prop:cross-proj} ensure that cross-space
emergents are \emph{open, projectively visible, and internally well based}; 
Remark 15 adds a mild, verifiable criterion for connectedness.
\end{remark}

\newpage

\appendix
\section{Logical Dependency Diagram for HDCS}

This appendix summarizes the logical dependencies among the main definitions and results
of the paper. An arrow ``$\to$'' indicates logical dependence. All dependencies flow strictly
forward; no circular dependencies occur.

\subsection{Foundational structure}

\[
\begin{array}{c}
\text{Def.\ 2.1 Feasible Families } (F) \\
\text{Def.\ 2.2 Neighbourhood Assignments } (N) \\
\downarrow \\
\text{Def.\ 2.5 Open Sets via Neighbourhoods} \\
\downarrow \\
\text{Prop.\ 2 External Topology } \mathcal{T}(N)
\end{array}
\]

These constitute the primitive layer of the framework.

\subsection{Structural interaction}

\[
\begin{array}{c}
\mathcal{T}(N) + N \\
\downarrow \\
\text{Def.\ 2.3 Adjacency } (\sim) \\
\downarrow \\
\text{Def.\ 3.1 Overlap Families } \mathcal{O}(C_i,C_j) \\
\downarrow \\
\text{Def.\ 3.1 Channel Ideals } I_{ij} \\
\downarrow \\
\text{Prop.\ 6 Properties of Channel Ideals}
\end{array}
\]

Adjacency is non-transitive; channel ideals encode mediated interaction.

\subsection{Mediated interaction}

\[
\begin{array}{c}
I_{ij} \\
\downarrow \\
\text{Defs.\ 4.1--4.2 $k$-step Overlaps } \mathcal{O}^{(k)} \\
\downarrow \\
\text{Defs.\ 4.1--4.2 Mediated Ideals } I^{(k)}, I^{(\le K)}, I^{(*)} \\
\downarrow \\
\text{Prop.\ 9 Monotonicity of Channels}
\end{array}
\]

\subsection{Profiles and remainders}

\[
\begin{array}{c}
\mathcal{O}(C_i,C_j) + \mathcal{T} \\
\downarrow \\
\text{Def.\ 3.3 Interaction Profiles } p \\
\downarrow \\
\text{Def.\ 3.3 Remainders } \mathrm{R_E}(p) \\
\downarrow \\
\text{Def.\ 3.4 Exterior Ideals } E_{ij}
\end{array}
\]

\subsection{Emergence (CCER)}

\[
\begin{array}{c}
\text{Profiles + Remainders} \\
(+\ \text{optional } (N_{\mathrm{fin}})) \\
\downarrow \\
\text{Def.\ 5.1 CCER Emergent Regions} \\
\downarrow \\
\text{Lemma\ 5 Basic Properties of Emergence} \\
\downarrow \\
\text{Props.\ 10, 13 Internal Topology and Maximality}
\end{array}
\]

\subsection{Stage dynamics}

\[
\begin{array}{c}
\text{Single-stage CCER} \\
\downarrow \\
\text{Def.\ 6.1 Stagewise Restriction} \\
\downarrow \\
\text{Stagewise CCER (Section 6)} \\
\downarrow \\
\text{Def.\ 6.4 HDCS (Stages + Carry Maps)} \\
\downarrow \\
\text{Def.\ 6.6 Evolution Map } \Phi
\end{array}
\]

\subsection{Global structure}

\[
\begin{array}{c}
\text{Stages + Carry Maps} \\
\downarrow \\
\text{Disjoint Union of Stages} \\
\downarrow \\
\text{Quotient by Carry Identifications} \\
\downarrow \\
\text{Prop.\ 14 Stage Inclusions} \\
\downarrow \\
\text{Props.\ 16--17 Continuity and Channel Monotonicity}
\end{array}
\]

\subsection{Cross-space generalization}

\[
\begin{array}{c}
(C_1,F_1,N_1), (C_2,F_2,N_2) \\
\downarrow \\
\text{Product Feasible Families and Neighbourhoods} \\
\downarrow \\
\text{Product Topology } \mathcal{T}_{\times} \\
\downarrow \\
\text{Cross-space Profiles and Remainders} \\
\downarrow \\
\text{Prop.\ 22 Cross-space CCER Emergence} \\
\downarrow \\
\text{Prop.\ 23 Projection Continuity and Bases}
\end{array}
\]

\subsection{Summary}

\begin{tikzpicture}[
  box/.style={draw, rectangle, rounded corners, align=center, minimum width=3.8cm},
  arrow/.style={->, thick}
]

\node[box] (F) {Feasible Families\\Neighbourhoods};
\node[box, below=of F] (T) {External Topology};
\node[box, below=of T] (A) {Adjacency \& Overlaps};
\node[box, below=of A] (C) {Channel Ideals};
\node[box, below=of C] (P) {Profiles \& Remainders};
\node[box, below=of P] (E) {CCER Emergence};
\node[box, below=of E] (S) {Stages \& Carry Maps};
\node[box, below=of S] (G) {Global HDCS Space};
\node[box, below=of G] (X) {Cross-space Emergence};

\draw[arrow] (F) -- (T);
\draw[arrow] (T) -- (A);
\draw[arrow] (A) -- (C);
\draw[arrow] (C) -- (P);
\draw[arrow] (P) -- (E);
\draw[arrow] (E) -- (S);
\draw[arrow] (S) -- (G);
\draw[arrow] (G) -- (X);

\end{tikzpicture}

\newpage
\subsection*{Assumptions and Optional Axioms}

The HDCS framework is deliberately modular. The following table records which
assumptions are globally imposed and which are invoked only locally.

\begin{itemize}
  \item \textbf{Always assumed:}
  \begin{itemize}
    \item (F1), (F$\cap$), (F$\emptyset$) for feasible families.
    \item (N0), (N$\downarrow$), (N$\cap$) for neighbourhood assignments.
  \end{itemize}

  \item \textbf{Optional (local) assumptions:}
  \begin{itemize}
    \item (Nfin): finite coherence of neighbourhoods.\\
    Used only in CCER constructions (Defs.~5.1, 5.2 and staged analogues).
    \item (N$\Rightarrow$): cross-point axiom.\\
    Used only to characterize when neighbourhoods form a base (Prop.~5).
  \end{itemize}
\end{itemize}

All results outside CCER and base characterizations remain valid without these
optional assumptions.

\subsection*{Terminology and Ontological Status}

\begin{itemize}
  \item \emph{Concept (C):} An atomic element of a stage concept set.
  \item \emph{Region:} A subset of C; may or may not correspond to a concept.
  \item \emph{Emergent region:} An open set produced by CCER; not itself a concept
  until reified at the next stage.
  \item \emph{Reified emergent:} A concept at stage i+1 whose neighbourhoods are
  induced from the internal topology of a stage-i emergent region.
\end{itemize}

\newpage

$\;$\\\\\\\\\\\\\\\\\\\\\\\\\\\\\\\\\\\\\\\\\\

\begin{verse}
Karanfil sokağında bir camlı bahçe\\
Camlı bahçe içre bir çini saksı\\
Bir dal süzülür mavide\\
Al - al bir yangın şarkısı\\
Bakmayın saksıda boy verdiğine\\
Kökü Altındağ'da, İncesu'dadır.
\end{verse}

\begin{flushleft}
A. Arif
\end{flushleft}

\end{document}